\DeclareMathOperator*{\E}{\mathbb{E}}
\let\Pr\relax
\DeclareMathOperator*{\Pr}{\mathbb{P}}
\DeclareMathOperator{\supp}{supp}
\DeclareMathOperator{\rank}{rank}
\DeclareMathOperator{\orth}{orth}
\newcommand{\R}{\mathbb{R}}
\DeclareMathOperator*{\argmin}{arg\,min}
\newcommand{\poly}{\mathop\mathrm{poly}}
\newcommand{\wh}{\widehat}
\DeclareMathOperator{\nnz}{nnz}
\DeclareMathOperator{\diag}{diag}
\newcommand{\eqdef}{\mathbin{\stackrel{\rm def}{=}}}
\newcommand{\norm}[1]{\|#1\|}
\newcommand{\bnorm}[1]{\left\|#1\right\|}
\newcommand{\todo}[1]{}
\newcommand{\Cam}[1]{}
\newcommand{\spara}[1]{\smallskip\noindent{\bf #1}\smallskip}
\newtheorem*{rep@theorem}{\rep@title}
\newcommand{\newreptheorem}[2]{%
\newenvironment{rep#1}[1]{%
 \def\rep@title{#2 \ref{##1}}%
 \begin{rep@theorem}}%
 {\end{rep@theorem}}}
\newtheorem{theorem}{Theorem}
\newtheorem{conjecture}[theorem]{Conjecture}
\newtheorem{corollary}[theorem]{Corollary}
\newtheorem{lemma}[theorem]{Lemma}
\newtheorem*{lemma*}{Lemma}
\newtheorem{claim}[theorem]{Claim}
\newtheorem{definition}[theorem]{Definition}
\newtheorem{problem}{Problem}
\title{Simple Heuristics Yield Provable Algorithms for \protect\\ Masked Low-Rank Approximation}
\date{}
\author{
Cameron Musco\\ UMass Amherst\\ \texttt{cmusco@cs.umass.edu}
\and
Christopher Musco\\ New York University \\ \texttt{cmusco@nyu.edu}
\and 
David P. Woodruff\\ Carnegie Mellon University\\ \texttt{dwoodruf@cs.cmu.edu}
}
\begin{document}
\maketitle

\begin{abstract}
In the \emph{masked low-rank approximation problem}, one is given data matrix $A \in \R^{n \times n}$ and binary mask matrix $W \in \{0,1\}^{n \times n}$. The goal is to find a rank-$k$ matrix $L$ for which: $$\textrm{cost}(L) \eqdef \sum_{i=1}^{n} \sum_{j = 1}^{n} W_{i,j} \cdot (A_{i,j} - L_{i,j} )^2 \leq OPT + \epsilon \|A\|_F^2 ,$$ where $OPT = \min_{\textrm{rank-}k\ \hat{L}} \textrm{cost}(\hat L)$ and $\epsilon$ is a given error parameter. Depending on the choice of $W$, the above problem captures  factor analysis, low-rank plus diagonal decomposition, robust PCA, low-rank matrix completion, low-rank plus block matrix approximation, low-rank recovery from monotone missing data, and a number of other important problems. Many of these problems are NP-hard, and while algorithms with provable guarantees are known in some cases, they either 1) run in time $n^{\Omega(k^2/\epsilon)}$ or 2) make strong assumptions, for example, that $A$ is incoherent or that the entries in $W$ are chosen independently and uniformly at random.

In this work, we show that a common polynomial time heuristic, which simply sets $A$ to $0$ where $W$ is $0$, and then finds a standard low-rank approximation, yields bicriteria approximation guarantees for this problem. In particular, for rank $k' > k$ depending on the \emph{public coin partition number} of $W$, the heuristic outputs rank-$k'$ $L$  with cost$(L) \leq OPT + \epsilon \|A\|_F^2$. 
 This partition number is in turn bounded by  the {randomized communication complexity} of $W$, when interpreted as a two-player communication matrix. For many important cases, including all those listed above, this yields bicriteria approximation guarantees with rank $k' = k \cdot \poly(\log n/\epsilon)$.

Beyond this result, we show that different notions of communication complexity yield bicriteria algorithms for natural variants of masked low-rank approximation. For example, multi-player number-in-hand communication complexity connects to masked tensor decomposition and non-deterministic communication complexity to masked Boolean low-rank factorization.

\end{abstract}

\thispagestyle{empty}
\clearpage
\setcounter{page}{1}


\section{Introduction}
The goal of  low-rank approximation is to approximate an $n \times n$ matrix $A$ with a rank-$k$ matrix $L$. $L$ can be written as the product $L = U \cdot V$ 
of a ``tall-and-thin'' matrix $U$ and a ``short-and-wide'' matrix $V$ with $k$ columns and rows respectively. For $k \ll n$ this approximation can lead to computational speedups: one can store the factors $U$
and $V$ with less memory than storing $A$ itself, and can compute the product 
$U \cdot V \cdot x$ with a vector $x$ faster than computing $A \cdot x$. Additionally, low-rank approximation is useful for denoising and can reveal low-dimensional structure in high-dimensional data (it is e.g., the basis behind principal component analysis). It thus serves as a preprocessing step in many 
applications, including clustering, data mining, and recommendation systems. The optimal low-rank approximation to $A$ with distance measured in the Frobenius, spectral, or any unitarily invariant norm can be computed in polynomial time using a singular value decomposition (SVD). There are also extremely efficient approximation 
algorithms for finding a near optimal $L$ under different measures, including the Frobenius norm, spectral norm, and various entrywise norms.  
For a comprehensive treatment, 
we refer the reader to the surveys \cite{kv09,m11,w14}. 

Despite its wide applicability, in many situations standard low-rank approximation does not suffice. For example, it is common that certain entries in $A$ either \emph{don't obey underlying low-rank structure} or are  \emph{missing}. For example, $A$ may be close to low-rank but with a small number of corrupted entries, or may be the sum of a low-rank matrix plus a high-rank, but still efficiently representable, diagonal or block diagonal matrix. In both cases, one must compute a low-rank approximation of $A$ ignoring the outlying entries. One can formalize this problem, considering a binary matrix $W$ with $W_{i,j} = 0$ for each outlying entry $(i,j)$ of $A$ and $W_{i,j} = 1$ otherwise.
\begin{problem}[Masked Low-Rank Approximation]\label{def:main}
Given $A \in \R^{n \times n}$, binary $W \in \{0,1\}^{n \times n}$, and rank parameter $k$, find rank-$k$ $L$ minimizing:
$$\|W \circ (A-L)\|_F^2 = \sum_{i,j \in [n]} W_{i,j} \cdot (A_{i,j} - L_{i,j})^2,$$
where for two matrices $M$ and $N$ of the same size, $M \circ N$ denotes the entrywise (Hadamard product): with $(M \circ N)_{i,j} = M_{i,j} \cdot N_{i,j}$ and for integer $n$, $[n]$ denotes $\{1,\ldots, n\}$.
\end{problem}
As stated, Problem \ref{def:main} minimizes the squared Frobenius norm of $W \circ (A-L)$. However any matrix norm can be used. In any case, is unclear how to extend standard low-rank approximation algorithms to solving Problem \ref{def:main}, since they optimize over the full matrix $A$, without the ability to take into account $W$ encoding entries that should be ignored. We note that Problem \ref{def:main} is equivalent to minimizing $\norm{A-(L+S)}_F^2$ where $L$ is rank-$k$ and $S$ is any matrix with support restricted to the $0$ entries of $W$. If these zeros are on the diagonal, then $S$ is diagonal. If they are sparse, then $S$ is sparse, etc. This is how Problem \ref{def:main} is traditionally stated in many applications. 

\subsection{Existing Work}
 A common approach to solving Problem \ref{def:main} is to apply alternating minimization or the EM (Expectation-Maximization)  algorithm. In fact, factor analysis, a slight variant of Problem \ref{def:main} when $W$ is $0$ on its diagonal and $1$ off the diagonal, was one of the original motivations of the EM algorithm \cite{Dempster77,Rubin1982}. 
 Much recent work studies when alternating minimization for Problem \ref{def:main} converges in polynomial time under the assumptions that (1) there is a solution
$L = U \cdot V \approx A$ which is {\it incoherent}, meaning that the squared row norms of $U$
and column norms of $V$ are small and (2) the entries of $W$ are selected at random or have pseudorandom properties \cite{altMinNew,kyrillidis2012matrix,netrapalli}. Under similar assumptions it can be shown that Problem \ref{def:main} and the related problem of \emph{robust PCA} can be solved via convex relaxation  in polynomial time \cite{candes2009exact,wrightNips,candes2011robust}. In many cases, these algorithms perform well in practice even when the above assumptions do not hold. Additionally, they can be proven to run in polynomial time in some common settings when the entries of $W$ are not random -- e.g., when $W$ is zero only on its diagonal or at a few arbitrary locations. That is, when we want to approximate $A$ as a low-rank plus diagonal component, or a low-rank matrix with arbitrary sparse corruptions respectively. However, these results still require assuming the existence of $U \cdot V$ that is incoherent and further that is \emph{exact} -- with $U \cdot V$

A natural question is if for common mask patterns, one can obtain provable algorithms without incoherence or other strong assumptions. This approach
was taken in \cite{razenshteyn2016weighted} in the context of {\it weighted low-rank approximation}, where $W$ is a nonnegative matrix and the objective is still to minimize $\|W \circ (A-L)\|_F^2.$
When $W$ is binary, 
this reduces to Problem \ref{def:main}. 
In \cite{razenshteyn2016weighted} it was shown that if $W$ has at most $r$ distinct columns, then it
is possible to obtain a relative error guarantee in $2^{\poly(rk/\epsilon)} \cdot \poly(n)$ time. More generally, if
the rank of $W$ over the reals is at most $r$, then $n^{\poly(rk/\epsilon)}$ time is achievable. Note that such
algorithms are only polynomial time if $k$, $r$, and $1/\epsilon$ are very small. In many common use cases, such as when $W$ is all $0s$ on the diagonal and $1$ off-diagonal (corresponding to  low-rank plus diagonal decomposition), or when
$W$ is all $0$s above the diagonal and $1$s on or beneath the diagonal, $r$ is large: in fact $\rank(W) = r = n$ in these cases.

When $A$ is low-rank with sparse corruptions, i.e., when $W$ has at most $t$ zero entries per row and column, the algorithms of \cite{razenshteyn2016weighted} can be applied if there is an exact solution (with $A = L$ on all non-corrupted entries). \cite{razenshteyn2016weighted} referred to this problem as \emph{adversarial
matrix completion} and gave an $n^{O(tk^2)}$ time algorithm. This is only polynomial time for constant values of $t$ and $k$, and even for constant $t$ and $k$ is very large. 
Moreover, their method cannot be used in the approximate case 
since it requires creating a low-rank weight matrix $W'$ whose support matches that of $W$. Since $W$ may be far from low-rank, the non-zero entries
of $W$ and $W'$ necessarily have very different values. This introduces significant error, unless $A =L$ exactly on the support of $W$.

\subsection{Our Contributions}

With the goal of obtaining fast masked low-rank approximation algorithms, we consider \emph{bicriteria approximation} with additive
error. That is, we allow  the rank $k'$ of the output $L$ to be slightly larger
than $k$, but one still compares to the best rank-$k$ approximation.  
Formally, given $A \in \R^{n \times n}$, $W \in \{0,1\}^{n \times n}$, and an error parameter $\epsilon$, we
would like to find a rank-$k'$ matrix $L$ for which: 
\begin{eqnarray}\label{eqn:main}
 \|W \circ (A-L)\|_F^2 \leq OPT +  \epsilon \|A\|_F^2,
\end{eqnarray}
where $OPT = \min_{\textrm{rank-}k\ \hat{L}} \|W \circ (A-\hat{L} )\|_F^2$ is the optimal value of Problem \ref{def:main}.

 Assuming a variant of
the Exponential Time Hypothesis, \cite{razenshteyn2016weighted} shows a lower bound of $2^{\Omega(r)}$ time for finding rank-$k$ $L$ achieving \eqref{eqn:main} with constant $\epsilon$ when $W$ is rank-$r$. 
Thus the relaxation to bicriteria approximation seems necessary.
In many applications it is not
essential for the output rank $k'$ to be exactly $k$ -- as long as $k'$ is small, one still obtains 
significant compression. Indeed, bicriteria algorithms for low-rank matrix approximation are widely studied \cite{dv07,FFSS07,cw15focs,chierichetti2017algorithms,song2019relative,bhaskara2019residual}. 
The starting point of our work is the following question:
\begin{center}
{\it For which mask patterns $W \in \{0,1\}^{n \times n}$ can one obtain efficient 
bicriteria low-rank approximation algorithms
with $k' \leq k \cdot \poly((\log n)/\epsilon)$ satisfying \eqref{eqn:main}?}
\end{center}

\paragraph{Main Results:}

We show that the answer to this question is related to the {\it randomized communication complexity} of the mask $W$.\footnote{Our bounds actually hold for the public coin partition number of $W$, which is upper bounded by the randomized communication complexity  \cite{jain2014quadratically}. See Section \ref{sec:appRank} for a more detailed discussion.} If the rows and columns of $W \in \{0,1\}^{n \times n}$ are indexed by strings $x \in \{0,1\}^{\log n}$ and $y \in \{0,1\}^{\log n}$, respectively, we can think of $W$ as a two-player communication matrix for a Boolean function $f$, where $f(x,y) = W_{x,y}$. Here Alice has $x$, Bob has $y$, and the two parties want to exchange messages with as few bits as possible to compute $f(x,y)$ with probability at least $1-\delta$. The number of bits required is the {randomized communication complexity} $R_{\delta}(f)$. If we further require that the protocol never errs when $f(x,y) = 1$, but for any fixed pair $(x,y)$ with $f(x,y) = 0$, the protocol errs with probability at most $\delta$, then the number of bits required is the {$1$-sided randomized communication complexity} $R^{1-sided}_{\delta}(f)$.
We show:
\begin{theorem}\label{thm:1}
Letting $f$ be the function computed by $W \in \{0,1\}^{n \times n}$ and $\neg f$ be its negation, there is a bicriteria low-rank approximation $L$ with rank $k' = k \cdot 2^{R^{1-sided}_{\epsilon}(\neg f)}$ achieving:
\begin{align*}
 \|W \circ (A-L)\|_F^2 \leq OPT +  2\epsilon \|A \circ W\|_F^2,
\end{align*}
where $OPT = \min_{\textrm{rank-}k\ \hat{L}} \|W \circ (A-\hat{L})\|_F^2$. $L$ is computable in $O(\nnz(A)) + n \cdot \poly(k'/\epsilon)$ time. 
\end{theorem}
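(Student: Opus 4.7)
The plan is to analyze exactly the heuristic proposed by the paper: form $B \eqdef A \circ W$ and let $L$ be a (near-)optimal rank-$k'$ Frobenius approximation of $B$, computed by a fast approximate SVD in $O(\nnz(A)) + n \cdot \poly(k'/\epsilon)$ time. The starting point is the identity
$$ \|W \circ (A - L)\|_F^2 \;=\; \|B - L\|_F^2 \;-\; \sum_{(i,j) : W_{ij} = 0} L_{ij}^2 \;\leq\; \|B - L\|_F^2, $$
so it suffices to exhibit a single rank-$k'$ \emph{witness} $\tilde L$ satisfying $\|B - \tilde L\|_F^2 \leq OPT + O(\epsilon)\|A \circ W\|_F^2$; the optimality of $L$ for rank-$k'$ Frobenius approximation of $B$ will then finish the job.

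I would build $\tilde L$ from the optimal rank-$k$ solution $L^*$ and a $1$-sided public-coin protocol $\Pi$ for $\neg f$ using $c = R^{1-sided}_{\epsilon}(\neg f)$ bits. After fixing the public coins, $\Pi$ partitions $[n] \times [n]$ into at most $2^c$ combinatorial rectangles, each labeled $0$ or $1$. The crucial structural consequence of 1-sidedness is that every rectangle labeled $0$ for $\neg f$ (i.e., the protocol predicts $W = 1$) is entirely contained in $\{W = 1\}$: any entry with $W = 0$ sitting in such a rectangle would force the protocol to err on a $\neg f = 1$ input, which is forbidden. Letting $X_1 \times Y_1, \ldots, X_m \times Y_m$ (with $m \leq 2^c$) enumerate the $0$-labeled rectangles and $S$ denote their union, the $\epsilon$ error on $W = 1$ inputs yields $\Pr[(i,j) \in S] \geq 1 - \epsilon$ for every $(i,j)$ with $W_{ij} = 1$, while $S \subseteq \{W = 1\}$ deterministically.

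Define
$$ \tilde L \;\eqdef\; \sum_{\ell=1}^{m} D_{X_\ell}\, L^* \, D_{Y_\ell}, $$
where $D_X$ is the $n \times n$ diagonal $\{0,1\}$-matrix supported on the coordinates in $X$. Each summand has rank at most $k$, so $\rank(\tilde L) \leq mk \leq k \cdot 2^c = k'$; by disjointness of the rectangles, $\tilde L$ agrees with $L^*$ on $S$ and vanishes off $S$, in particular on $\{W = 0\}$. Splitting $\|B - \tilde L\|_F^2$ by support gives
$$ \|B - \tilde L\|_F^2 \;=\; \sum_{(i,j) \in S} (A_{ij} - L^*_{ij})^2 \;+\; \sum_{\substack{W_{ij} = 1 \\ (i,j) \notin S}} A_{ij}^2 \;\leq\; OPT \;+\; \sum_{\substack{W_{ij} = 1 \\ (i,j) \notin S}} A_{ij}^2. $$
Taking expectation over the public coins and using $\Pr[(i,j) \notin S] \leq \epsilon$ whenever $W_{ij} = 1$ bounds the second term by $\epsilon \|A \circ W\|_F^2$. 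Since each fixed-coin $\tilde L$ has rank $\leq k'$, we have $\|B - L\|_F^2 \leq \|B - \tilde L\|_F^2$ pointwise in the coins, hence $\|B - L\|_F^2 \leq \E[\|B - \tilde L\|_F^2] \leq OPT + \epsilon \|A \circ W\|_F^2$; chaining with the opening inequality completes the bound. The extra factor of $2$ in the theorem absorbs the $(1+\epsilon)$ slack of the fast approximate SVD together with the trivial bound $OPT \leq \|A \circ W\|_F^2$ (take $\hat L = 0$).

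The main subtlety is isolating which feature of the communication model is doing the work: 1-sidedness is essential, not cosmetic. Under two-sided error, a $0$-labeled rectangle might contain entries with $W = 0$, in which case $\tilde L$ would fail to vanish there and would pollute $\|B - \tilde L\|_F^2$ by an uncontrolled $\sum_{W_{ij} = 0} (L^*_{ij})^2$ term (since $L^*$ is wholly unconstrained on $\{W = 0\}$). Running the protocol on $\neg f$ rather than $f$ is equally important: the ``sure'' output must correspond to $W = 0$, so that the rectangles anchoring $\tilde L$ land inside $\{W = 1\}$ where they agree with $L^*$ at no extra cost.
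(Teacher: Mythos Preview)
Your proof is correct and follows essentially the same approach as the paper's (Theorem~\ref{thm:rand} specialized to $\epsilon_1=\epsilon$, $\epsilon_2=0$): exhibit a rank-$k'$ witness supported on the $1$-labeled rectangles of a one-sided protocol for $\neg f$, compare it to the near-optimal rank-$k'$ approximation of $A\circ W$, and exploit the fact that one-sidedness forces those rectangles to lie entirely inside $\{W=1\}$. The only cosmetic differences are that you build the witness by restricting $L^\star$ to each rectangle rather than taking a fresh optimal rank-$k$ approximation of $A_R\circ W_R$ there (a harmless simplification, since the restriction already has rank $\le k$ and matches $L^\star$'s error on $R$), and you average over the public coins at the end rather than first fixing a single good protocol via the expectation argument.
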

As we will see, for many common $W$, $R_\epsilon^{1-sided}(\neg f)$ is very small -- with $2^{R_\epsilon^{1-sided}(\neg f)}$ at most $\poly(\log n/\epsilon)$. Note that our additive error is in terms of $\|A \circ W\|_F^2$ which is only smaller than $\|A \|_F^2$, and may be much smaller, if e.g., the zeros in $W$ correspond to corruptions in $A$.
We also show a bound in terms of the communication complexity with 2-sided error.
\begin{theorem}\label{thm:2}
Letting $f$ be the function computed by $W \in \{0,1\}^{n \times n}$, there is a bicriteria low-rank approximation $L$ with rank $k' = k \cdot 2^{R_{\epsilon}(f)}$ achieving:
\begin{align*}
 \|W \circ (A-L)\|_F^2 \leq OPT +  2\epsilon \|A \circ W\|_F^2 + \epsilon \norm{L_{opt} \circ (1-W)}_F^2,
\end{align*}
where $OPT = \min_{\textrm{rank-}k\ \hat{L}} \|W \circ (A-\hat{L} )\|_F^2$ and $L_{opt}$ is any rank-$k$ matrix achieving $OPT$. $L$ is computable in $O(\nnz(A)) + n \cdot \poly(k'/\epsilon)$ time. 
\end{theorem}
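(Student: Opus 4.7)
My plan is to reduce Theorem~\ref{thm:2} to a collection of ordinary rank-$k$ Frobenius SVDs, one per rectangle in the partition induced by a good randomized protocol for $f$. Let $c = R_\epsilon(f)$ and fix a 2-sided public-coin protocol $\Pi$ using $c$ bits with error at most $\epsilon$ on every input. For each fixing $r$ of the public coins, $\Pi$ collapses into a deterministic protocol that partitions $[n]\times[n]$ into at most $2^c$ combinatorial rectangles $R_1^r,\ldots,R_m^r$, each labeled $b_i^r\in\{0,1\}$ by the protocol's output; in particular $\Pr_r[b_{i(x,y)}^r \neq W_{x,y}]\leq \epsilon$ for every $(x,y)$. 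Set $\bar A \eqdef A\circ W$. For each rectangle $R_i^r$ with $b_i^r = 1$, let $L_i^r$ be the best rank-$k$ approximation of the submatrix $\bar A\circ \mathbf{1}_{R_i^r}$ (via truncated SVD); note $L_i^r$ is supported inside $R_i^r$. Define $L \eqdef \sum_{i:b_i^r=1} L_i^r$. Since the rectangles are disjoint, $\rank(L)\leq k\cdot 2^c = k'$.

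Next, set $S_1^r = \bigcup_{b_i^r=1} R_i^r$ and split $\{W=1\}$ into the hit set $H^r = S_1^r\cap\{W=1\}$ and the missed set $M^r = \{W=1\}\setminus S_1^r$. Because $L$ vanishes outside $S_1^r$,
\[
\|W\circ(A-L)\|_F^2 = \sum_{(x,y)\in H^r}(A_{x,y}-L_{x,y})^2 + \sum_{(x,y)\in M^r} A_{x,y}^2.
\]
Inflating the first sum to include the nonnegative $\{W=0\}$ contributions inside $S_1^r$ and using that $L_{opt}\circ \mathbf{1}_{R_i^r}$ has rank at most $k$ (zeroing rows and columns of $L_{opt}$ cannot increase rank), the optimality of each truncated SVD $L_i^r$ against this competitor gives
\[
\sum_{H^r}(A-L)^2 \leq \sum_{(x,y)\in S_1^r}(\bar A_{x,y} - L_{opt,x,y})^2 = \sum_{H^r}(A-L_{opt})^2 + \sum_{(x,y)\in F^r} L_{opt,x,y}^2,
\]
where $F^r = S_1^r\cap\{W=0\}$ is the false-positive set and I used $\bar A_{x,y}=0$ on $\{W=0\}$. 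Since $H^r\subseteq\{W=1\}$, the first summand is at most $OPT$.

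Finally, the per-input error guarantee of $\Pi$ yields $\E_r\sum_{F^r}L_{opt,x,y}^2 \leq \epsilon\|L_{opt}\circ(1-W)\|_F^2$ and $\E_r\sum_{M^r}A_{x,y}^2 \leq \epsilon\|A\circ W\|_F^2$. Running $\Pi$ with a slightly smaller error parameter and invoking Markov's inequality then selects a single $r$ attaining the bound of Theorem~\ref{thm:2} with constant success probability. I expect the main obstacle to be not the error analysis but realizing the $O(\nnz(A)) + n\cdot \poly(k'/\epsilon)$ runtime: one must enumerate the $\leq 2^c$ rectangles by simulating $\Pi$ through its protocol tree, and then compute approximate rank-$k$ SVDs on each of them in total input-sparsity time. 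Disjointness of the rectangles implies $\sum_i\nnz(\bar A\circ \mathbf{1}_{R_i^r}) = \nnz(\bar A)$, which makes the runtime plausible, but replacing each exact SVD with a sketching- or block-Krylov-based approximation without damaging the additive error is the delicate bookkeeping step.
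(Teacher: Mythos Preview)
Your per-rectangle error analysis is essentially the paper's construction of the comparison matrix $\bar L$ in its Theorem~\ref{thm:rand}: restrict $L_{opt}$ to each $1$-labeled rectangle and beat it there with a local rank-$k$ SVD, then control the two error sets $M^r$ and $F^r$ in expectation over the public coins. That part is correct.

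The substantive difference is \emph{what the algorithm outputs}. You output the per-rectangle sum $\sum_i L_i^r$ itself; the paper instead outputs any near-optimal rank-$k'$ approximation $L$ of $A\circ W$ (one global approximate SVD), and uses your $\bar L$ only as a witness in a comparison argument: since $\bar L$ is zero outside $\operatorname{supp}(W_\Pi)$, the global optimizer $L$ cannot lose to $\bar L$ on the support of $W$. This buys the paper three things you have to work for. First, the algorithm is \emph{protocol-oblivious}: it needs only the value $R_\epsilon(f)$ to set $k'$, not an explicit protocol to simulate, which is the ``simple heuristic'' in the title. Second, because $\bar L$ is used only existentially (pick the single $r$ achieving the expectation bound), the stated constants follow without a Markov blow-up or reducing the protocol's error parameter; your route either loses a factor of $2$ on the $\|L_{opt}\circ(1-W)\|_F^2$ term or inflates $k'$ to $k\cdot 2^{R_{\epsilon'}(f)}$ for some $\epsilon'<\epsilon$. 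Third, the runtime is immediate from one input-sparsity SVD; you correctly note that your rectangle-by-rectangle SVDs require additional bookkeeping (the $\sum_i(|S_i|+|T_i|)$ term can be as large as $n\cdot 2^c$, though this still fits in $n\cdot\poly(k'/\epsilon)$).

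In short: your argument is sound and would prove a version of the theorem with slightly worse constants and a protocol-aware algorithm. The paper's extra twist is to use your $\bar L$ only to certify that the trivial heuristic ``SVD of $A\circ W$'' already works.
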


Further, the algorithm achieving Theorems \ref{thm:1} and \ref{thm:2} is extremely simple: just zero out the entries in $A$ corresponding to entries in $W$ that are $0$ (i.e., compute  $A \circ W$), and then output a standard rank-$k'$ approximation of the resulting matrix. This is already a widely-used heuristic for solving Problem \ref{def:main} \cite{Azar:2001,zhao2016low}, and we obtain the first provable guarantees. 
An optimal low-rank approximation of $A \circ W$ can be computed in polynomial time via an SVD. An approximation achieving relative error $(1+\epsilon)$ can be computed with high probability in $O(\nnz(A)) + n \cdot \poly(k/\epsilon)$ time, giving the runtime bounds of Theorems \ref{thm:1} and \ref{thm:2} \cite{clarkson2015input}.

\subsubsection{Applications}

Theorems \ref{thm:1} and \ref{thm:2} provide the first bicriteria approximation algorithms for Problem \ref{def:main} with small $k'$ for a number of important special cases of the mask matrix $W$:
\begin{enumerate}
\item If $W$ has at most $t$ zero entries in each row, this is Low-Rank Plus Sparse (LRPS) matrix approximation, which captures the challenge of finding a low-rank approximation when a few entries are not known, or do not obey underlying low-rank structure. It has been studied in the context of adversarial matrix completion \cite{schramm2015low}, robust matrix decomposition \cite{hsu2011robust,candes2011robust}, optics, system identification \cite{beghelli}, and more \cite{chandrasekaran2011rank}.
\item If $W$ is zero exactly on the diagonal entries, this is Low-Rank Plus Diagonal (LRPD) matrix approximation. This problem arises since in practice, many matrices that are not close to low-rank are close to diagonal, or contain a mixture of diagonal and low-rank components \cite{chandrasekaran2011rank}. This observation has been used e.g., to construct compact representations of kernel matrices \cite{snelson2005sparse,Wang:2014}, weight matrices in neural networks \cite{barone2016low,zhao2016low}, and covariance matrices \cite{ten1991numerical,stein}. LRPD approximation also arises in applications related to source separation \cite{liutkus2017diagonal} and variational inference \cite{miller2017variational} and is closely related to  factor analysis \cite{spearman,saunderson2012diagonal}, which adds the additional constraints that $L$ and $A-L$ are PSD. 
\item If $W$ is the negation of a block-diagonal matrix with blocks of varying sizes, meaning that $W$ is $0$ on entries in the blocks and $1$ on entries outside of the blocks, this  is Low-Rank Plus Block-Diagonal (LRPBD) matrix approximation. This is a natural generalization of the LRPD problem and has been studied in the context of anomaly detection in networks \cite{a15}, foreground detection \cite{g12}, and robust principal component analysis \cite{li16}. We also consider the natural generalization of LRPS approximation discussed above, which we call the Low-Rank Plus Block-Sparse (LRPBS) matrix approximation problem.
\item If each row of $W$ has a prefix of an arbitrary number of ones, followed by a suffix of zeros, this is the Monotone Missing Data Pattern (MMDP) problem. This is a common missing data pattern, arising in the event that when a variable is missing from a sample, all subsequent variables are also missing. Methods for handling this pattern are, e.g., included in the SAS/STAT package for statistical analyses \cite{SASSTAT}.
We refer the
reader to \cite{patterns} for more examples of common missing data patterns, such as ``connected'' and ``file matching'' patterns. 
\item If $W$ is the negation of a banded matrix where $W_{i,j} = 0$ iff $|i-j| < p$ for some distance $p$,  this is Low-Rank Plus Banded (LRPBand) matrix approximation. Variants of this problem arise in scientific computing and machine learning, in particular in the approximation of kernel matrices via fast multipole methods \cite{rokhlin1985rapid,greengard1991fast,yang2003improved}. Roughly, these methods approximate a kernel matrix using a low-rank  `far-field' component, 
and a `near-field' component, which explicitly represents the kernel function between close points. If points are in one dimension and sorted, this corresponds to approximating $A$ with a low-rank matrix plus a banded matrix. While many methods compute the low-rank component analytically (using polynomial approximations of the kernel function), a natural alternative is to seek an optimal decomposition via Problem \ref{def:main}.
In many applications, it is common to work in higher dimensions. E.g., in the two-dimensional case, each $i \in [n]$ can be mapped to $(i_1,i_2) \in [\sqrt{n}] \times [\sqrt{n}]$ where $i_1,i_2$ correspond to the first and second halves of $i's$ binary expansion. $W_{i,j} = 0$ iff $|i_1-j_1| + |i_2-j_2| < p$. We give similar bounds for this multidimensional variant.
\end{enumerate}
We summarize our results for the above weight patterns in Table \ref{tab:results}. We give more detail on the specific functions $f$ used in these applications in Sections \ref{sec:pre} and \ref{sec:mainBound}, but note that (1), (2), and (3) use variants of the Equality problem, which has $O(\log(1/\epsilon))$ randomized $1$-sided error communication complexity, (4) and (5) use a variant of the Greater-Than problem with $O(\log \log n + \log(1/\epsilon))$ randomized $2$-sided error communication complexity for $\log n $ bit inputs.

\begin{table}[h]

	\renewcommand{\arraystretch}{1.2}%

	\centering

	\begin{tabular*}{\textwidth}{c @{\extracolsep{\fill}}ccc}

		\hline \hline
		Mask Pattern & $k'$ & Communication Problem & Ref. \\
                \hline
                LRPD/LRPBD & $O(k/\epsilon)$ & Equality & Cors. \ref{cor:lrpd} \& \ref{cor:lrpb} \\
                LRPBand & $k \cdot \poly \left (\frac{\log n}{\epsilon} \right)$ & Variant of Greater-Than & Cors. \ref{cor:band} \& \ref{cor:bandM}\\
                LRPS/LRPBS (w/ sparsity $t$) & $O(k t/\epsilon)$ & Variant of equality & Cors. \ref{cor:lrps} \& \ref{cor:lrpbs}\\
                                                MMDP & $k \cdot \poly \left (\frac{\log n}{\epsilon} \right )$ & Greater-Than & Cor. \ref{cor:mono}\\
                 Subsampled Toeplitz & $O (\min(pk,k/\epsilon))$ & Equality mod $p$ & Cor. \ref{cor:toep}\\
	\end{tabular*}
	\caption{Summary of applications of Theorems \ref{thm:1} and \ref{thm:2}.}\label{tab:results}\vspace{-1em}
\end{table}

\subsubsection{Relation to Matrix Completion}

Masked low-rank approximation is closely related to the well-studied matrix completion problem \cite{candes2009exact,jain2013low,keshavan2010matrix}, however the goal is different. In masked low-rank approximation, we want to approximate $A$ as accurately as possible on the \emph{non-masked entries} (i.e., where $W_{ij} = 1$). In matrix completion, the support of $W$ represents entries in $A$ that are observed and the goal is to approximate $A$ \emph{on the missing entries} (i.e., where $W_{ij} = 0$). The most common approach to solving this problem is in fact to find a low-rank approximation fitting the non-missing entries (i.e., to solve Problem \ref{def:main}), however the two problems are not equivalent. For example, it is not clear that a bicriteria solution to Problem \ref{def:main}, as given by Theorems \ref{thm:1} and \ref{thm:2}, will give anything interesting for the matrix completion problem. In fact, our proof technique implies that it likely will not.

We additionally note that in matrix completion, the mask matrix $W$ is typically assumed to be random and the goal is to recover the missing entries of $A$ when $W$ has as few sampled ones as possible. We do not expect that a random matrix will have low-communication complexity, unless it has further structure (e.g., few zeros or ones per row).

\subsubsection{Other Communication Models}
Theorems \ref{thm:1} and \ref{thm:2} connect communication complexity to the analysis of a simple heuristic for masked low-rank approximation. A natural question is: 
\begin{center}
{\it Can other notions of communication complexity, such as 
multi-party communication complexity, non-deterministic communication complexity, and communication complexity of non-Boolean functions yield algorithms for masked low-rank approximation?}
\end{center}
We answer this question affirmatively.
We first look at multi-party communication complexity, which we show corresponds to masked tensor low-rank approximation.
Here we focus on order-$3$ tensors, though our results are proven for arbitrary order-$t$ tensors.
A tensor is just an array $A \in \mathbb{R}^{n \times n \times n}$. In masked low-rank tensor approximation
we are given such an $A$ and a mask tensor $W \in \{0,1\}^{n \times n \times n}$ and the goal is to find rank-$k$ tensor $L$ minimizing $\norm{W \circ (A-L)}_F^2.$ This problem has been widely studied in the context of low-rank tensor completion \cite{gandy2011tensor,liu2013tensor,mu2014square} and robust tensor PCA \cite{li2015low,lu2016tensor}, which corresponds to the setting where $W$'s zeros represent sparse corruptions of an otherwise low-rank tensor.   Applications include color image and video reconstruction along with low-rank plus diagonal tensor approximation \cite{benner2016reduced}, where $W$ is zero on its diagonal and one everywhere else.
We show:

\begin{theorem}[Multiparty Communication Complexity $\rightarrow$ Tensor Low-Rank Approx]\label{thm:tensorIntro}
Let $f$ be the function computed by $W \in  \{0,1\}^{n \times n \times n}$, $\neg f$ be its negation, and $R^{3,1-sided}_{\epsilon}(\neg f)$ be the randomized $3$-party communication complexity of $\neg f$ in the number-in-hand blackboard model with $1$-sided error. A bicriteria low-rank approximation $L$ with rank $k' = O \left ((k/\epsilon)^2 \cdot 4^{R^{3,1-sided}_{\epsilon}(\neg f)}\right )$ achieving:
\begin{align*}
 \|W \circ (A-L)\|_F^2 \leq OPT +  2\epsilon \|A \circ W\|_F^2,
\end{align*}
where $OPT = \inf_{\textrm{rank-}k\ \hat{L}} \|W \circ (A-\hat{L} )\|_F^2$, can be computed in $O(\nnz(A)) + n \cdot \poly(k/\epsilon)$ time. 
\end{theorem}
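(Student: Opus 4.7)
The plan is to port the proof of Theorem~\ref{thm:1} from the 2-party to the 3-party setting; the one genuinely new ingredient is a tensor bicriteria approximation step that replaces the SVD used in the matrix case. First, from the 3-party number-in-hand 1-sided randomized protocol for $\neg f$ of complexity $R := R^{3,1-sided}_{\epsilon}(\neg f)$, I would extract a low-tensor-rank approximation of $W$. In the number-in-hand blackboard model each deterministic transcript $\tau$ of length at most $R$ partitions the input space into a \emph{box} (a combinatorial rectangle in all three coordinates) $X_\tau \times Y_\tau \times Z_\tau$, so the acceptance-probability tensor
\[
p(x,y,z) \;=\; \sum_\tau \pi_\tau\,\mathbf{1}_{X_\tau}(x)\,\mathbf{1}_{Y_\tau}(y)\,\mathbf{1}_{Z_\tau}(z)
\]
has tensor rank at most $2^R$. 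Setting $\tilde W := \mathbf 1 - p$, the 1-sided correctness of the protocol yields $\tilde W = W$ on the support of $W$ and $\tilde W \in [0,\epsilon]$ off it, while $\tilde W$ itself has tensor rank at most $2^R+1$.

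Next, letting $L_{opt}$ be an optimal rank-$k$ tensor solution, the Hadamard product $L_{opt} \circ \tilde W$ has tensor rank at most $k(2^R+1) = O(k\cdot 2^R)$. A direct computation---the cross term vanishes because $\tilde W = W$ wherever $W=1$, and the other side is pinned to zero by $W=0$---gives
\[
\|A \circ W - L_{opt}\circ \tilde W\|_F^2 \;\leq\; OPT + \epsilon^2\,\|L_{opt}\circ(\mathbf 1-W)\|_F^2,
\]
so $L_{opt}\circ\tilde W$ is the low-tensor-rank target I want to compete with in unmasked Frobenius error. The final translation is $\|W\circ(A-L)\|_F^2 \leq \|A\circ W - L\|_F^2$ (valid for any $L$ because $W$ is $0/1$), and the $L_{opt}$-dependent tail is absorbed into $2\epsilon\|A\circ W\|_F^2$ exactly as in the proof of Theorem~\ref{thm:1}.

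The heart of the argument---and the source of the extra $(k/\epsilon)^2$ factor---is the tensor low-rank approximation, since there is no tensor SVD. The plan is a two-stage flattening. First, flatten $A\circ W$ along mode~1 into an $n\times n^2$ matrix and run an input-sparsity-time $(1+\epsilon)$-approximate rank-$r$ matrix-approximation algorithm~\cite{clarkson2015input} with $r = O(k\cdot 2^R/\epsilon)$. Second, reshape each of the $r$ right factors into an $n\times n$ matrix and run another $(1+\epsilon)$-rank-$s$ matrix approximation with $s = O(k\cdot 2^R/\epsilon)$. Composing the decompositions yields a tensor $L^*$ of rank $rs = O((k/\epsilon)^2\cdot 4^R)$. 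Chaining the two $(1+\epsilon)$ factors, $L^*$ beats \emph{any} tensor of rank $k(2^R+1)$---in particular $L_{opt}\circ\tilde W$---up to a $(1+O(\epsilon))$ factor, and is computed in $O(\nnz(A)) + n\cdot\poly(k/\epsilon)$ time.

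The main obstacle I anticipate is precisely this bicriteria step: verifying that chaining two $(1+\epsilon)$ \emph{relative}-error matrix approximations (one on the mode-1 flattening, the other on the reshaped right factors) actually competes with the intended target $L_{opt}\circ\tilde W$, rather than merely with the best rank-$rs$ tensor, requires carefully tracking how flattening interacts with tensor rank and how the right-factor reshapings inherit an approximate low-tensor-rank structure from $L_{opt}\circ\tilde W$. Once this is in place, absorbing the compounded error into a single clean $2\epsilon\|A\circ W\|_F^2$ term is a matter of rescaling $\epsilon$ by a constant. By contrast, the communication-theoretic half is essentially identical to the 2-party case once one observes that number-in-hand blackboard transcripts induce 3-dimensional boxes in exactly the way 2-party transcripts induce rectangles.
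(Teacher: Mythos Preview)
Your proposal has a genuine gap at the very first step: the claim that the acceptance-probability tensor $p$ has tensor rank at most $2^R$ is false for public-coin protocols (which is what $R^{3,1-sided}_\epsilon(\neg f)$ means throughout this paper). In the public-coin model one first draws a shared random string $r$ and then runs a deterministic protocol $\Pi_r$; for different $r$ the \emph{same} transcript string $\tau$ corresponds to \emph{different} boxes $X_\tau^{(r)}\times Y_\tau^{(r)}\times Z_\tau^{(r)}$, so the sum you wrote for $p$ really ranges over $|\{r\}|\cdot 2^R$ rank-one terms, not $2^R$. The paper explicitly flags this obstruction in Section~\ref{sec:rebut}: for $W=1-I$, any entrywise $\epsilon$-approximation $\tilde W$ provably needs rank growing with $n$~\cite{alon2009perturbed}, yet $2^{R_\epsilon(EQ)}=O(1/\epsilon)$ is independent of $n$. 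Your decomposition \emph{would} hold for a private-coin protocol (where the transcript probability factors as $\alpha_\tau(x)\beta_\tau(y)\gamma_\tau(z)$), but then Newman's theorem costs an additive $O(\log n)$ in $R$, i.e.\ a multiplicative $\poly(n)$ in the final rank, and you no longer match the theorem. A secondary slip: you have the direction of the one-sided error reversed---under the paper's convention $R^{1-sided}_\epsilon(\neg f)=R_{\epsilon,0}(f)$, one gets $\tilde W=0$ \emph{exactly} off $\supp(W)$ and $\tilde W\in[1-\epsilon,1]$ on it---so your ``cross term vanishes'' computation and the $\epsilon^2\|L_{opt}\circ(\mathbf 1-W)\|_F^2$ tail are both misplaced (and there is nothing in the proof of Theorem~\ref{thm:1} that ``absorbs'' an $L_{opt}$-dependent term; the whole point of one-sided error there is that no such term arises).

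The paper's proof avoids building any low-rank surrogate for $W$. It takes expectation over the public coin and applies the probabilistic method to select a \emph{single} deterministic protocol $\Pi$ (hence at most $2^R$ boxes) with small $\|A\circ W\circ(1-W_\Pi)\|_F^2$; the comparison tensor $\bar L$ is then the sum of near-optimal rank-$k$ approximations of $A\circ W$ restricted to each box labeled $1$, so $\bar L$ has rank at most $k\cdot 2^R$ and is exactly zero off $\supp(W_\Pi)$. The bicriteria computation of an $L$ competitive with $\bar L$ is handed to the Frobenius-norm tensor algorithm of~\cite{song2019relative} as a black box, which directly returns rank $O((k\cdot 2^R/\epsilon)^2)$; your two-stage flattening targets the same rank, but once that citation is in hand it is not needed.
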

We give applications of Theorem \ref{thm:tensorIntro} to the low-rank plus diagonal tensor approximation problem, achieving $k' = O(k^2/\epsilon^4)$ (Cor. \ref{cor:lrpdT}) and the low-rank plus sparse tensor approximation problem achieving $k' = O \left (\frac{k^{2}\cdot t^{4}}{\epsilon^{6}} \right )$ (Cor. \ref{cor:lrpsT}), where $t$ is the maximum number of zeros on any face of $W$.

We also consider 
a common variant of low-rank approximation studied in data mining and information retrieval: {\it Boolean low-rank approximation} (also called binary low-rank approximation). Here one is given binary $A \in \{0,1\}^{n \times n}$ 
 and seeks to find $U \in \{0,1\}^{n \times k}$ and $V \in \{0,1\}^{k \times n}$ minimizing $\|A - U \cdot V\|_0$ where $U \cdot V$ denotes Boolean matrix multiplication and $\norm{\cdot}_0$ is the entrywise $\ell_0$ norm, equal to the squared Frobenius norm in this case. While Boolean low-rank approximation is NP-hard in general \cite{dan2015low,gillis2018complexity}, there is a large body of work studying heuristic algorithms and approximation schemes, when no entries of $A$ are masked \cite{lu2008optimal,shen2009mining,vaidya2012boolean,bringmann2017approximation,fomin2018parameterized}. We show that any black-box algorithm for standard Boolean low-rank approximation yields a bicriteria algorithm for masked Boolean low-rank approximation, with rank depending on the \emph{nondeterministic communication complexity} of the mask matrix $W$.
\begin{theorem}[Nondeterministic Communication Complexity $\rightarrow$ Boolean Low-Rank Approx]\label{thm:nondet}
Let $f$ be the function computed by $W$ and $N(f)$ be the nondeterministic communication complexity of $f$. For any $k' \ge k \cdot 2^{N(f)}$, if one computes $U,V \in \{0,1\}^{n \times k'}$ satisfying $\norm{A \circ W - U \cdot V}_0 \le \min_{\hat U,\hat V \in \{0,1\}^{n \times k'}} \norm{A \circ W - \hat U \cdot \hat V}_0 + \Delta$ then:
$$\norm{W \circ (A- U \cdot V)}_0 \le 2^{N(f)} \cdot OPT + \Delta,$$
where  
$\displaystyle OPT = \min_{\hat U, \hat V \in \{0,1\}^{k \times n}} \norm{W \circ (A- U \cdot V)}_0 $ and  $U \cdot V$ denotes Boolean matrix multiplication.
\end{theorem}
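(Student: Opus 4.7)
The plan is to use the nondeterministic protocol for $f$ to construct an explicit Boolean low-rank witness showing that $W \circ L^*$ has Boolean rank at most $k \cdot 2^{N(f)} \le k'$, and then bolt this onto the black-box guarantee applied to $A \circ W$. A standard reduction converts any nondeterministic protocol of length $N(f)$ into a cover $f^{-1}(1) = \bigcup_{\ell=1}^{m}(X_\ell \times Y_\ell)$ by at most $m \le 2^{N(f)}$ combinatorial rectangles; letting $u_\ell, v_\ell \in \{0,1\}^n$ be the indicators of $X_\ell$ and $Y_\ell$, this is the Boolean matrix identity $W = \bigvee_{\ell=1}^{m} u_\ell v_\ell^\top$ under entrywise OR.

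Let $L^* = U^* V^*$ realize $OPT$, with $U^* \in \{0,1\}^{n \times k}$ and $V^* \in \{0,1\}^{k \times n}$. For each $\ell$, I would define $\tilde U^{(\ell)} \in \{0,1\}^{n \times k}$ and $\tilde V^{(\ell)} \in \{0,1\}^{k \times n}$ by zeroing rows and columns outside $X_\ell$ and $Y_\ell$: $\tilde U^{(\ell)}_{is} \eqdef U^*_{is} \wedge u_\ell(i)$ and $\tilde V^{(\ell)}_{sj} \eqdef V^*_{sj} \wedge v_\ell(j)$. An entrywise check in the Boolean semiring then yields $\tilde U^{(\ell)} \cdot \tilde V^{(\ell)} = (u_\ell v_\ell^\top) \circ L^*$. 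Since Hadamard product with a $\{0,1\}$-valued matrix distributes over Boolean OR, $\bigvee_\ell (u_\ell v_\ell^\top) \circ L^* = W \circ L^*$; and horizontal/vertical concatenation of the $m$ factor pairs produces an explicit Boolean rank-$(km)$ factorization of $L' \eqdef W \circ L^*$.

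To finish, I would use the entrywise identity $A \circ W - W \circ L^* = W \circ (A - L^*)$ to obtain $\|A \circ W - L'\|_0 = OPT$. Since $L'$ has Boolean rank at most $k'$, the black-box guarantee applied to $A \circ W$ gives $\|A \circ W - UV\|_0 \le OPT + \Delta$. Finally, the support of $W \circ (A - UV)$ lies in the $1$-positions of $W$, where it coincides entrywise with $A \circ W - UV$, so $\|W \circ (A - UV)\|_0 \le \|A \circ W - UV\|_0 \le OPT + \Delta$, which comfortably implies the stated bound.

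The main obstacle is the Boolean rank accounting in the middle step: the Hadamard/OR distributivity and the concatenation trick both require working in the Boolean semiring, and one must verify carefully that the zeroing-out operation preserves the identity $\tilde U^{(\ell)} \tilde V^{(\ell)} = (u_\ell v_\ell^\top) \circ L^*$, since the non-cancelling nature of Boolean OR means that spurious $1$'s could in principle be introduced. Once the rectangle-by-rectangle identity is pinned down, concatenation and the final entrywise bookkeeping are routine.
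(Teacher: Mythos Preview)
Your proof is correct, and in fact it establishes the sharper bound $\norm{W \circ (A - UV)}_0 \le OPT + \Delta$, which is strictly stronger than the stated $2^{N(f)} \cdot OPT + \Delta$. The rectangle-by-rectangle identity you worried about is fine: entrywise, $(\tilde U^{(\ell)} \cdot \tilde V^{(\ell)})_{ij} = \bigvee_s (U^*_{is}\wedge u_\ell(i))\wedge(V^*_{sj}\wedge v_\ell(j)) = u_\ell(i)\wedge v_\ell(j)\wedge L^*_{ij}$, so no spurious $1$'s arise.

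Your approach differs from the paper's in a meaningful way. The paper builds its rank-$k'$ witness by taking, for each rectangle $R_\ell$, the \emph{optimal} Boolean rank-$k$ approximation to $A \circ W_{R_\ell}$, and then OR-ing these together. Because the rectangles overlap, an entry $(i,j)$ in the support of $W$ may be covered by several rectangles, and the error incurred at that entry can be charged once per covering rectangle; this is where the multiplicative $2^{N(f)}$ in front of $OPT$ comes from (the paper explicitly attributes it to overlap). You instead take the single optimal $L^*$ and mask it by $W$, exploiting that Boolean rank is submultiplicative under the Hadamard product (Boolean rank of $W \circ L^*$ is at most the product of the Boolean ranks of $W$ and $L^*$). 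Since the per-rectangle pieces $(u_\ell v_\ell^\top)\circ L^*$ all come from the same underlying $L^*$, they agree on overlaps, and the errors do not accumulate: $\norm{A\circ W - W\circ L^*}_0 = OPT$ exactly. The paper's approach is more in the spirit of its randomized-protocol arguments (approximate each piece independently, then sum), but your direct masking is both simpler and loses nothing to overlap.
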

We can apply Theorem \ref{thm:nondet} for example, to the low-rank plus diagonal Boolean matrix approximation problem, where $W$ is zero on its diagonal and one everywhere else. In this case we have $2^{N(f)} = \log n$ and correspondingly $k' = k \log n$ (Cor. \ref{cor:lrpbBool}).
%
%
%


\subsubsection{Connections to Approximate Rank and Other Communication Lower Bounds}\label{sec:appRank}
In Section \ref{sec:techniques} we sketch the proof of Theorem \ref{thm:1}, which is very simple (Theorems \ref{thm:2}, \ref{thm:tensorIntro}, and \ref{thm:nondet} are proved similarly.) The proof is based on covering $W$ with $2^{R^{1-sided}_\epsilon(\neg f)}$ disjoint monochromatic rectangles, which match $W$ on all but a small random subset of its $1$ entries. The existence of a 1-sided error randomized communication protocol for $\neg f$ using ${R^{1-sided}_\epsilon(\neg f)}$ bits of communication is well known to imply the existence of such a covering with $2^{R^{1-sided}_\epsilon(\neg f)}$ rectangles. However, the optimal size of such a covering, which is known as the `public-coin partition bound' \cite{jain2014quadratically}, may be lower than this. In fact, recent work has shown that it is provably smaller for some problems \cite{goos2017randomized}.
Thus, our algorithm can be stated in terms of this bound, giving improved results for these problems. However, as far as we are aware, this bound does not give any improvements for the communication problems we consider (corresponding to natural weight matrices $W$).

The public coin partition bound is a strengthening of the well-studied partition bound \cite{jain2010partition} for randomized communication complexity, which is itself a strengthening of the smooth rectangle bound \cite{jain2010partition}. This logarithm of the smooth rectangle bound is equivalent to the log approximate nonnegative rank of $W$ up to constants \cite{kol2014approximate}. It has been shown that the randomized communication complexity can be polynomially larger than the log partition bound \cite{goos2017randomized}. Additionally, recent work refuting the log approximate rank conjecture \cite{chattopadhyay2019log} has shown that the randomized communication complexity can be exponentially larger than the log approximate nonnegative rank. Thus, improving our results to depend on these communication complexity lower bounds rather than the communication complexity itself would lead to potential improvements for some weight matrices $W$. However, all known separations are for $W$ with complex structure and relatively high communication complexity, and thus not relevant to common applications. Additionally, it is unclear how to extend our techniques to these weaker notions, or to other related notations, such as information complexity \cite{chakrabarti2001informational}. Such extensions would be interesting, e.g., connecting the difficulty of masked low-rank approximation to the approximate rank of the mask.

\subsubsection{Lower Bounds}

Given our results, and the above discussion, a natural question to ask is:
\begin{center}
{\it Is there a natural notion of the complexity of the mask $W$ that characterizes the difficulty of the masked low-rank approximation problem? }
\end{center}
We give some initial results, focused on how communication complexity in particular relates to the best bicriteria approximation factor for masked low-rank approximation achievable in polynomial time. We note that, since our results actually hold with rank depending on the public-coin partition bound \cite{jain2014quadratically}, which has been separated from the randomized communication complexity, the communication complexity itself certainly does not  tightly characterize the difficulty of masked low-rank approximation. However, we view our lower bounds in terms of communication complexity as a step in understanding this difficulty.

We prove two bounds based on a conjecture of the  hardness of approximate $3$-coloring. 
We show that there is a class of masks $W$ such that any polynomial time algorithm achieving guarantee \eqref{eqn:main} and small enough $\epsilon$ requires bicriteria rank $k' = \Omega \left (\frac{D(f)}{\log D(f)}\right )$ where $D(f)$ is the deterministic communication complexity of $f$. Note that $D(f)$ is only greater than $R^{1-sided}_{\epsilon}(\neg f)$ and  $R_{\epsilon}(f)$. 

We strengthen this bound significantly for two natural variants of the masked low-rank approximation problem: when the low-rank approximation $L$ is required to have a non-negative or binary factorization. We note that our techniques yield matching algorithmic results analogous to Theorems \ref{thm:1} and \ref{thm:2} for these variants. 
We show that for these variants on Problem \ref{def:main}, there is a class of masks $W$ such that any polynomial time algorithm achieving guarantee \eqref{eqn:main} for small enough $\epsilon$ requires bicriteria rank which is {exponential in the deterministic communication complexity}, $k' = 2^{\Omega(D(\neg f))}$. This bound matches our algorithmic results for these variants. We note that in the parameter regimes considered (we just require rank $k = 3$), there exist polynomial time algorithms for the \emph{non-masked} versions of binary  and non-negative low-rank approximation. Thus, the hardness in terms of communication complexity comes from adding the mask to the low-rank cost function rather than the binary and non-negativity constraints themselves.

Our lower bounds are closely related to those of \cite{hardt2014computational} on the hardness of bicriteria low-rank matrix completion. We note that for any $n \times n$ mask matrix $W$, we can always bound $D(f) = O(\log n)$. Thus, achieving a $2^{o(D(f))}$ bicriteria approximation factor means achieving an approximation factor sub-polynomial  in $n$. \cite{hardt2014computational} leaves open if achieving a $\sqrt{n}$ bicriteria approximation to rank-$3$ matrix completion is hard (Question 4.3 in \cite{hardt2014computational}), and more generally asks what bicriteria approximation is achievable in polynomial time (Question 4.2 in \cite{hardt2014computational}).

\subsection{Our Techniques}\label{sec:techniques}


 The key ideas behind Theorems \ref{thm:1} and \ref{thm:2} are similar. We focus on Theorem \ref{thm:1} for exposition.
We want to argue that any near optimal rank-$k'$ approximation of $A\circ W$, gives a good bicriteria solution to the masked rank-$k$ approximation problem. For simplicity, here we focus on showing this for the actual optimal rank-$k'$ approximation,
 $L = \argmin_{\rank-k'\ \hat L} \norm{(A \circ W) - \hat L}_F^2$. We show that $\norm{W \circ (A-L)}_F^2  \le OPT + O(\epsilon) \norm{A \circ W}_F^2$ via a comparison method. Namely, 
we  exhibit a rank $k'$ matrix $\bar L$ that:
\begin{enumerate}
\item Nearly matches how well the optimum rank-$k$ solution $L_{opt}$ to Problem \ref{def:main} approximates $A$ on the support of $W$. In particular, $\norm{(A-\bar L)\circ W}_F^2 \le \norm{(A-L_{opt})\circ W}_F^2 + O(\epsilon) \norm{A \circ W}_F^2$.
\item Places \emph{no mass} outside the support of $W$. In particular, $\norm{\bar L \circ (1-W)}_F^2= 0$.
\end{enumerate}
 Since $L$ minimizes the distance to $(A \circ W)$ among all rank-$k'$ matrices, we have $\norm{(A\circ W) - L}_F^2 \le \norm{(A\circ W) - \bar L}_F^2$. However, by (2), $\bar L$ \emph{exactly matches} $A \circ W$ outside the support of $W$ -- both matrices are $0$ there. Thus $L$ must have at least as large error outside the support of $W$, and in turn cannot have larger error on the  support of $W$. That is, we must have $\norm{(A-L) \circ W}_F^2\le \norm{(A-\bar L) \circ W}_F^2$. Then by (1), $L$ satisfies the desired bound.
 
 \subsubsection{From Communication Protocols to Low-Rank Approximations.}
 The key question becomes how to exhibit $\bar L$, which we do using communication complexity. We view $W$ as the communication matrix of some function $f: \{0,1\}^{\log n} \times \{0,1\}^{\log n} \rightarrow \{0,1\}$, with $W_{x,y} = f(x,y)$, where in $f$ we interpret $x,y \in [n]$ as their binary  representations. It is well-known that the existence of a deterministic communication protocol $\Pi$ that computes $f$ with $D(f)$ total bits of communication implies the existence of a partition of $W$ into $2^{D(f)}$ \emph{monochromatic combinatorial rectangles}. That is, there are $2^{D(f)}$ non-overlapping sets $R_i = S \times T$ for $S,T \in [n]$ that partition $W$ and that satisfy $W(R_i)$ is either all $1$ or all $0$. We could construct $\bar L$ by taking the best $k$-rank approximation of each $A(R_i)$ where $R_i$ is colored $1$ (i.e., contains inputs with $f(x,y) = 1$). We could then sum up these approximations to produce $\bar L$ with rank $\le k \cdot 2^{D(f)}$. Note that $\bar L$ is $0$ outside the rectangles colored $1$ -- i.e., outside the support of $W$. Thus condition (2) above is satisfied. Further,
$\bar L$ matches the optimal rank-$k$ approximation on each $R_i$ colored $1$. So it approximates $A$ at least as well as $L_{opt}$ on these rectangles, and since these rectangles fully cover the support of $W$ we have $\norm{(A - \bar L)\circ W}_F^2 \le \norm{(A - L_{opt})\circ W}_F^2$, giving the requirement of (1).
 
 Unfortunately, essentially none of the $W$ that are of interest in applications admit efficient deterministic communication protocols. $k' = k \cdot 2^{D(f)}$ will typically be larger than $n$, giving a vacuous bound. Thus we turn to randomized communication complexity with error probability  $\epsilon$, $R_\epsilon(f)$, which is much lower in these cases. A randomized protocol $\Pi$ achieving this complexity  corresponds to a distribution over partitions of $W$ into $2^{R_\epsilon(f)}$ rectangles. These rectangles are not monochromatic but are close to it -- letting $W_\Pi$ be the communication matrix of the (random) function computed by the protocol, $W_{\Pi}$ is partitioned into $2^{R_\epsilon(f)}$ monochromatic rectangles and further matches $W$ on each $(x,y)$ with probability at least $1-\epsilon$. We prove that, even with this small error, constructing $\bar L$ as above using the partition of $W_\Pi$ instead of $W$ itself gives a solution nearly  matching  $L_{opt}$ up to small additive error. This error will involve $\norm{A \circ W}_F^2$ and $\norm{L_{opt} \circ (1-W)}_F^2$, depending on whether the protocol makes $1$ or $2$-sided error, as seen in Theorems \ref{thm:1} and \ref{thm:2}.
 
\subsubsection{Low-Rank Approximation to $W$ Does Not Suffice.}\label{sec:rebut} A natural view of our argument above is that the existence of an efficient randomized protocol for $W$ implies the existence of a distribution over low-rank matrices (induced by partitions into near monochromatic rectangles) that match $W$ on each entry with good probability. We note that this distributional view is critical -- simply having a low-rank approximation to $W$ matching all but a small fraction of entries does not suffice. The mistaken entries could in the worst case align with very heavy entries of $A$, which must be approximated well to solve masked low-rank approximation to small error. An approximation with small entrywise error (in the $\ell_{\infty}$ sense) would suffice. However, for important cases, e.g., when $W$ is zero on the diagonal and one off the diagonal, such approximations provably require higher rank thank $2^{R_\epsilon(f)}$ and thus relying on them would lead to significantly weaker bounds \cite{alon2009perturbed}.
 
\subsubsection{Other Communication Models}

In extending our results to other communication models, we first consider the connection between multiparty number-in-hand communication and tensor low-rank approximation. Protocols in this model correspond to a partition of the communication tensor $W \in \{0,1\}^{n \times n \times n}$ into $2^{R^3_\epsilon(f)}$ monochromatic (or nearly monochromatic) rectangles of the form $R_i = S \times T \times U$ for $S,T,U \subseteq [n]$, where $R^3_\epsilon(f)$ is the randomized 3-player communication complexity of $W$. We can again argue the existence of a rank $k' = k \cdot 2^{R^3_\epsilon(f)}$ tensor $\bar L$, obtained by taking a near optimal low-rank approximation to each rectangle colored $1$ in $W_\Pi$, which is mostly $0$ outside the support of $W$ and at the same time competes with the best rank-$k$ tensor approximation $L_{opt}$ on the support of $W$. There are different notions of rank for tensors; here we mostly discuss canonical or CP rank. This lets us argue, as in the two player case, that the best rank-$k'$ approximation of $A \circ W$ also competes with $L_{opt}$. It is not known how to find this best rank-$k'$ approximation efficiently, however using an algorithm of \cite{song2019relative} we can find a rank $k'' = O((k'/\epsilon)^2)$ bicriteria approximation achieving relative error $1+\epsilon$. Overall we have $k'' = O\left ( (k/\epsilon)^2 \cdot 2^{2R_\epsilon^3(f)}\right )$, giving Theorem \ref{thm:tensorIntro}.

We next consider the nondeterministic communication complexity. In a nondeterministic communication protocol for a function $f$, players can make ``guesses'' at any point during the protocol $\Pi$. The only requirement is that, (1) for every $x,y$ with $f(x,y) = 1$, for some set of guesses made by the players, the protocol outputs $\Pi(x,y) =1$ and (2) the protocol never outputs $\Pi(x,y) = 1$ for $x,y$ with $f(x,y) = 0$. Such a protocol using $N(f)$ bits of communication corresponds to covering the communication matrix  $W$ with $2^{N(f)}$ \emph{possibly overlapping} monochromatic rectangles. In many cases, the nondeterministic complexity is much lower than the randomized communication complexity. However, for low-rank approximation in the Frobenius norm, the overlap is a problem. We cannot construct $\bar L$ simply  by approximating each rectangle and adding these approximations together. $\bar L$ will be too ``heavy'' where the rectangles overlap. However, for the Boolean low-rank approximation problem, the overlap is less of a problem. We simply  construct $\bar L$ in the same way, letting it be the AND of  the approximations on each rectangle. In the end, we obtain an error bound of roughly $2^{N(f)} \cdot OPT$, owing to the fact that error may still build up on the overlapping sections. Since there are $2^{N(f)} $ rectangles total, each entry is overlapped by at most $2^{N(f)}$ of them. However, since $N(f)$ can be very small, this result gives a tradeoff with Theorems \ref{thm:1} and \ref{thm:2} (which can also be extended to the Boolean case). For example, in Corollary \ref{cor:lrpbBool} we show how to obtain error $\approx O(\log n \cdot OPT)$ for the Boolean low-rank plus diagonal approximation problem, with rank $k' = O(k \log n)$. This is smaller than the $O(k/\epsilon)$ achieved by  Theorem \ref{thm:1} for small $\epsilon$, which may be required to achieve good error if, e.g., $\norm{A}_F^2$ is large.
  
\subsubsection{An Alternative Approach}
In the important cases when $W$ is zero on its diagonal and one elsewhere or has a few non-zeros per row (the low-rank plus diagonal and low-rank plus sparse approximation problems, respectively) the existence of $\bar L$ satisfying the necessary conditions (1) and (2) above can be proven via a very different technique. The key idea is a structural result: that any low-rank matrix cannot concentrate too much weight on more than a few entries of its diagonal, or more generally, on a sparse support outside a few rows. Thus we can obtain $\bar L$ from $L_{opt}$ by explicitly zero-ing out these few large entries falling outside the support of $W$ (e.g., on its diagonal when $W$ has zeros just on its diagonal). We detail this approach in Appendix \ref{app:alt}, giving a bound matching Theorem \ref{thm:1} in this case. We show that the same structural result can also be used to obtain a fixed-parameter-tractable, relative error, non-bicriteria approximation algorithm for Problem \ref{def:main} in the low-rank plus diagonal case, as well as for the closely related factor analysis problem. We are unaware of any  formal connection between this structural result and our communication complexity framework; however, establishing one would be very interesting. 

\subsection{Road Map}
\begin{description}
\item[Section \ref{sec:pre}]  We give preliminaries, defining the communication models we use and giving communication complexity bounds for common mask matrices in these models.
\item[Section \ref{sec:mainBound}] We prove our main results, Theorems \ref{thm:1} and \ref{thm:2}. We instantiate these results for the common mask matrices shown in Table \ref{tab:results}. 
\item[Section \ref{sec:tensor}] We prove Theorem \ref{thm:tensorIntro}, connecting masked tensor low-rank approximation to multiparty  communication complexity. We give examples instantiating the theorem.
\item[Section \ref{sec:nondet}]  We prove Theorem \ref{thm:nondet}, connecting masked Boolean low-rank approximation to nondeterministic communication complexity. We give examples instantiating the theorem.
\item[Section \ref{sec:lower}]  We give two lower bounds exploring the connection between bicriteria rank in masked low-rank approximation and the communication complexity of $W$.
\end{description}


\section{Preliminaries}\label{sec:pre}

\subsection{Notation and Conventions}

Throughout we use $\log z$ to denote the base-2 logarithm of $z$.
For simplicity, so that we can associate any $W \in \R^{n \times n}$ with a function $f: \{0,1\}^{\log n} \times \{0,1\}^{\log n} \rightarrow \{0,1\}$ we assume that $n$ is a power of $2$ and so $\log n$ is an integer. Our results can be easily extended to general $n$. Given a matrix $M \in \R^{n \times n}$ and a combinatorial rectangle $R = S\times T$ for $S,T \subseteq [n]$, we let $M_R$ denote the submatrix of $M$ indexed by $R$. For matrix $M$ we let $1-M$ denote the matrix $N$ with $N_{i,j} = 1-M_{i,j}$. E.g., $1-I$ is the matrix with all zeros on diagonal and all ones off diagonal. 

While in the introduction we focus on low-rank approximation in the Frobenius norm, many of our results will apply to any entrywise matrix norm of the form:
\begin{definition}\label{def:norm} An entrywise matrix norm $\norm{\cdot}_\star: \R^{n \times n} \rightarrow \R$ is a function of the form:
$$\norm{M}_\star = \sum_{i=1}^n \sum_{j=1}^n g(|M_{i,j}|),$$ where $g: \R \rightarrow \R$ is some monotonically increasing nonnegative function.
\end{definition}
\noindent $g(x) = x^2$ gives the squared Frobenius norm, $g(x) = x^p$ gives the entrywise $\ell_p$ norm, $g(x) = 1$  iff $x \neq 0$ gives the entrywise $\ell_0$ norm, etc. See \cite{song2017low,bringmann2017approximation,chierichetti2017algorithms,ban2019ptas} for a discussion of standard low-rank approximation algorithms for these norms.  As discussed, our bicriteria results will simply require applying one of these algorithms to compute a near-optimal low-rank approximation to $A \circ W$ (i.e., $A$ with the masked entries zeroed out).

\subsection{Communication Complexity Models}
We give a brief introduction to the communication models we consider, and refer the reader to the textbooks \cite{kn97,ry19} for more background. We mostly consider two-party communication of Boolean functions, though will discuss extensions to more than two parties below. 

Consider two parties, Alice and Bob, holding inputs $x \in \mathcal{X}$ and $y \in \mathcal{Y}$ respectively. They exchange messages in order to compute a function $f: \mathcal{X} \times \mathcal{Y} \rightarrow \{0,1\}$ evaluated at $(x,y)$. They would like to do this while minimizing the total number of bits exchanged. The communication between the parties is determined by a possibly randomized protocol, which specifies the message of the next player to speak as a function of previous messages received by that player and that player's input. For a given protocol $\Pi$, we let $|\Pi(x,y)|$ denote the number of bits transmitted by the players on inputs $x$ and $y$, and we let $|\Pi| = \max_{x, y} |\Pi(x,y)|$. 

Let $M$ be the communication matrix of $f$, that is, the matrix whose rows are indexed by elements of $\mathcal{X}$ and columns by elements of $\mathcal{Y}$, and for which $M_{x,y} = f(x,y)$. A well known and useful property is that $\Pi$ {\it partitions} $M$ into {\it rectangles} $R = S \times T$, where $S \subseteq \mathcal{X}$ and $T \subseteq \mathcal{Y}$, and every pair $(x,y)$ of inputs with $(x,y) \in S \times T$ has the same output when running protocol $\Pi$. The number of rectangles in the partition is equal to $2^{|\Pi|}$. We call the unique output of $\Pi$ on a rectangle $S \times T$ the {\it label} of the rectangle.

\begin{definition}[Deterministic Communication Complexity]\label{def:det}
The deterministic communication complexity $D(f) = \min_{\Pi} |\Pi|$, where the minimum is taken over all protocols $\Pi$ for which $\Pi(x,y) = f(x,y)$ for every pair $(x,y)$ of inputs. Equivalently, $D(f)$ is the minimum number so that $M$ can be partitioned via a protocol $\Pi$ into $2^{D(f)}$ rectangles for which for every rectangle $R$ and $b \in \{0,1\}$, if $R$ is labeled $b$, then for all $(x,y) \in R$, $f(x,y) = b$. 
\end{definition}
Another notion we need is non-deterministic communication complexity, which can be smaller than the deterministic communication complexity. In a nondeterministic protocol, Alice and Bob are each allowed to make arbitrary `guesses'. If $f(x,y) =1$ the protocol is required to output $1$ for at least some set of guesses. If $f(x,y) = 0$, the protocol should never output $1$, no matter the guesses. Rather than partitioning $M$ into rectangles, a non-deterministic protocol $\Pi$ covers the support of $M$ with a set of at most $2^{|\Pi|}$ possibly overlapping rectangles.
%
\begin{definition}[Non-deterministic Communication Complexity]\label{def:nonDet}
The non-deterministic communication complexity $N(f) = \min_{\Pi} |\Pi|$, where the minimum is taken over all all non-deterministic protocols $\Pi$ computing $f$. Equivalently, $N(f)$ is the minimum number so that $M$ can be covered via a protocol by $2^{N(f)}$ possibly overlapping rectangles such that (1) for every input $(x,y) \in \mathcal{X} \times \mathcal{Y}$ with $f(x,y) = 1$, we have that $(x,y)$ occurs in at least one of these rectangles, and (2) there is no input $(x,y)$ with $f(x,y) = 0$ which occurs in any of these rectangles. 
\end{definition}
We next turn to randomized communication complexity. 
For the purposes of this paper, we will consider public coin randomized communication complexity, i.e., there is a shared random string $r$ that both Alice and Bob have access to. In a randomized protocol $\Pi$, Alice and Bob see $r$ and then run a deterministic protocol $\Pi_r$. We say a protocol $\Pi$ is a $(\delta_1,\delta_2)$-error protocol if for all $x, y \in \mathcal{X} \times \mathcal{Y}$, with $f(x,y) = 1$, $\Pr_r [\Pi_r(x,y) = f(x,y)] \geq 1-\delta_1$ and for all $x, y \in \mathcal{X} \times \mathcal{Y}$ with $f(x,y) = 0$, $\Pr_r [\Pi_r(x,y) = f(x,y)] \geq 1-\delta_2$.  We can then define a general notion of randomized communication complexity:
\begin{definition}[Randomized Communication Complexity -- General]\label{def:randGen}
The $(\delta_1,\delta_2)$-error randomized communication complexity $R_{\delta_1,\delta_2}(f) = \min_{\Pi} |\Pi|$, where the minimum is taken over all $(\delta_1,\delta_2)$-error protocols $\Pi$. Equivalently, $R_{\delta_1,\delta_2}(f)$ is the minimum number so that there is a distribution over protocols inducing partitions of $M$, each containing at most $2^{R_{\delta_1,\delta_2}(f)}$ rectangles, such that (1) for every $(x,y) \in \mathcal{X} \times \mathcal{Y}$ with $f(x,y) = 1$, with probability at least $1-\delta_1$, $(x,y)$ lands in a rectangle which is labeled $1$ and (2) for every $(x,y) \in \mathcal{X} \times \mathcal{Y}$ with $f(x,y) = 0$, with probability at least $1-\delta_2$, $(x,y)$ lands in a rectangle which is labeled $0$.
\end{definition}
Definition \ref{def:randGen} is typically specialized to two cases: the standard randomized communication complexity with $2$-sided error and the randomized communication complexity with $1$-sided error.
\begin{definition}[Randomized Communication Complexity -- $2$-sided]\label{def:rand}
The $\delta$-error randomized communication complexity of $f$ is $R_{\delta}(f) \eqdef R_{\delta,\delta}(f).$
\end{definition}
\begin{definition}[Randomized Communication Complexity -- $1$-Sided]\label{def:1sided}
The $\delta$-error $1$-sided randomized communication complexity of $f$ is $R^{1-sided}_{\delta}(f) \eqdef R_{0,\delta}(f)$.
\end{definition}
In Theorem \ref{thm:1} we consider the 1-sided communication complexity of $\neg f$, which is equal to $R_{\delta,0}(f)$. In some bounds it will be  helpful to also consider the $1$-way  communication complexity:

\begin{definition}[$1$-way Randomized Communication Complexity]\label{def:1way}
The $\delta$-error $1$-way randomized communication complexity $R^{1-way}_{\delta}(f) = \min_{\Pi}|\Pi|$, where the minimum is taken over all $\delta$-error protocols $\Pi$ in which Alice sends a single message to Bob who then outputs the answer. Note that Bob's output bit is included in the communication complexity. Equivalently, $R^{1-way}_{\delta}(f)$ is the minimum number so that there is a distribution on 1-way protocols inducing partitions of $M$ each containing at most $2^{R^{1-way}_{\delta}(f)}$ rectangles, such that (1) for every  $(x,y) \in \mathcal X \times \mathcal Y$, with probability at least $1-\delta$, $(x,y)$ lands in a rectangle  labeled $f(x,y)$ (2) each partition is obtained by considering a partition $P_\mathcal{X}$ of $ \mathcal{X}$ and choosing rectangles of the form $S \times T$ where $S \in P_{\mathcal{X}}$ and $T \subseteq \mathcal{Y}$.
\end{definition}
We note that one can combine these notions in various ways, so one could look at $R^{1-sided, 1-way}_{\delta}$, for example, with the corresponding natural definition. 
We also consider communication with more than $2$ players in the {\it number-in-hand blackboard model}. In this setting there are $t$ players and an underlying $t$-th order communication tensor $M$ with entries corresponding to elements in $\mathcal{X}^1 \times \mathcal{X}^2 \times \cdots \times \mathcal{X}^t$. 
Again, a deterministic number-in-hand protocol partitions $M$ into combinatorial rectangles $A_1 \times A_2 \times \cdots A_t$, where $A_i \subseteq \mathcal{X}^i$ for $i = 1, 2, \ldots, t$, and each rectangle is labeled either $0$ or $1$. See Section \ref{sec:tensor} for a formal definition.

\subsection{Specific Communication Bounds}
We discuss a few problems that will be particularly useful for our applications. We will only need communication upper bounds and in specific models. Note that in this section, as is standard, we state bounds for communication problems with $n$-bit inputs. In our applications to masked low-rank approximation, we will typically apply the bounds when the input size is $\log n$.
\paragraph{Equality}
In the Equality problem, denoted $EQ$, there are two players Alice and Bob, holding strings $x, y \in \{0,1\}^n$, and the function $EQ(x,y) = 1$ if $x = y$, and $EQ(x,y) = 0$ otherwise. 
\begin{theorem}[\cite{kn97}, combining Corollaries 26 and 27 of 
\cite{BCKWY16}]\label{thm:eq}
$R^{1-way}_{\delta}(EQ) \leq (1-\delta) \log((1-\delta)^2/\delta)+5$, and $R^{1-way, 1-sided}_{\delta}(EQ) \leq \log(1/\delta) + 5$. 
\end{theorem}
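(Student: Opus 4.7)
The plan is to exhibit matching one-way public-coin protocols, both built on the same primitive: a shared random hash (fingerprint) of Alice's input. For the $1$-sided bound $R^{1\text{-}way, 1\text{-}sided}_{\delta}(EQ) \le \log(1/\delta)+5$, I would use the public random string to sample a hash function $h:\{0,1\}^n \to \{0,1\}^k$ from a pairwise independent family, with $k = \lceil \log(1/\delta)\rceil$. Alice transmits $h(x)$ using $k$ bits, and Bob outputs $1$ iff $h(x) = h(y)$ (this contributes one additional bit, which the definition counts toward $|\Pi|$). When $x = y$ the hashes always coincide, so the protocol never errs on the $1$-side; when $x \neq y$, pairwise independence gives $\Pr_h[h(x)=h(y)] \le 2^{-k} \le \delta$. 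Summing Alice's $k$ bits, Bob's output bit, and a small slack for rounding $\log(1/\delta)$ up to an integer yields the claimed $\log(1/\delta)+5$.

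For the $2$-sided bound $R^{1\text{-}way}_{\delta}(EQ) \le (1-\delta)\log((1-\delta)^2/\delta)+5$, the plan is to interpolate the $1$-sided protocol with a constant ``reject'' response by exploiting the shared randomness. Using the public coins, Alice \emph{abstains} (transmits a distinguished short codeword, upon which Bob outputs $0$) with carefully tuned probability $p$; otherwise she sends a hash of length $\ell$, with Bob outputting $1$ iff the hashes match. When $x = y$, the only error source is abstention, contributing error $p$; when $x \neq y$, the only error source is a hash collision during the non-abstain branch, contributing $(1-p)\cdot 2^{-\ell}$. Setting $p$ and $\ell$ so that both are $\le \delta$ while averaging the communication across the two branches (using a prefix-free encoding so that the ``abstain'' codeword is very short) yields expected length $(1-\delta)\log((1-\delta)^2/\delta)+O(1)$, matching the claim after accounting for the $+5$ slack. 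This is the construction of Brody--Chakrabarti--Kondapally--Woodruff--Yaroslavtsev cited as \cite{BCKWY16}.

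The main obstacle is the $2$-sided bound: tracking the exact tradeoff between the abstention probability $p$, the per-branch hash length $\ell$, and the cost of signaling which branch was taken requires a careful prefix-free encoding to avoid paying an extra bit in the worst case, and a tight calibration of $p$ and $\ell$ to produce exactly the factor $(1-\delta)\log((1-\delta)^2/\delta)$ rather than the looser $(1-\delta)\log((1-\delta)/\delta)$ that a naive split yields. The $1$-sided bound, by contrast, is essentially immediate from pairwise independent hashing and serves as a warm-up and as the inner primitive invoked in the $2$-sided analysis.
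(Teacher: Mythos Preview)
The paper does not prove this theorem; it is stated as a citation to \cite{kn97} and \cite{BCKWY16} with no accompanying argument, so there is no ``paper's own proof'' to compare against. Your sketch is therefore an attempt to reconstruct the cited results rather than to match an argument given here.

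Your one-sided protocol is the standard one and is correct: public-coin pairwise-independent hashing to $\lceil \log(1/\delta)\rceil$ bits has zero false-negative rate and collision probability at most $\delta$, and the $+5$ absorbs the output bit and rounding. This is exactly the textbook argument.

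For the two-sided bound, your high-level idea (randomly abstain and output $0$, else send a short fingerprint) is the right mechanism from \cite{BCKWY16}, but be careful about one point: the quantity $|\Pi|$ in this paper is defined as a \emph{worst-case} maximum over inputs (and implicitly over the public coin string, since each $\Pi_r$ is deterministic), not an expectation. Your analysis phrases the saving as ``averaging the communication across the two branches,'' which is an expected-length argument. To get a worst-case bound matching $(1-\delta)\log((1-\delta)^2/\delta)+O(1)$ you need the additional step from \cite{BCKWY16} that converts an expected-length public-coin protocol into a worst-case one of essentially the same length (or an equivalent direct construction); without that step your sketch only establishes the expected-communication version. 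This is not a fatal gap, but it is a missing ingredient you should name explicitly rather than fold into the $+5$.
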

We also can bound the nondeterminitic communication complexity of inequality, i.e., the function $NEQ(x,y)$ with $NEQ(x,y) = 1$ iff $x \neq y$.
\begin{theorem}\label{thm:neq}
$N(NEQ) \leq \lceil \log n \rceil + 2$.
\end{theorem}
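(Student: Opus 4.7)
The plan is to exhibit an explicit nondeterministic protocol for $NEQ$ whose total communication is at most $\lceil \log n \rceil + 2$ bits, and verify it satisfies the two requirements in Definition \ref{def:nonDet}.

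The intuition is that $x \neq y$ is witnessed by a single coordinate on which the strings disagree, so the ``guess'' naturally corresponds to the identity of such a coordinate. Concretely, I would have Alice nondeterministically guess an index $i \in [n]$ and a bit $b \in \{0,1\}$, and send the pair $(i, b)$ to Bob using $\lceil \log n \rceil + 1$ bits. Bob then outputs $1$ (using one more bit) if and only if $x_i = b$ \emph{according to Alice's message} and $y_i \neq b$. Since Alice's message only claims $x_i = b$ when it is true for her input, this corresponds to the standard definition where Alice sends $(i,x_i)$ and Bob checks $y_i \neq x_i$.

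For correctness, I would verify the two conditions of Definition \ref{def:nonDet}. If $x \neq y$, there exists at least one index $i^\star$ with $x_{i^\star} \neq y_{i^\star}$; on the guesses $(i^\star, x_{i^\star})$ the protocol outputs $1$, so the all-ones support of $NEQ$ is covered. Conversely, if $x = y$, then for every index $i$ and every bit $b$, we have $x_i = b \iff y_i = b$, so Bob's check fails and the protocol outputs $0$ for every possible guess. Equivalently, in the rectangle-covering view, the $2n$ rectangles $R_{i,b} = \{x : x_i = b\} \times \{y : y_i \neq b\}$ for $i \in [n]$, $b \in \{0,1\}$ together cover every $(x,y)$ with $x \neq y$ and avoid every $(x,y)$ with $x = y$; their count is $2^{\lceil \log n \rceil + 1} \le 2^{\lceil \log n \rceil + 2}$.

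The total bit count is $\lceil \log n \rceil$ bits for the index $i$, $1$ bit for $b$, and $1$ bit for Bob's answer, giving at most $\lceil \log n \rceil + 2$. There is no real obstacle here — the only mild subtlety is bookkeeping the constant, namely accounting for Alice's bit $b$ (or equivalently $x_i$) and Bob's output bit so that the additive constant matches the claimed ``$+2$''.
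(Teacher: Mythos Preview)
Your proposal is correct and takes essentially the same approach as the paper: Alice nondeterministically guesses an index $i$, communicates $i$ together with $x_i$, and Bob checks whether $y_i$ differs. The only cosmetic difference is that the paper has Bob send $y_i$ back and both players compare, whereas you have Bob output the answer directly; the bit count and the underlying $2n$-rectangle cover $R_{i,b}$ are identical.
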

\begin{proof}
Alice simply guesses an index at which $x$ and $y$ differ and sends this index (using $\lceil \log  n \rceil$ bits) along with the value of $x$ at this index to Bob. Bob sends the value of $y$ at this index and the players check if $x$ and $y$ differ at the index.
\end{proof}
Essentially the same protocol can be used to solve the negation of the disjointness problem, with $\neg DISJ(x,y) = 1$ only if there is some $k \in [n]$ with $x(k) = y(k) = 1$. We thus have:
\begin{theorem}[\cite{sherstovLect}]\label{thm:disj}
$N(\neg DISJ) \leq \lceil \log n \rceil + 2$.
\end{theorem}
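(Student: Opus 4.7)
The plan is to mimic almost verbatim the nondeterministic protocol used for $NEQ$ in Theorem \ref{thm:neq}. Recall that $\neg DISJ(x,y) = 1$ iff there exists at least one coordinate $k \in [n]$ with $x(k) = y(k) = 1$, i.e., iff $x$ and $y$ share a common $1$. So such a $k$ is a natural witness to non-disjointness, and Alice (in the nondeterministic model) can simply guess it.

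Concretely, the protocol I would describe is the following. Alice nondeterministically guesses an index $k \in [n]$ together with a bit $b$ she claims equals $x(k)$; she transmits $k$ (using $\lceil \log n\rceil$ bits) and $b$ (using $1$ bit) to Bob. Bob then outputs $1$ iff $b = 1$, $x(k) = 1$ (which he cannot directly check but Alice's message commits to), and $y(k) = 1$; formally, Bob sends a single verification bit equal to $b \land y(k) \land [b = x(k)]$. This uses $\lceil \log n\rceil + 2$ bits of communication in total, matching the claimed bound.

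The correctness argument is the standard one for nondeterministic protocols. If $\neg DISJ(x,y) = 1$, there is some $k^\star$ with $x(k^\star) = y(k^\star) = 1$; Alice's guess $(k^\star, 1)$ causes the protocol to output $1$, as required by Definition \ref{def:nonDet}. Conversely, if $\neg DISJ(x,y) = 0$, then for every $k$, either $x(k) = 0$ or $y(k) = 0$, so no choice of $(k,b)$ by Alice satisfies $b = x(k) = 1$ and $y(k) = 1$ simultaneously, and the protocol outputs $0$ on every nondeterministic branch.

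I don't expect a real obstacle here — the content is essentially the same as Theorem \ref{thm:neq}, with ``$x$ and $y$ differ at $k$'' replaced by ``$x$ and $y$ both equal $1$ at $k$.'' The only nuance is being careful about the bit accounting: $\lceil \log n\rceil$ bits for the index, one bit for Alice to declare $x(k)$, and one bit for Bob's accept/reject decision. If the paper's convention counts the output bit inside $|\Pi|$ (as suggested by the $+2$ in Theorem \ref{thm:neq}), then the accounting matches and the proof is complete.
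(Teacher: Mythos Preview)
Your approach is the same as the paper's --- it just points back to the $NEQ$ protocol, replacing ``$x$ and $y$ differ at $k$'' with ``$x(k)=y(k)=1$.'' The bit count $\lceil \log n\rceil + 2$ is also what you get.

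There is one slip in your protocol description, though. You have Alice \emph{nondeterministically guess} the bit $b$ and then rely on $[b = x(k)]$ in the acceptance condition, while simultaneously noting Bob cannot check this. As written that breaks soundness: if $x$ and $y$ are disjoint but $y(k)=1$ for some $k$, Alice could send $(k,1)$ regardless of $x(k)$, and Bob (who only sees $b$ and $y(k)$) would accept. The fix is exactly what the paper's $NEQ$ protocol does: Alice's only nondeterministic choice is the index $k$; given $k$ she \emph{deterministically} sends $x(k)$, and Bob sends $y(k)$. The output $x(k)\land y(k)$ is then sound because Alice's bit is tied to her input, not guessed freely.
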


\paragraph{Greater-Than}
In the Greater-Than problem, denoted $GT$, there are two players Alice and Bob, holding integers $x, y \in \{0, 1, 2, \ldots, n-1\}$, and the function $GT(x,y) = 1$ if $x > y$, and $GT(x,y) = 0$ otherwise.
\begin{theorem}[\cite{n94}]\label{thm:gt}
$R_{\delta}(GT) = O(\log(n/\delta))$. 
\end{theorem}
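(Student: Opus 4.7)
The plan is to apply Nisan's binary-search-with-equality protocol. View the inputs $x,y \in \{0,1,\ldots,n-1\}$ as bit strings of length $\log n$. The key observation is that $GT(x,y) = 1$ if and only if, at the first (most significant) bit position where $x$ and $y$ disagree, $x$ has a $1$ and $y$ has a $0$. Thus computing $GT$ reduces to finding the length $L^*$ of the longest common prefix of $x$ and $y$, followed by a single-bit exchange.

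First I would describe the protocol. Alice and Bob run a binary search for $L^*$ over the integer interval $\{0,1,\ldots,\log n\}$, maintaining a current range $[\ell,r]$ known to contain $L^*$. At each step they pick the midpoint $m = \lfloor(\ell+r)/2\rfloor$ and use public randomness to run an equality test on the length-$m$ prefixes of $x$ and $y$; if the test returns ``equal'' they update $\ell \gets m$, and otherwise they update $r \gets m$. After $\lceil \log(\log n + 1) \rceil = O(\log \log n)$ iterations they have pinned down $L^*$, and Alice sends her bit at position $L^*+1$, which together with Bob's corresponding bit determines the answer.

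Second I would account for communication and error. By the one-sided version of Theorem \ref{thm:eq}, equality on $\log n$-bit strings can be implemented with error probability at most $\eta$ using $O(\log(1/\eta))$ bits of public-coin communication. Setting $\eta = \delta/(c \log \log n)$ for an appropriate absolute constant $c$ and union-bounding over the $O(\log \log n)$ equality tests yields total error at most $\delta$. The total communication is $O(\log\log n \cdot \log(\log\log n/\delta))$ bits, which can be bounded by $O(\log(n/\delta))$ after absorbing lower-order iterated-log factors (using $(\log\log n)^2 \leq \log n$ for the additive term and $\log\log n \cdot \log(1/\delta) = O(\log n + \log(1/\delta))$ in the regime of interest).

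The only real obstacle is the bookkeeping on error budget and bit count: one must be careful to allocate the failure probability across the $O(\log\log n)$ equality checks, and to verify that the resulting product simplifies to the clean $O(\log(n/\delta))$ form. This is a routine calculation once the protocol is in hand, and the correctness of the protocol itself follows immediately from the structural reduction to longest common prefix combined with the correctness of Theorem \ref{thm:eq}.
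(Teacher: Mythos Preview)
The paper does not prove this result; it simply cites Nisan. Your protocol is exactly Nisan's binary-search-via-equality approach, and both the protocol and its correctness argument are fine.

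The gap is in the final arithmetic step. You correctly arrive at total communication $O\bigl(\log\log n \cdot \log(\log\log n/\delta)\bigr)$ and then assert this is $O(\log(n/\delta)) = O(\log n + \log(1/\delta))$. The term $\log\log n \cdot \log\log\log n$ is indeed absorbed by $\log n$, but the cross term $\log\log n \cdot \log(1/\delta)$ is \emph{not} $O(\log n + \log(1/\delta))$ in general: for $\delta = 1/n$ it becomes $\log n \cdot \log\log n$, which exceeds the target $O(\log n)$ by a $\log\log n$ factor. The qualifier ``in the regime of interest'' does not close this, since the theorem asserts the bound for all $\delta$.

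The patch is immediate once noticed. The trivial deterministic protocol in which Alice sends her $\lceil \log n\rceil$-bit input already gives $R_\delta(GT) \le \log n + 1 = O(\log(n/\delta))$ for every $\delta \le 1$, so the theorem as literally stated (inputs in $\{0,\ldots,n-1\}$) is in fact trivial; if you want to keep the randomized protocol, just take the minimum of your bound and $\log n + 1$. Note, however, that the paper's surrounding text says bounds in that section are stated for $n$-bit inputs, and the applications use the theorem that way; under that reading the bound is genuinely nontrivial, your binary search has $O(\log n)$ rounds rather than $O(\log\log n)$, and the same simplification issue arises more sharply (you get $O(\log n \cdot \log(\log n/\delta))$, again off by a $\log\log n$-type factor from the claimed $O(\log(n/\delta))$).
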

\paragraph{Equality-Modulo-$p$}
In the Equality-Modulo-$p$ problem, denoted $EQ_p$, there are two players Alice and Bob, holding integers $x, y \in \{0, 1, 2, \ldots, n-1\}$, and the function $EQ_p(x,y) = 0$ if $x-y = 0 \bmod p$, and otherwise $EQ_p(x,y) = 1$. 

\begin{theorem}\label{thm:eqp}
$D(EQ_p) \leq \lceil \log p \rceil + 1$, 
$R^{1-way}_{\delta}(EQ_p) \leq (1-\delta) \log((1-\delta)^2/\delta)+5$, and \\$R^{1-way, 1-sided}_{\delta}(EQ_p) \leq \log(1/\delta) + 5$.
\end{theorem}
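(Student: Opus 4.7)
The plan is to reduce $EQ_p$ to the standard equality problem $EQ$ on inputs from $\{0, 1, \ldots, p-1\}$, exploiting the fact that modular reduction is a purely local operation. Concretely, let $x' \eqdef x \bmod p$ and $y' \eqdef y \bmod p$. Then $EQ_p(x,y) = 1$ iff $x' \neq y'$, so $EQ_p(x,y) = \neg EQ(x',y')$. Since Alice can compute $x'$ from $x$ without any communication, and similarly for Bob and $y'$, any protocol for $EQ$ on inputs of bit-length $\lceil \log p \rceil$ yields a protocol for $EQ_p$ of the same communication cost, simply by prepending a local reduction step and flipping the final output bit.

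First I would establish the deterministic bound by the obvious protocol: Alice sends $x \bmod p$ in $\lceil \log p \rceil$ bits, Bob compares with $y \bmod p$ and sends one output bit, giving $D(EQ_p) \leq \lceil \log p \rceil + 1$. Next, for the two-sided bound, both players locally reduce modulo $p$ and then simulate the 1-way randomized protocol for $EQ$ from Theorem \ref{thm:eq}, with the output player negating the final bit. Since local preprocessing incurs no communication and negation does not change the error probability on either side, the bound $R^{1-way}_{\delta}(EQ_p) \leq (1-\delta)\log((1-\delta)^2/\delta) + 5$ follows immediately.

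For the 1-sided bound, I would use the same local reduction followed by the standard hash-based 1-sided $EQ$ protocol: Alice and Bob sample a random hash from shared randomness, Alice sends $h(x')$, and Bob declares ``equal'' iff $h(x') = h(y')$, outputting the negation. By construction this protocol is always correct when $x' = y'$ (i.e.\ when $EQ_p = 0$) and errs only with probability at most $\delta$ when $x' \neq y'$ (i.e.\ when $EQ_p = 1$), with communication cost $\log(1/\delta) + 5$ from Theorem \ref{thm:eq}, yielding the claimed bound $R^{1-way,1-sided}_\delta(EQ_p) \leq \log(1/\delta) + 5$.

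There is essentially no obstacle here: the entire argument is a black-box reduction to Theorem \ref{thm:eq}. The only subtlety worth stating explicitly is that because $EQ_p$ is the \emph{negation} of $EQ$ composed with modular reduction, the side on which zero error is guaranteed by the inherited 1-sided protocol is the side where $x \equiv y \pmod p$; this is consistent with the convention that 1-sided error allows the errorless side to be either $f=0$ or $f=1$, depending on which is natural for the problem.
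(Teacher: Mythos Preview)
Your proposal is correct and takes essentially the same approach as the paper: reduce both inputs modulo $p$ locally and then invoke Theorem~\ref{thm:eq} (and the trivial deterministic protocol) on the resulting $\lceil \log p \rceil$-bit strings. The paper's proof is just a terser version of what you wrote; your explicit remark about the negation and the side on which the one-sided protocol is errorless is a point the paper leaves implicit.
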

\begin{proof}
Note that the players can replace their inputs with $x \bmod p$
and $y \bmod p$ without loss of generality, and the problem is now equivalent
to testing if $x = y$ on $\lceil \log p \rceil$-length bit strings. The
bounds now follow from Theorem \ref{thm:eq}. $D(EQ_p)$ follows since Alice can just send $x \bmod p$ using $\lceil \log p \rceil$ bits and Bob can send the answer using $1$ bit.
\end{proof}

\section{Bicriteria Approximation from Communication Complexity}\label{sec:mainBound}  

In this section we prove our main results, Theorems \ref{thm:1} and \ref{thm:2}, which connect the randomized communication complexity of the binary matrix $W$ to the rank required to solve Problem \ref{def:main} efficiently up to small additive error. We prove a general theorem connecting the rank to $R_{\delta_1,\delta_2}(f)$. Both Theorems \ref{thm:1} and \ref{thm:2} follow as corollaries if we consider the 1-sided error complexity $R_{\delta}^{1-sided}(\neg f) = R_{\delta,0}(f)$ and the 2-sided error complexity $R_{\delta}(f) \eqdef R_{\delta,\delta}(f)$ respectively (Definitions \ref{def:rand} and \ref{def:1sided}).

\begin{theorem}[Randomized Communication Complexity $\rightarrow$ Bicriteria Approximation]\label{thm:rand}
Consider $W \in \{0,1\}^{n \times n}$ and let $f$ be the function computed by it. For $k' \ge k \cdot 2^{R_{\epsilon_1, \epsilon_2}(f)}$, and any entrywise norm $\norm{\cdot }_\star$ (Def. \ref{def:norm}), for any $L$ satisfying $\norm{A \circ W - L}_\star \le \min_{rank-k'\ \hat L} \norm{A \circ W - \hat L}_\star + \Delta$:
\begin{align*}
\norm{(A-L) \circ W}_\star \le OPT + \epsilon_1 \norm{A \circ W}_\star + \epsilon_2 \norm{L_{opt} \circ (1-W)}_\star + \Delta,
\end{align*}
where $OPT = \min_{\rank-k\ \hat L} \norm{(A-\hat L)\circ W}_\star$ and $L_{opt}$ is any rank-$k$ matrix achieving $OPT$.
\end{theorem}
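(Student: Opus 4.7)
The plan is to use a randomized protocol $\Pi$ achieving $R_{\epsilon_1,\epsilon_2}(f)$ bits of communication to produce a distribution over partitions of $[n] \times [n]$ into at most $2^{R_{\epsilon_1,\epsilon_2}(f)}$ labeled rectangles (Definition~\ref{def:randGen}), and from this distribution build a rank-$k'$ comparison matrix $\bar L$ that mimics a fixed rank-$k$ optimum $L_{opt}$ on entries the protocol labels $1$ and is zero elsewhere. Since $L$ is a near-best rank-$k'$ approximation of $A \circ W$, a good expected-error bound for $\bar L$ against $A \circ W$ will transfer through the $\Delta$-optimality of $L$ to give the desired bound on $\|(A-L)\circ W\|_\star$.

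Concretely, draw a random partition $\{R_1, \ldots, R_m\}$ from $\Pi$'s distribution with $m \le 2^{R_{\epsilon_1,\epsilon_2}(f)}$, and let $W_\Pi$ be the associated label matrix. For each $R_i = S_i \times T_i$ labeled $1$, let $L_{opt}|_{R_i}$ be the matrix equal to $L_{opt}$ on $R_i$ and zero elsewhere; writing $L_{opt}|_{R_i} = D_{S_i} L_{opt} D_{T_i}$ for diagonal indicator matrices $D_{S_i}, D_{T_i}$ shows it has rank at most $k$. Define
\begin{align*}
\bar L \eqdef \sum_{i:\, R_i \textrm{ labeled } 1} L_{opt}|_{R_i},
\end{align*}
so $\bar L$ has rank at most $k \cdot 2^{R_{\epsilon_1,\epsilon_2}(f)} \le k'$ and satisfies $\bar L_{x,y} = L_{opt}(x,y)$ when $W_\Pi(x,y) = 1$ and $\bar L_{x,y} = 0$ otherwise. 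I would then bound $\E_\Pi[\|A \circ W - \bar L\|_\star]$ entry by entry, using that $\|\cdot\|_\star$ is entrywise. For $(x,y)$ with $W_{x,y}=1$ the protocol mislabels with probability at most $\epsilon_1$, giving expected contribution at most $g(|A_{x,y} - L_{opt}(x,y)|) + \epsilon_1\, g(|A_{x,y}|)$ (where the second term comes from the ``wrong rectangle'' case, in which $\bar L_{x,y} = 0$). For $(x,y)$ with $W_{x,y}=0$ the protocol falsely labels $1$ with probability at most $\epsilon_2$, in which case $\bar L_{x,y} = L_{opt}(x,y)$ leaks outside the mask, contributing expected $\epsilon_2\, g(|L_{opt}(x,y)|)$. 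Summing yields
\begin{align*}
\E_\Pi[\|A \circ W - \bar L\|_\star] \le OPT + \epsilon_1 \|A \circ W\|_\star + \epsilon_2 \|L_{opt} \circ (1-W)\|_\star,
\end{align*}
so some realization of $\bar L$ achieves this bound deterministically.

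To finish, the near-optimality of $L$ among rank-$k'$ approximations of $A \circ W$ gives $\|A \circ W - L\|_\star \le \|A \circ W - \bar L\|_\star + \Delta$, and monotonicity of $g$ (replacing each $g(|L_{x,y}|)$ on a $W_{x,y}=0$ entry by $g(0) \le g(|L_{x,y}|)$) yields $\|(A-L)\circ W\|_\star \le \|A \circ W - L\|_\star$, producing the claimed bound. I expect the only real subtlety to be the careful entrywise accounting of $\Pi$'s two distinct failure modes: missed $1$-entries leave an $A_{x,y}$ residual (producing the $\epsilon_1 \|A\circ W\|_\star$ term), while false $1$-entries leak $L_{opt}(x,y)$ outside the mask (producing the $\epsilon_2 \|L_{opt}\circ(1-W)\|_\star$ term). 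Theorems~\ref{thm:1} and~\ref{thm:2} then follow by specializing to $R_{\epsilon,0}(f) = R^{1-sided}_\epsilon(\neg f)$, which eliminates the $L_{opt}\circ(1-W)$ term, and to $R_{\epsilon,\epsilon}(f) = R_\epsilon(f)$, respectively.
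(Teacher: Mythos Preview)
Your proposal is correct and follows essentially the same approach as the paper: build a rank-$k'$ comparison matrix $\bar L$ from a randomized protocol's rectangle partition, bound its error against $A\circ W$ in expectation (yielding a good deterministic realization), and transfer the bound to $L$ via near-optimality. The one small difference is that on each $1$-labeled rectangle $R$ the paper sets $\bar L$ to the \emph{locally optimal} rank-$k$ approximation of $A_R\circ W_R$, whereas you simply restrict $L_{opt}$ to $R$; your choice is rank-$\le k$ as well and makes the on-support comparison an equality rather than an inequality, so it is a harmless (even slightly cleaner) simplification. The paper's choice becomes relevant only later (e.g., in the binary and nonnegative variants) where one must actually \emph{compute} the per-rectangle approximations rather than merely assert their existence.
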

\begin{proof}
As discussed (Def. \ref{def:randGen}), $R_{\epsilon_1,\epsilon_2}(f)$ is the minimum number so that there is a distribution on protocols inducing partitions of $W$, each containing at most $2^{R_{\epsilon_1,\epsilon_2}(f)}$ rectangles, such that (1) for every $x,y \in \{0,1\}^{\log  n}$ with $f(x,y) = 1$, $(x,y)$ lands in a rectangle labeled $1$ with probability $\ge 1- \epsilon_1$ and (2) for every $x,y \in \{0,1\}^{\log  n}$ with $f(x,y) = 0$, $(x,y)$ lands in a rectangle  labeled $0$ with probability $\ge 1-\epsilon_2$. In other words, letting $W_\Pi$ be the (random) matrix corresponding to the function computed by the protocol: (1) $W \circ (1-W_\Pi)$ has each entry equal to $1$ with probability $\le \epsilon_1$ and (2) $W_\Pi \circ (1-W)$ has each entry  equal to $1$ with probability $\le \epsilon_2$.
Thus, fixing some $L_{opt}$:
\begin{align*}
\E_{protocol\ \Pi} \left [\norm{A \circ W \circ (1-W_\Pi)}_\star + \norm{L_{opt} \circ W_\Pi \circ (1-W)}_\star\right ] &\le \epsilon_1 \norm{A \circ W}_\star + \epsilon_2 \norm{L_{opt} \circ (1- W)}_\star.
\end{align*}
Thus, there is at least one protocol $\Pi$ (inducing a partition with $\le 2^{R_{\epsilon_1,\epsilon_2}(f)}$ rectangles) with:
\begin{align}\label{eq:plow}\norm{A \circ W \circ (1-W_\Pi)}_\star + \norm{L_{opt} \circ W_\Pi \circ (1-W)}_\star \le \epsilon_1 \norm{A \circ W}_\star + \epsilon_2 \norm{L_{opt} \circ (1-W)}_\star .
\end{align}
Let $P_1$ be the set  of rectangles on which the protocol achieving \eqref{eq:plow} returns $1$ and $P_0$ be the set on which it returns $0$.
 For any $R \in P_1$ let $L^R = \argmin_{\rank-k\ \hat L} \norm{A_R \circ W_R - \hat L}_\star$ (note that $L^R$ is the size of $R$). Let $\bar L^R$ be the $n \times n$ matrix equal to $L^R$ on $R$ and $0$ elsewhere. Let 
$\bar L = \sum_{R \in P_1} \bar L^R$. Note that $\bar L$ has rank at most $\sum_{R \in P_1} \rank(\bar L^R) \le k \cdot |P_1| \le k \cdot 2^{R_{\epsilon_1, \epsilon_2}(f)}$. Thus, by the assumption that $L$ satisfies $\norm{A \circ W - L}_\star \le \min_{rank-k'\ \hat L} \norm{A \circ W - \hat L}_\star + \Delta$: 
\begin{align}\label{eq:compare}
\norm{(A - L) \circ W}_\star \le \norm{A \circ W-L}_\star &\le \norm{A \circ W-\bar L}_\star + \Delta \nonumber \\
&= \norm{(A \circ W -\bar L) \circ W_\Pi}_\star + \norm{(A \circ W -\bar L) \circ (1-W_\Pi)}_\star+ \Delta \nonumber \\
&=  \norm{(A \circ W -\bar L) \circ W_\Pi}_\star  + \norm{A \circ W \circ (1-W_\Pi)}_\star + \Delta,
\end{align}
where the third line follows since $\bar L$ is $0$ outside the support of $W_\Pi$ (i.e., outside of the rectangles in $P_1$). Since $\bar L$ is equal to the best rank-$k$ approximation to $A_R \circ W_R$ on each rectangle $R$ in $P_1$, and since these rectangles partition the support of $W_\Pi$:
\begin{align*}
\norm{(A \circ W -\bar L) \circ W_\Pi}_\star &\le \norm{(A \circ W -L_{opt}) \circ W_\Pi}_\star \\
&= \norm{(A -L_{opt}) \circ W \circ W_\Pi}_\star + \norm{L_{opt}\circ (1- W) \circ W_\Pi}_\star\\
&\le OPT + \norm{L_{opt}\circ (1- W) \circ W_\Pi}_\star.
\end{align*}
Plugging back into \eqref{eq:compare} and applying \eqref{eq:plow}:
\begin{align*}
\norm{(A - L) \circ W}_\star &\le OPT + \norm{L_{opt}\circ (1- W) \circ W_\Pi}_\star  + \norm{A \circ W \circ (1-W_\Pi)}_\star + \Delta\\
&\le OPT + \epsilon_1 \norm{A \circ W}_\star + \epsilon_2 \norm{L_{opt} \circ (1-W)}_\star + \Delta,
\end{align*}
which completes the theorem.
\end{proof}
\begin{proof}[Proof of Theorems \ref{thm:1} and \ref{thm:2}]
Theorems \ref{thm:1} and \ref{thm:2} follow by applying Theorem \ref{thm:rand} with $\epsilon_1 = \epsilon_2 = \epsilon$ and $\epsilon_1 = \epsilon$, $\epsilon_2 = 0$ respectively, and noting that $ \norm{A \circ W}_\star \le  \norm{A}_\star$ and $\norm{L_{opt} \circ (1-W)}_\star \le \norm{L_{opt}}_\star$.
When $\norm{\cdot }_\star$ is the squared Frobenius norm, $L$ satisfying $\norm{A \circ W - L}_\star \le \min_{rank-k'\ \hat L} \norm{A \circ W - \hat L}_\star + \Delta$ for $\Delta = \epsilon \norm{(A\circ W)-(A \circ W)_{k'}}_F^2 \le \epsilon \norm{A \circ W}_F^2$ can be computed with high probability in $O(\nnz(A)) + n \cdot \poly(k'/\epsilon)$ time.
\end{proof}

\subsection{Applications of Main Theorem}

We now instantiate Theorem \ref{thm:rand} for a number of common mask patterns. See Table \ref{tab:results} for a summary. Note that the additive error bounds achieved are stated in terms of $\norm{A\circ W}_\star$ and $\norm{L_{opt} \circ (1-W)}_\star$, which are only smaller than $\norm{A}_\star$ and $\norm{L_{opt}}_\star$ respectively.
We start with the case when $W$ is the negation of a diagonal matrix or a block diagonal matrix, corresponding to the Low-Rank Plus Diagonal (LRPD) and Low-Rank Plus Block Diagonal (LRPBD) matrix approximation problems.
\begin{corollary}[Low-Rank Plus Diagonal Approximation]\label{cor:lrpd}
Let $W = 1- I$ where $I$ is the $n \times n$ identity matrix. Then for $k' = O \left (\frac{k}{\epsilon} \right )$ and $L$ with $\norm{A \circ W - L}_\star \le \min_{\rank-k'\ \hat L} \norm{A \circ W - \hat L}_\star + \epsilon \norm{A \circ W}_\star$:
\begin{align*}
\norm{(A-L) \circ W}_\star \le OPT + 2 \epsilon \norm{A \circ W}_\star.
\end{align*}
If $\norm{\cdot}_\star = \norm{\cdot }_F^2$, such an $L$ can be computed with high probability in $O(\nnz(A)) + n \poly(k/\epsilon)$ time.
\end{corollary}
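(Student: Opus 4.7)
The plan is to derive Corollary \ref{cor:lrpd} as an immediate instantiation of Theorem \ref{thm:rand} (equivalently, Theorem \ref{thm:1}), once the relevant $1$-sided randomized communication complexity of the mask is identified. Since $W = 1 - I$ satisfies $W_{x,y} = 1$ iff $x \neq y$ (viewing $x,y \in \{0,1\}^{\log n}$ as indices), the associated function is $f = NEQ$ on $\log n$-bit strings, so $\neg f = EQ$. This is precisely the equality problem, for which the paper has already recorded a very efficient $1$-sided protocol.

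First I would apply Theorem \ref{thm:eq}, which gives $R^{1-way, 1-sided}_{\epsilon}(EQ) \leq \log(1/\epsilon) + 5$. Since a $1$-way protocol is a valid $2$-way protocol, this yields $R^{1-sided}_{\epsilon}(\neg f) \leq \log(1/\epsilon) + 5$, and hence $2^{R^{1-sided}_{\epsilon}(\neg f)} = O(1/\epsilon)$.

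Next, I would invoke Theorem \ref{thm:rand} with $\epsilon_1 = \epsilon$, $\epsilon_2 = 0$ (this is the $1$-sided case used to obtain Theorem \ref{thm:1}). Setting $k' = k \cdot 2^{R^{1-sided}_{\epsilon}(\neg f)} = O(k/\epsilon)$ and $\Delta = \epsilon \norm{A \circ W}_\star$ (which is exactly the additive slack assumed on $L$ in the corollary statement), the theorem guarantees
\begin{align*}
\norm{(A-L) \circ W}_\star &\le OPT + \epsilon \norm{A \circ W}_\star + 0 \cdot \norm{L_{opt} \circ (1-W)}_\star + \Delta \\
&= OPT + 2\epsilon \norm{A \circ W}_\star,
\end{align*}
matching the claimed bound exactly. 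The fact that the $\epsilon_2$-term vanishes (i.e., $\bar L$ can be taken to put \emph{no} mass on the diagonal) is what lets us avoid any dependence on $\norm{L_{opt}}_\star$.

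Finally, for the runtime claim in the Frobenius case, I would reuse the sketching fact cited at the end of the proof of Theorems \ref{thm:1} and \ref{thm:2}: an $L$ satisfying $\norm{A \circ W - L}_F^2 \le \min_{\text{rank-}k'\ \hat L}\norm{A \circ W - \hat L}_F^2 + \epsilon \norm{A \circ W}_F^2$ can be computed with high probability in $O(\nnz(A)) + n \cdot \poly(k'/\epsilon) = O(\nnz(A)) + n \cdot \poly(k/\epsilon)$ time via the input-sparsity algorithm of \cite{clarkson2015input}. There is no real obstacle in this proof; the content is entirely in Theorem \ref{thm:rand} and the $O(\log(1/\epsilon))$-bit equality protocol, and the corollary is a one-line plug-in.
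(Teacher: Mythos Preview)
Your proposal is correct and follows essentially the same argument as the paper: identify $f = NEQ$ so $\neg f = EQ$, apply Theorem~\ref{thm:eq} to get $R^{1\text{-}sided}_\epsilon(EQ) \le \log(1/\epsilon)+5$, hence $2^{R^{1\text{-}sided}_\epsilon(\neg f)} = O(1/\epsilon)$, and plug into Theorem~\ref{thm:rand} with $\epsilon_1=\epsilon$, $\epsilon_2=0$, $\Delta=\epsilon\norm{A\circ W}_\star$. Your treatment is in fact slightly more explicit than the paper's (which just says ``follows directly from Theorem~\ref{thm:rand}''), but the content is identical.
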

\begin{proof}
The function $f$ corresponding to $W$ is the inequality function $NEQ$. We have $R_\epsilon^{1-sided}(\neg NEQ) = R_\epsilon^{1-sided}(EQ) $, which by Theorem \ref{thm:eq} is bounded by $\log(1/\epsilon) + 5$. Thus $2^{R_\epsilon^{1-sided}(\neg NEQ)} \le \frac{32}{\epsilon}$. The corollary then follows directly from Theorem \ref{thm:rand}.
\end{proof}

\begin{corollary}[Low-Rank Plus Block Diagonal Approximation]\label{cor:lrpb} Consider any partition $B_1 \cup B_2 \cup \ldots \cup B_b = [n]$ and let $W$ be the matrix with $W_{i,j} = 0$ if $i,j \in B_k$ for some $k$ and $W_{i,j} = 1$ otherwise.  
Then for $k' = O\left (\frac{k}{\epsilon}\right )$ and $L$ with $\norm{A \circ W - L}_\star \le \min_{\rank-k'\ \hat L} \norm{A \circ W - \hat L}_\star + \epsilon \norm{A \circ W}_\star$:
\begin{align*}
\norm{(A-L) \circ W}_\star \le OPT + 2 \epsilon \norm{A \circ W}_\star.
\end{align*}
If $\norm{\cdot}_\star = \norm{\cdot }_F^2$, such an $L$ can be computed with high probability in $O(\nnz(A)) + n \poly(k/\epsilon)$ time.
\end{corollary}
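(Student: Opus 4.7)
The plan is to mirror the proof of Corollary \ref{cor:lrpd} essentially verbatim, reducing the communication problem associated to the block-diagonal mask to ordinary Equality. First I would identify the function $f: [n]\times[n]\to\{0,1\}$ computed by $W$: defining $\beta: [n]\to[b]$ to be the map sending each index to the block containing it, we have $f(i,j) = 0$ iff $\beta(i) = \beta(j)$, so $\neg f(i,j) = 1$ iff Alice's and Bob's block-identifiers agree. Equivalently, $\neg f$ is the Equality function $EQ$ applied to the $\lceil\log b\rceil$-bit strings $\beta(i)$ and $\beta(j)$.

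Next I would bound the one-sided randomized communication complexity. Since $\beta$ is known to both players (it is determined by the mask $W$), Alice and Bob can locally compute $\beta(i)$ and $\beta(j)$ and then run the protocol of Theorem \ref{thm:eq} for $EQ$. Crucially, the bound $R^{1\text{-}sided}_\epsilon(EQ) \le \log(1/\epsilon) + 5$ from Theorem \ref{thm:eq} is independent of the input length, so it applies regardless of the number of blocks $b$. Therefore
\begin{equation*}
2^{R^{1\text{-}sided}_\epsilon(\neg f)} \;\le\; 32/\epsilon.
\end{equation*}

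Finally I would invoke Theorem \ref{thm:rand} with $\epsilon_1 = \epsilon$ and $\epsilon_2 = 0$ (i.e., using the one-sided bound $R^{1\text{-}sided}_\epsilon(\neg f) = R_{\epsilon,0}(f)$). This yields a bicriteria rank $k' = k \cdot 2^{R^{1\text{-}sided}_\epsilon(\neg f)} = O(k/\epsilon)$, and plugging $\Delta = \epsilon\|A\circ W\|_\star$ into the conclusion of Theorem \ref{thm:rand} gives the stated additive error $OPT + 2\epsilon\|A\circ W\|_\star$. The runtime claim for the Frobenius case follows from the same input-sparsity low-rank approximation routine cited in the proof of Theorems \ref{thm:1} and \ref{thm:2}, applied to $A \circ W$.

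There is no real obstacle here beyond checking that the Equality protocol can be run on the block labels rather than on the indices themselves; the degenerate cases (a single block, or all singleton blocks which reduces to Corollary \ref{cor:lrpd}) are automatically covered since the bound $\log(1/\epsilon) + 5$ does not degrade with $b$.
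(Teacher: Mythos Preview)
Your proposal is correct and follows essentially the same approach as the paper: identify $\neg f$ with Equality on block labels, apply the one-sided bound from Theorem \ref{thm:eq} to get $2^{R^{1\text{-}sided}_\epsilon(\neg f)} \le 32/\epsilon$, and plug into Theorem \ref{thm:rand} with $\epsilon_2 = 0$ and $\Delta = \epsilon\norm{A\circ W}_\star$. Your write-up is in fact slightly more explicit than the paper's, spelling out the block-label map $\beta$ and noting that the Equality bound is independent of input length.
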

\begin{proof}
The function $f$ corresponding to $W$ is the inequality function $NEQ$ where $x,y \in [n]$ are identified with $j,k \in [b]$ if block $B_j$ contains $x$ and $B_k$ contains $y$. The randomized communication complexity $\neg f$ is thus bounded by the complexity  of equality. By Theorem \ref{thm:eq}, $R_\epsilon^{1-sided}(EQ) \le \log(1/\epsilon) + 5$ and so $2^{R_\epsilon^{1-sided}(f)} \le \frac{32}{\epsilon}$, which gives the corollary.
\end{proof}

We next consider the Low-Rank Plus Sparse (LRPS) and Low-Rank Plus Block Sparse (LRPBS) approximation problems, where $W$ has at most $t$ nonzeros (or nonzero blocks) per row. Note that this setting strictly generalizes the Low-Rank Plus (Block) Diagonal Problem, and our most general Corollary  \ref{cor:lrpbs} in fact directly implies Corollaries \ref{cor:lrpd}, \ref{cor:lrpb}, and \ref{cor:lrps}.

\begin{corollary}[Low-Rank Plus Sparse Approximation]\label{cor:lrps}
Let $W \in \{0,1\}^{n \times n}$ have at most  $t$ zeros in each row. Then for $k' = O\left (\frac{k t}{\epsilon}\right )$ and $L$ with $\norm{A \circ W - L}_\star \le \min_{\rank-k'\ \hat L} \norm{A \circ W - \hat L}_\star + \epsilon \norm{A \circ W}_\star$:
\begin{align*}
\norm{(A-L) \circ W}_\star \le OPT + 2 \epsilon \norm{A \circ W}_\star.
\end{align*}
If $\norm{\cdot}_\star = \norm{\cdot }_F^2$, such an $L$ can be computed with high probability in $O(\nnz(A)) + n \poly(k t /\epsilon)$ time.
\end{corollary}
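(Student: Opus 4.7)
The plan is to apply Theorem~\ref{thm:1} to the function $f : \{0,1\}^{\log n} \times \{0,1\}^{\log n} \to \{0,1\}$ computed by $W$. Because each row of $W$ has at most $t$ zeros, for every $x$ there is a set $S_x \subseteq [n]$ with $|S_x| \le t$ such that $f(x,y) = 0$ iff $y \in S_x$. Hence $\neg f(x,y) = \mathbf{1}[y \in S_x]$, and everything reduces to bounding $R^{1-sided}_\epsilon(\neg f)$.

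I would then exhibit a short public-coin, $1$-sided error protocol for $\neg f$: using the shared random string, sample a hash $h : \{0,1\}^{\log n} \to [M]$ from a pairwise-independent family with $M = \lceil t/\epsilon \rceil$; Bob sends $h(y)$ using $O(\log(t/\epsilon))$ bits; Alice outputs $1$ iff $h(y)=h(z)$ for some $z \in S_x$, and $0$ otherwise. When $\neg f(x,y)=1$, the element $y$ is in $S_x$ and so contributes its own fingerprint, hence Alice always outputs $1$: there is no error on the $1$-side. When $\neg f(x,y)=0$, a union bound over the at most $t$ potential collisions $h(y)=h(z)$, $z \in S_x$, gives error probability at most $t/M \le \epsilon$. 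This yields $R^{1-sided}_\epsilon(\neg f) = O(\log(t/\epsilon))$, so $2^{R^{1-sided}_\epsilon(\neg f)} = O(t/\epsilon)$.

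Plugging this bound into Theorem~\ref{thm:1} (equivalently, Theorem~\ref{thm:rand} with $\epsilon_1 = \epsilon$, $\epsilon_2 = 0$) gives a bicriteria rank $k' = k \cdot 2^{R^{1-sided}_\epsilon(\neg f)} = O(kt/\epsilon)$ and the error bound $\norm{(A-L)\circ W}_\star \le OPT + 2\epsilon \norm{A \circ W}_\star$. For the squared-Frobenius case, a rank-$k'$ approximation $L$ of $A \circ W$ with additive slack $\epsilon \norm{A\circ W}_F^2$ is computable with high probability in $O(\nnz(A)) + n\poly(kt/\epsilon)$ time using the input-sparsity sketching algorithm of \cite{clarkson2015input}, giving the claimed runtime.

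The subtle point to get right is the direction of the one-sided error: Theorem~\ref{thm:1} requires an \emph{error-free} branch on $\neg f = 1$, which is precisely what fingerprinting provides, since equal inputs always produce equal hashes. A naive alternative of running $t$ independent $1$-sided equality protocols (Theorem~\ref{thm:eq}) in parallel with per-instance error $\epsilon/t$ would use $\Theta(t\log(t/\epsilon))$ bits and inflate $k'$ to $k(t/\epsilon)^{\Theta(t)}$, which is vacuous; the key saving is that Bob transmits a \emph{single} short fingerprint of $y$ and Alice performs all $\le t$ comparisons locally, keeping the total communication logarithmic in $t/\epsilon$.
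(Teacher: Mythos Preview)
Your proposal is correct and is essentially the same argument as the paper's. The paper phrases it as ``running a $1$-way equality protocol with error parameter $\epsilon/t$'' (Bob sends a single $O(\log(t/\epsilon))$-bit message, Alice compares it against each of the $\le t$ elements of $S_x$, and a union bound gives one-sided error $\le \epsilon$), which is exactly your fingerprinting protocol stated in different language; in particular the paper also uses a single short message from Bob rather than $t$ parallel equality instances, so your ``naive alternative'' warning does not apply to the paper's proof either.
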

\begin{proof}
The function $f$ corresponding to
$W$ is the negation of the problem where Alice is given $x \in [n]$ and must determine if Bob has input $y \in S_x$ where $S_x \subseteq [n]$ has at most $t$ entries, corresponding to the locations of the zero entries in the $x^{th}$ row of $W$. This problem can be solved by running a 1-way equality protocol with error parameter $\epsilon/t$, which by Theorem \ref{thm:eq} requires $R_{\epsilon/t}^{1-sided, 1-sided}(EQ) \le \log(t/\epsilon) + 5$ bits of communication (including Alice's output bit). Alice can then check, with probability $\ge \epsilon/t$ whether Bob's input is equal to each entry in $S_x$. By a union bound, she succeeds in checking if Bob's input is in $S_x$ with probability $\ge 1- \epsilon$. Thus we have $R_{\epsilon}^{1-sided}(\neg f) \le \log(t/\epsilon) + 5$ and so $2^{R_{\epsilon}^{1-sided}(\neg f)} \le \frac{32 t}{\epsilon}$, which completes the corollary.
\end{proof}

As with equality, the block result simply follows from considering the same communication problem as in Corollary \ref{cor:lrps} where $x,y$ are identified with their corresponding blocks. We obtain:
\begin{corollary}[Low-Rank Plus Block Sparse Approximation]\label{cor:lrpbs}
Consider any pair of partitions $B^x_1 \cup \ldots \cup B_b^x = [n]$, and  $B^y_1 \cup \ldots \cup B_b^y= [n]$. Let $W'$ be any $b \times b$ matrix with at most  $t$ zeros in each row. Let $W$ be the $n \times n$ matrix where $W_{i,j} = W'_{k,\ell}$ for $k,\ell$ with $i \in B_k^x$ and $j \in B_\ell^y$. Then for $k' = O \left (\frac{k t}{\epsilon}\right )$ and $L$ with $\norm{A \circ W - L}_\star \le \min_{\rank-k'\ \hat L} \norm{A \circ W - \hat L}_\star + \epsilon \norm{A \circ W}_\star$:
\begin{align*}
\norm{(A-L) \circ W}_\star \le OPT + 2 \epsilon \norm{A \circ W}_\star.
\end{align*}
If $\norm{\cdot}_\star = \norm{\cdot }_F^2$, such an $L$ can be computed with high probability in $O(\nnz(A)) + n \poly(kt/\epsilon)$ time.
\end{corollary}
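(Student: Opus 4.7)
My plan is to reduce the block sparse case directly to the sparse case already handled in Corollary \ref{cor:lrps}. The key observation is that the mask matrix $W$ is a ``block lift'' of the $b \times b$ matrix $W'$: each entry $W_{i,j}$ depends only on the block indices $k, \ell$ with $i \in B_k^x$ and $j \in B_\ell^y$, so that if $f$ is the Boolean function with $W_{i,j} = f(i,j)$, then $f(i,j) = f'(\mathrm{blk}^x(i), \mathrm{blk}^y(j))$, where $f'$ is the function with communication matrix $W'$ and $\mathrm{blk}^x, \mathrm{blk}^y : [n] \to [b]$ are the (publicly known) block-assignment maps.

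The first step is to observe that any communication protocol for $\neg f'$ on inputs in $[b] \times [b]$ lifts to a protocol for $\neg f$ of the same cost: Alice locally computes $\mathrm{blk}^x(i)$, Bob locally computes $\mathrm{blk}^y(j)$, and then the two parties simulate the protocol for $\neg f'$ on these derived inputs. This local preprocessing uses no communication, so $R^{1\text{-}sided}_{\epsilon}(\neg f) \le R^{1\text{-}sided}_{\epsilon}(\neg f')$.

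The second step is to bound $R^{1\text{-}sided}_{\epsilon}(\neg f')$ by reusing the protocol in the proof of Corollary \ref{cor:lrps} applied to $W'$ (with $b$ playing the role of $n$). Since $W'$ has at most $t$ zeros per row, for each $k \in [b]$ there is a set $S_k \subseteq [b]$ of size at most $t$ listing the zero positions of row $k$. Alice, holding $k$, has Bob send a $1$-way $1$-sided error equality sketch of $\ell$ with error parameter $\epsilon / t$ (Theorem \ref{thm:eq}), and she then checks the sketch against each candidate in $S_k$, outputting $1$ iff one matches. By a union bound over the $t$ checks, the total error is at most $\epsilon$, giving $R^{1\text{-}sided}_{\epsilon}(\neg f') \le \log(t/\epsilon) + 5$, and hence $2^{R^{1\text{-}sided}_{\epsilon}(\neg f)} \le 32t/\epsilon$.

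Finally, I would plug this bound into Theorem \ref{thm:rand} with $\epsilon_1 = \epsilon$ and $\epsilon_2 = 0$ (equivalently, invoke Theorem \ref{thm:1}) to obtain a rank-$k'$ solution with $k' = O(kt/\epsilon)$ satisfying the stated error guarantee for any entrywise norm. The runtime claim for $\|\cdot\|_F^2$ follows immediately from the input-sparsity time $(1+\epsilon)$-approximate Frobenius low-rank approximation routine invoked in the proofs of Theorems \ref{thm:1} and \ref{thm:2}. I do not anticipate any substantive obstacle here: the content of the corollary is essentially the conceptual remark that block-refining the rows and columns of a mask cannot increase its randomized communication complexity, since the refinement is computed locally by each party.
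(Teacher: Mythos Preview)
Your proposal is correct and follows essentially the same approach as the paper: the paper's proof is the single sentence that the block result follows by identifying $x,y$ with their block indices and then running the same protocol as in Corollary~\ref{cor:lrps}. You have simply spelled out this reduction in full detail.
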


We now consider common mask patterns  that are not sparse -- $W$  may have a large number of nonzero entries in each row and column. 

\begin{corollary}[Subsampled Toeplitz]\label{cor:toep}
For any integer $p$, let $W \in \{0,1\}^{n \times n}$ be the Toeplitz matrix with $W_{i,j} = 0$ iff $i -j = 0 \mod p$. Then for $k' = O\left (\frac{k}{\epsilon}\right )$ and $L$ with $\norm{A \circ W - L}_\star \le \min_{\rank-k'\ \hat L} \norm{A \circ W - \hat L}_\star + \epsilon \norm{A \circ W}_\star$:
\begin{align*}
\norm{(A-L) \circ W}_\star \le OPT + 2 \epsilon \norm{A \circ W}_\star.
\end{align*}
For $k' \ge 4pk$, $\norm{(A-L) \circ W}_\star \le OPT + \epsilon \norm{A \circ W}_\star$. If $\norm{\cdot}_\star = \norm{\cdot }_F^2$, $L$ satisfying the required guarantee can be computed with high probability in $O(\nnz(A)) + n \poly(k'/\epsilon)$ time.
\end{corollary}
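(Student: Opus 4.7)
The plan is to identify the function $f$ computed by $W$ and then apply Theorem \ref{thm:rand} twice: once using the $1$-sided randomized communication complexity of $\neg f$ to get the $k' = O(k/\epsilon)$ bound, and once using the deterministic complexity of $\neg f$ to get the $k' \ge 4pk$ bound.

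First I would observe that since $W_{i,j} = 0$ iff $i-j \equiv 0 \bmod p$, the function $f: [n]\times[n] \to \{0,1\}$ with $W_{i,j} = f(i,j)$ is exactly the negation of the Equality-Modulo-$p$ function, i.e.\ $\neg f = EQ_p$ (as defined in Section \ref{sec:pre}). For the first bound, I would invoke the $1$-sided randomized upper bound in Theorem \ref{thm:eqp}, which says $R^{1-sided}_\epsilon(EQ_p) \le \log(1/\epsilon) + 5$, so $2^{R^{1-sided}_\epsilon(\neg f)} \le 32/\epsilon$. Applying Theorem \ref{thm:rand} with $\epsilon_1 = \epsilon$, $\epsilon_2 = 0$ (equivalently, Theorem \ref{thm:1}) with $k' = k \cdot 2^{R^{1-sided}_\epsilon(\neg f)} = O(k/\epsilon)$ and with $\Delta = \epsilon\|A\circ W\|_\star$ then yields
\[
\|(A-L)\circ W\|_\star \le OPT + \epsilon\|A\circ W\|_\star + \Delta = OPT + 2\epsilon\|A\circ W\|_\star,
\]
which is the first claim.

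For the $k' \ge 4pk$ bound, I would instead use the deterministic upper bound from Theorem \ref{thm:eqp}: $D(EQ_p) \le \lceil \log p\rceil + 1$, so the rows and columns of $W$ can be partitioned into at most $2^{\lceil \log p\rceil + 1} \le 4p$ truly monochromatic rectangles (taking inputs mod $p$ and checking equality). A deterministic protocol is a special case of a randomized $(0,0)$-error protocol, so Theorem \ref{thm:rand} applies with $\epsilon_1 = \epsilon_2 = 0$ and bicriteria rank $k \cdot 4p = 4pk$. Both error terms coming from the protocol vanish, and only the slack $\Delta$ from the near-optimal low-rank approximation remains, giving $\|(A-L)\circ W\|_\star \le OPT + \epsilon\|A\circ W\|_\star$, as desired.

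Finally, for the runtime when $\|\cdot\|_\star = \|\cdot\|_F^2$, I would note (as in the proof of Theorems \ref{thm:1} and \ref{thm:2}) that an $L$ of rank $k'$ satisfying $\|A\circ W - L\|_F^2 \le \min_{\text{rank-}k'\ \hat L}\|A\circ W - \hat L\|_F^2 + \epsilon\|A\circ W\|_F^2$ can be computed with high probability in $O(\nnz(A)) + n\cdot \poly(k'/\epsilon)$ time by the input-sparsity-time low-rank approximation algorithm of \cite{clarkson2015input} applied to the matrix $A\circ W$ (which can be formed in $O(\nnz(A))$ time). There is no real obstacle here; the main thing to verify is simply that $\neg f = EQ_p$ so that the communication bounds of Theorem \ref{thm:eqp} apply directly.
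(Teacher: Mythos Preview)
Your proposal is correct and follows essentially the same approach as the paper's own proof: identify $\neg f$ with $EQ_p$, apply Theorem~\ref{thm:eqp} to bound $R^{1\text{-sided}}_\epsilon(EQ_p)\le \log(1/\epsilon)+5$ and $D(EQ_p)\le \lceil\log p\rceil+1$, and then invoke Theorem~\ref{thm:rand} with $(\epsilon_1,\epsilon_2)=(\epsilon,0)$ and $(\epsilon_1,\epsilon_2)=(0,0)$ respectively. If anything, you are a bit more explicit than the paper about which parameters are plugged into Theorem~\ref{thm:rand} and about where the $\Delta=\epsilon\|A\circ W\|_\star$ slack term goes.
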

\begin{proof}
The function $f$ corresponding to $W$ is the negation of the equality function mod $p$ with $EQ_p(x,y) = 1$ iff $x - y = 0 \mod p$. By  Theorem \ref{thm:eqp}, $R_\epsilon^{1-sided}(EQ_p) \le \log(1/\epsilon) + 5$ and $R_0^{1-sided}(EQ_p) = D(EQ_p) \le \lceil \log p \rceil + 1$. Thus $2^{R_\epsilon^{1-sided}(EQ_p)} \le \frac{32}{\epsilon}$ and $2^{R_0^{1-sided}(EQ_p)} \le 4p$, giving the corollary.
\end{proof}

Beyond equality, a number of common sparsity  patterns are related to the communication complexity  of the Greater-Than (GT) function, which is bounded by Theorem \ref{thm:gt}. Since two-sided error is required to give efficient GT protocols, we incur an additional error term depending on $L_{opt}$. An interesting question is if this is necessary for efficient bicriteria approximation.
\begin{corollary}[Low-Rank Plus Banded Approximation]\label{cor:band}
For any integer $p \le n$, let $W \in \{0,1\}^{n \times n}$ be the banded Toeplitz matrix with $W_{i,j} = 0$ iff $|i -j| < p$. Then for $k' = k \cdot \min \left (\frac{p}{\epsilon}, \poly \left (\frac{\log n}{\epsilon} \right) \right )$ and $L$ with $\norm{A \circ W - L}_\star \le \min_{\rank-k'\ \hat L} \norm{A \circ W - \hat L}_\star + \epsilon \norm{A \circ W}_\star$:
\begin{align*}
\norm{(A-L) \circ W}_\star \le OPT + 2 \epsilon \norm{A \circ W}_\star + \epsilon \norm{L_{opt} \circ (1-W)}_\star.
\end{align*}
If $\norm{\cdot}_\star = \norm{\cdot }_F^2$, such an $L$ can be computed with high probability in $O(\nnz(A)) + n \poly(k'/\epsilon)$ time.
\end{corollary}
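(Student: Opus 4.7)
The plan is to instantiate Theorem \ref{thm:rand} with two different communication upper bounds on the function $f$ computed by $W$, and take whichever yields the smaller rank. Here $f(x,y) = 1$ iff $|x-y| \ge p$, with $x, y \in [n]$ identified with their $\log n$-bit binary representations.

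\textbf{First bound (giving $\poly(\log n /\epsilon)$).} The key observation is that $f(x,y) = 1$ iff $x \ge y+p$ or $y \ge x+p$. Each such comparison is a Greater-Than evaluation on $\log n$-bit integers (Bob locally computes $y+p$, Alice locally computes $x+p$). By Theorem \ref{thm:gt}, each can be computed with $2$-sided error $\epsilon/2$ using $O(\log \log n + \log(1/\epsilon))$ bits of communication, since the inputs are $\log n$-bit rather than $n$-bit. Running the two subprotocols with independent randomness and taking the OR of the outputs yields, by a union bound, a $2$-sided error protocol for $f$ with error $\epsilon$ and total communication $O(\log \log n + \log(1/\epsilon))$, so $2^{R_\epsilon(f)} = \poly(\log n / \epsilon)$. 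Applying Theorem \ref{thm:rand} with $\epsilon_1 = \epsilon_2 = \epsilon$ then yields the stated bicriteria guarantee, including the $\epsilon \norm{L_{opt} \circ (1-W)}_\star$ term that arises because the protocol has $2$-sided error.

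\textbf{Second bound (giving $O(kp/\epsilon)$).} Each row of $W$ has at most $2p-1$ zero entries (exactly those columns $j$ with $|i-j| < p$). Thus $W$ is a special case of the Low-Rank Plus Sparse setting with sparsity parameter $t \le 2p-1$, and Corollary \ref{cor:lrps} immediately gives rank $k' = O(kp/\epsilon)$ with error $OPT + 2\epsilon \norm{A \circ W}_\star$, which is at most the error claimed in the corollary.

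Combining the two bounds via the minimum gives the stated rank $k' = k \cdot \min(p/\epsilon, \poly(\log n/\epsilon))$, and the runtime claim in the Frobenius case follows from the input-sparsity-time low-rank approximation algorithm of \cite{clarkson2015input} invoked inside Theorem \ref{thm:rand}. The only subtle point I anticipate is the arithmetic in the reduction to Theorem \ref{thm:gt}: that theorem is stated as $R_\delta(GT) = O(\log(n/\delta))$ for $n$-bit inputs, so one must be careful that the ``$n$'' there is replaced by $\log n$ in our application, which is exactly what converts the naive $\poly(n/\epsilon)$ into the desired $\poly(\log n/\epsilon)$ dependence.
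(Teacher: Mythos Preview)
Your proposal is correct and follows essentially the same approach as the paper: reduce $f$ to two Greater-Than calls on $\log n$-bit inputs to get $2^{R_\epsilon(f)} = \poly(\log n/\epsilon)$, and alternatively observe that $W$ has at most $2p-1$ zeros per row to invoke the sparse case for the $O(kp/\epsilon)$ bound. The paper cites Corollary~\ref{cor:lrpbs} rather than Corollary~\ref{cor:lrps} for the second bound, but these are equivalent here.
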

\begin{proof}
The function $f$ corresponding to $W$ is the negation of the AND of $i + p < j$ and $j+p > i$. Thus, it can be solved with two calls to a protocol for Greater-Than (GT). By Theorem \ref{thm:gt}, for $\log n$ bit inputs, $R_\epsilon(GT) = O\left (\log\left (\frac{\log n}{\epsilon}\right )\right )$. Thus $R_\epsilon(f)  =  O\left (\log\left (\frac{\log n}{\epsilon}\right )\right )$ and $2^{R_\epsilon(f)} = \poly \left (\frac{\log n}{\epsilon} \right)$, giving $k' = k \cdot  \poly \left (\frac{\log n}{\epsilon} \right)$. When $p$ is small, we can apply Corollary \ref{cor:lrpbs}, which gives $k' = k \cdot \frac{p}{\epsilon}$, completing the corollary. 
\end{proof}

We also consider a `multi-dimensional' banded pattern. Here each $i \in \{0,1\}^{\log n}$ corresponds to a point $(i_1,i_2)$ in a $\sqrt{n} \times \sqrt{n}$ grid ($i_1$ and $i_2$ are determined by the first $\frac{\log n}{2}$ and last $\frac{\log n}{2}$ bits of $i$ respectively). We focus on the two-dimensional case, although this set up can easily  be generalized to higher dimensions. We can also imagine generalizing to different distance measures over the points $(i_1,i_2)$ using efficient sketching methods (which yield efficient communication protocols) for various distances \cite{bar2004approximating,kane2010exact}.
We have:
\begin{corollary}[Multi-Dimensional Low-Rank Plus Banded Approximation]\label{cor:bandM}
For any $i \in [n]$ let $i_1,i_2 \in [\sqrt{n}]$ be the integers corresponding to the first and last half of its binary expansion.
For any integer $p \le n$, let $W \in \{0,1\}^{n \times n}$ be binary matrix with $W_{i,j} = 0$ iff $\norm{(i_1,i_2)- (j_1,j_2)}_1 < p$. Then for $k' = k \cdot \poly \left (\frac{\log n}{\epsilon} \right)$ and $L$ with $\norm{A \circ W - L}_\star \le \min_{\rank-k'\ \hat L} \norm{A \circ W - \hat L}_\star + \epsilon \norm{A \circ W}_\star$:
\begin{align*}
\norm{(A-L) \circ W}_\star \le OPT + 2 \epsilon \norm{A \circ W}_\star + \epsilon \norm{L_{opt} \circ (1-W)}_\star.
\end{align*}
If $\norm{\cdot}_\star = \norm{\cdot }_F^2$, such an $L$ can be computed with high probability in $O(\nnz(A)) + n \poly(k'/\epsilon)$ time.
\end{corollary}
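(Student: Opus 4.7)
The plan is to follow the template of Corollary \ref{cor:band}: bound the two-sided randomized communication complexity $R_\epsilon(f)$ of the Boolean function $f$ that $W$ computes, and then invoke Theorem \ref{thm:rand} with $\epsilon_1 = \epsilon_2 = \epsilon$. Here Alice holds $i \in \{0,1\}^{\log n}$ and Bob holds $j \in \{0,1\}^{\log n}$; from these strings each party extracts a pair of coordinates in $[\sqrt n]$, and $f(i,j) = 1$ iff $\|(i_1,i_2)-(j_1,j_2)\|_1 \ge p$.

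The key step is to reduce the $\ell_1$-ball predicate $\neg f$ to a constant number of Greater-Than instances. Using $|a| = \max(a,-a)$, for integers $a, b$,
\begin{align*}
|a| + |b| < p \;\;\iff\;\; \bigwedge_{s_1,s_2 \in \{-1,+1\}} \bigl(s_1 a + s_2 b < p\bigr).
\end{align*}
Setting $a = i_1 - j_1$ and $b = i_2 - j_2$ rewrites each clause as $s_1 i_1 + s_2 i_2 < s_1 j_1 + s_2 j_2 + p$, where the left-hand side depends only on Alice's input and the right-hand side only on Bob's. Both quantities are $O(\log n)$-bit integers, so by Theorem \ref{thm:gt} each comparison admits a randomized protocol using $O(\log(\log n /\epsilon))$ bits with error $\epsilon/8$.

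Running all four comparisons in parallel under shared public randomness, taking their conjunction, and negating yields a randomized protocol for $f$. By a union bound its error is at most $\epsilon$ and its communication is $O(\log(\log n/\epsilon))$, so $2^{R_\epsilon(f)} = \poly(\log n /\epsilon)$. Plugging into Theorem \ref{thm:rand} gives the stated bicriteria bound with $k' = k \cdot \poly(\log n/\epsilon)$; the $O(\nnz(A)) + n \cdot \poly(k'/\epsilon)$ runtime follows, as in previous corollaries, by applying the algorithm of \cite{clarkson2015input} to $A \circ W$. No substantive obstacle arises beyond the $\ell_1$-to-conjunction rewrite, which is a fixed-size enumeration; the same reduction extends to any constant number of coordinates with only a constant-factor increase in communication.
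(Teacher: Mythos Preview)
Your proof is correct and follows the same high-level approach as the paper: bound $R_\epsilon(f)$ by reducing the $\ell_1$-distance predicate to a constant number of Greater-Than instances on $O(\log n)$-bit integers, then invoke Theorem~\ref{thm:rand}. The only difference is in the reduction itself: the paper first runs two GT calls to learn the signs of $i_1-j_1$ and $i_2-j_2$ and then issues a third, sign-dependent comparison, whereas you exploit the identity $|a|+|b|=\max_{s_1,s_2\in\{\pm1\}}(s_1a+s_2b)$ to run four non-adaptive GT calls in parallel and take their conjunction; your version is slightly cleaner and, as you note, extends uniformly to $d$ coordinates with $2^d$ parallel calls.
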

\begin{proof}
The function $f$ corresponding to $W$ is the predicate $|i_1-j_1| + |i_2 - j_2| \ge p$. We can first run a greater-than protocol to determine with probability $\ge 1- \epsilon/3$ if $i_1 > j_1$ and similarly with probability  $\ge 1-\epsilon/3$ if $i_2 > j_2$. Depending on the outputs of these checks we can evaluate $|i_1-j_1| + |i_2 - j_2| \ge p$ with a third greater-than protocol succeeding with probability at least $1-\epsilon/3$. For example, if both hold, we can check if $i_1 + i_2 < j_1 + j_2 + p$. A union bound gives total success probability at least $1-\epsilon$. By Theorem \ref{thm:gt}, for $\log n$ bit inputs, $R_\epsilon(GT) = O\left (\log\left (\frac{\log n}{\epsilon}\right )\right )$. Thus $R_\epsilon(f)  =  O\left (\log\left (\frac{\log n}{\epsilon}\right )\right )$ and $2^{R_\epsilon(f)} = \poly \left (\frac{\log n}{\epsilon} \right)$, which gives the corollary.
\end{proof}

A similar result holds for low-rank approximation with monotone missing data.

\begin{corollary}[Monotone Missing Data Problem (MMDP)]\label{cor:mono}
Let $W \in \{0,1\}^{n \times n}$ be any matrix where each row of $W$ has a prefix of an arbitrary number of ones, followed by a suffice of zeros. Then for $k' = k \cdot \poly \left (\frac{\log n}{\epsilon} \right)$ and $L$ with $\norm{A \circ W - L}_\star \le \min_{\rank-k'\ \hat L} \norm{A \circ W - \hat L}_\star + \epsilon \norm{A\circ W}_\star$:
\begin{align*}
\norm{(A-L) \circ W}_\star \le OPT + 2 \epsilon \norm{A \circ W}_\star + \epsilon \norm{L_{opt} \circ (1-W)}_\star.
\end{align*}
If $\norm{\cdot}_\star = \norm{\cdot }_F^2$, such an $L$ can be computed with high probability in $O(\nnz(A)) + n \poly(k'/\epsilon)$ time.
\end{corollary}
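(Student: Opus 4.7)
The plan is to reduce the mask function of the MMDP problem to a Greater-Than instance on $\log n$-bit inputs and then invoke Theorem \ref{thm:rand} exactly as in Corollaries \ref{cor:band} and \ref{cor:bandM}. Specifically, for each row $i$, the monotone structure says that there is a threshold $t_i \in \{0,1,\ldots,n\}$ such that $W_{i,j} = 1$ iff $j \le t_i$. Viewing $W$ as the communication matrix of a function $f$ on $\log n$-bit inputs $i$ (given to Alice) and $j$ (given to Bob), we have $f(i,j) = 1$ iff $j \le t_i$. Since Alice knows $i$ and therefore $t_i$, evaluating $f$ reduces to a single instance of the Greater-Than predicate on $\log n$-bit inputs $t_i$ and $j$.

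Next, by Theorem \ref{thm:gt} applied with input length $\log n$, we obtain $R_{\epsilon}(GT) = O(\log(\log n / \epsilon))$. Hence
\[
2^{R_\epsilon(f)} = \poly\!\left(\frac{\log n}{\epsilon}\right).
\]
Plugging this bound into Theorem \ref{thm:rand} with $\epsilon_1 = \epsilon_2 = \epsilon$ (the symmetric 2-sided regime, which is the form of Theorem \ref{thm:2}) gives the desired bicriteria rank $k' = k \cdot \poly(\log n / \epsilon)$ together with the error guarantee
\[
\norm{(A-L)\circ W}_\star \le OPT + 2\epsilon \norm{A \circ W}_\star + \epsilon \norm{L_{opt} \circ (1-W)}_\star,
\]
and the runtime claim for the Frobenius case follows from the same $O(\nnz(A)) + n\poly(k'/\epsilon)$ algorithm for near-optimal low-rank approximation of $A \circ W$ invoked in the proof of Theorems \ref{thm:1} and \ref{thm:2}.

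The only mild subtlety is that Greater-Than does not have an efficient $1$-sided randomized protocol, so we cannot use the stronger form of Theorem \ref{thm:rand} that would eliminate the $\epsilon \norm{L_{opt}\circ (1-W)}_\star$ additive term, which is why the corollary's bound inherits this extra term (matching Corollaries \ref{cor:band} and \ref{cor:bandM}). There is no real obstacle beyond recognizing that the threshold structure of MMDP rows is computationally equivalent to a single Greater-Than comparison; once this reduction is in place the corollary is an immediate consequence of Theorem \ref{thm:rand} and Theorem \ref{thm:gt}.
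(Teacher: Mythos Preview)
Your proposal is correct and follows essentially the same approach as the paper: you identify the row-threshold $t_i$ (the paper calls it $p_x$), reduce $f$ to a single Greater-Than instance on $\log n$-bit inputs via Alice's local mapping $i \mapsto t_i$, apply Theorem~\ref{thm:gt} to bound $R_\epsilon(f)$, and then invoke Theorem~\ref{thm:rand} with two-sided error. Your added remark about why the $\epsilon\norm{L_{opt}\circ(1-W)}_\star$ term is unavoidable here (no efficient one-sided protocol for $GT$) is a nice clarification but not needed for the argument.
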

\begin{proof}
Let $p_x$ be the length of the prefix of ones in the $x^{th}$ row of $W$. Then the function $f$ corresponding to $W$ is $f(x,y) = 1$ iff $p_x \ge y$. That is, it is just the Greater-Than function where Alice maps her input $x$ to $p_x$. Thus by  Theorem \ref{thm:gt}, $R_\epsilon(f) \le R_\epsilon(GT) = O\left (\log\left (\frac{\log n}{\epsilon}\right )\right )$. So $2^{R_\epsilon(f)} = poly \left (\frac{\log n}{\epsilon} \right)$, which gives the  corollary.
\end{proof}

\section{Tensor Low-Rank Approximation from Multiparty Communication Complexity}\label{sec:tensor}
In this section we prove Theorem \ref{thm:tensorIntro}, extending Theorem \ref{thm:rand} to the low-rank approximation of higher order tensors using the number-in-hand multiparty communication model with a shared blackboard. We give applications to natural tensor generalizations of the low-rank plus (block) diagonal and low-rank plus (block) sparse problems. 

We first formally define the communication model we use. 
Consider $t$ players $P_1,\ldots, P_t$ each with access to an input $x_t \in \mathcal{X}_t$. The players would like to compute a function $f: \mathcal{X}_1 \times \mathcal{X}_2 \times \cdots \times \mathcal{X}_t \rightarrow \{0,1\}$. $f$ corresponds to $t^{th}$ order communication tensor $M \in \{0,1\}^{|\mathcal{X}_1| \times \ldots \times |\mathcal{X}_t|}$ with $M_{x_1,\ldots, x_t} = f(x_1,\ldots,x_t)$.
Players exchange messages by writing them on a shared blackboard that all others can see. In a randomized communication protocol $\Pi$, 
players view a string of public random bits $r$. After seeing $r$, the players run a deterministic protocol $\Pi_r$, which specifies the next player to speak as a function of the information written on the blackboard, as well as the message of that player, as a function of what is written on the blackboard and of their input.

We say a protocol $\Pi$ is a $(\delta_1,\delta_2)$-error protocol if for all $(x_1,\ldots, x_t) \in \mathcal{X}_1 \times \cdots \times \mathcal{X}_t$, with $f(x_1,\ldots,x_t) = 1$, $\Pr_r [\Pi_r(x_1,\ldots,x_t) = f(x_1,\ldots,x_t)] \geq 1-\delta_1$ and for all $(x_1,\ldots, x_t) \in\mathcal{X}_1 \times \cdots \times \mathcal{X}_t$ with $f(x_1,\ldots,x_t) = 0$, $\Pr_r [\Pi_r(x_1,\ldots,x_t) = f(x_1,\ldots,x_t)] \geq 1-\delta_2$.  We can then define the multiparty randomized communication complexity:
\begin{definition}[Multiparty Number-in-Hand Randomized Communication Complexity]\label{def:multi}
The $(\delta_1,\delta_2)$-error $t$-party randomized communication complexity $R^t_{\delta_1,\delta_2}(f) = \min_{\Pi} |\Pi|$, where the minimum is taken over all $(\delta_1,\delta_2)$-error protocols $\Pi$. Equivalently, $R^t_{\delta_1,\delta_2}(f)$ is the minimum number so that there is a distribution on protocols inducing partitions of $M$, each containing at most $2^{R^t_{\delta_1,\delta_2}(f)}$ rectangles, such that (1) for every $(x_1,\ldots, x_t) \in \mathcal{X}_1 \times \cdots \times \mathcal{X}_t$ with $f(x_1,\ldots x_t) = 1$, with probability at least $1-\delta_1$, $(x_1,\ldots,x_t)$ lands in a rectangle labeled $1$ and (2) for every $(x_1,\ldots, x_t) \in \mathcal{X}_1 \times \cdots \times \mathcal{X}_t$ with $f(x_1,\ldots,x_t) = 0$, with probability at least $1-\delta_2$, $(x_1,\ldots,x_t)$ lands in a rectangle labeled $0$.
\end{definition}

We now connect the notion of communication complexity  given in Definition \ref{def:multi} to masked tensor low-rank approximation. The rank of any $t^{th}$ order tensor $M$ is the minimum integer $k$ such that $M = \sum_{i=1}^k u_{i,1} \oplus \ldots \oplus u_{i,t}$ for vectors $u_{1,1},\ldots, u_{1,t},\ldots, u_{k,1},\ldots, u_{k,t}$.
All other notions, such as entrywise norm, entrywise product, etc., are generalized in the natural way to $t^{th}$ order tensors. We note that $OPT$ in this setting will be defined as the \emph{infimum} of the error obtained by a low rank tensor approximation $L$. Due to issues of border-rank this infimum may not be achieved by any $L$ (see e.g, \cite{de2008tensor}). However,  we can to efficiently compute a tensor $L$ achieving within small additive error of this infimum, using the algorithms of \cite{song2019relative}. These algorithms apply to the squared Frobenius norm, and other entrywise $\ell_p$ norms, although we focus on the squared Frobenius norm in our runtime bounds.

\begin{theorem}[Multiparty Communication Complexity $\rightarrow$ Tensor Low-Rank Approximation]\label{thm:multiParty}
Consider $t^{th}$ order tensor $W \in \{0,1\}^{n \times\ldots \times n}$ and let $f: \{0,1\}^{\log n} \times \ldots \times \{0,1\}^{\log n} \times \{0,1\}^{\log n} \rightarrow \{0,1\}$ be the function computed by it. 
For $k' \ge k \cdot 2^{R^t_{\epsilon_1, \epsilon_2}(f)}$, and any entrywise norm $\norm{\cdot }_\star$ (Def. \ref{def:norm}) for any $L$ satisfying $\norm{A \circ W - L}_\star \le \inf_{rank-k'\ \hat L} \norm{A \circ W - \hat L}_\star + \epsilon_3 \norm{A \circ W}_\star$:
\begin{align*}
\norm{(A-L) \circ W}_\star \le OPT + (\epsilon_1 + \epsilon_3) \norm{A \circ W}_\star + \epsilon_2 \norm{L_\gamma \circ (1-W)}_\star + \gamma,
\end{align*}
where $OPT = \inf_{\rank-k\ \hat L} \norm{(A-\hat L)\circ W}_\star$ and $L_\gamma$ is any rank-$k$ $t^{th}$ order tensor achieving error $OPT + \gamma $ for $\gamma > 0$. When $\norm{\cdot}_\star$ is the squared Frobenius norm, for any $\epsilon_3 > 0$, $L$ with rank $O((k'/\epsilon)^{t-1})$ satisfying the required bound can be computed with high probability in $O(\nnz(A)) + n \poly(k'/\epsilon)$ time \cite{song2019relative}.
\end{theorem}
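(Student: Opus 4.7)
The plan is to mirror the proof of Theorem \ref{thm:rand} with two main modifications: replacing two-dimensional combinatorial rectangles by $t$-dimensional ones, and handling the fact that for tensors of order $t \geq 3$, the optimal rank-$k$ approximation may not be attained (only approached in the border-rank sense). By Definition \ref{def:multi}, $R^t_{\epsilon_1,\epsilon_2}(f)$ gives a distribution over protocols inducing partitions of $W$ into at most $2^{R^t_{\epsilon_1,\epsilon_2}(f)}$ rectangles $R = A_1 \times \cdots \times A_t$, each with a label in $\{0,1\}$, such that letting $W_\Pi$ be the (random) tensor corresponding to the protocol's labels: each entry $(x_1,\ldots,x_t)$ with $f = 1$ lies in a rectangle labeled $1$ with probability $\geq 1-\epsilon_1$, and symmetrically with $\epsilon_2$ on the $f=0$ side. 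Since the norm $\norm{\cdot}_\star$ is entrywise with $g(0)=0$, expectation commutes with the norm and I can conclude, exactly as in Theorem \ref{thm:rand}, that there exists a specific protocol $\Pi$ with partition $P = P_0 \cup P_1$ satisfying
\begin{align*}
\norm{A \circ W \circ (1 - W_\Pi)}_\star + \norm{L_\gamma \circ (1-W) \circ W_\Pi}_\star \leq \epsilon_1 \norm{A \circ W}_\star + \epsilon_2 \norm{L_\gamma \circ (1-W)}_\star.
\end{align*}

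Next I construct the comparison tensor $\bar L$. For each rectangle $R \in P_1$, I choose a rank-$k$ tensor $\bar L^R$ supported on $R$ (extended by zeros outside) that achieves
$\norm{A_R \circ W_R - \bar L^R}_\star \leq \inf_{\text{rank-}k\ \hat L} \norm{A_R \circ W_R - \hat L}_\star + \gamma_R$,
with $\sum_{R \in P_1} \gamma_R \leq \gamma/2$ (possible since the infimum exists, even if it is not attained). Set $\bar L = \sum_{R \in P_1} \bar L^R$. Because a rank-$1$ tensor supported on a single combinatorial rectangle $A_1 \times \cdots \times A_t$ extends to a rank-$1$ tensor on $[n]^t$ by zero-padding, the rectangles in $P_1$ are disjoint, and each $\bar L^R$ has rank at most $k$, we get $\rank(\bar L) \leq k \cdot |P_1| \leq k \cdot 2^{R^t_{\epsilon_1,\epsilon_2}(f)} \leq k'$, so $\bar L$ is feasible for the rank-$k'$ approximation problem. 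Moreover $\bar L$ vanishes off the support of $W_\Pi$.

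The comparison now proceeds as in Theorem \ref{thm:rand}. Using the near-optimality of $L$ (with slack $\epsilon_3 \norm{A \circ W}_\star$) and the fact that $\bar L \circ (1-W_\Pi) = 0$,
\begin{align*}
\norm{(A-L) \circ W}_\star &\leq \norm{A \circ W - L}_\star \leq \norm{A \circ W - \bar L}_\star + \epsilon_3 \norm{A \circ W}_\star \\
&= \norm{(A \circ W - \bar L) \circ W_\Pi}_\star + \norm{A \circ W \circ (1 - W_\Pi)}_\star + \epsilon_3 \norm{A \circ W}_\star.
\end{align*}
Since the restriction $(L_\gamma)_R$ of $L_\gamma$ to any rectangle $R$ is itself a rank-$k$ tensor, the near-optimality of $\bar L^R$ and the fact that $L_\gamma$ achieves $OPT + \gamma/2$ give
\begin{align*}
\norm{(A \circ W - \bar L) \circ W_\Pi}_\star &\leq \norm{(A \circ W - L_\gamma) \circ W_\Pi}_\star + \gamma/2 \\
&\leq \norm{(A - L_\gamma) \circ W}_\star + \norm{L_\gamma \circ (1-W) \circ W_\Pi}_\star + \gamma/2 \\
&\leq OPT + \gamma + \norm{L_\gamma \circ (1-W) \circ W_\Pi}_\star,
\end{align*}
where the second line splits $W_\Pi$ according to whether $W=1$ or $W=0$ and uses that masking by $W_\Pi \in \{0,1\}$ cannot increase any entrywise norm. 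Chaining with the previous display and applying the expectation-achieving bound on the two ``disagreement'' terms yields the claimed inequality.

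The main subtlety is the border-rank issue, which forces me to use approximate minimizers $\bar L^R$ and an approximate optimum $L_\gamma$, producing the additive $\gamma$. This is not an obstacle but merely adds bookkeeping relative to the matrix case. For the runtime statement, the bicriteria algorithm of \cite{song2019relative} for tensor low-rank approximation under the squared Frobenius norm outputs a tensor of rank $O((k'/\epsilon)^{t-1})$ whose error is within a $(1+\epsilon_3)$ factor of the rank-$k'$ optimum, which (after absorbing the multiplicative $\epsilon_3$ into the additive $\epsilon_3 \norm{A \circ W}_\star$ term) satisfies the hypothesis on $L$; its runtime is $O(\nnz(A)) + n \cdot \poly(k'/\epsilon)$, completing the claim.
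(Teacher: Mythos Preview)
Your proposal is correct and takes essentially the same approach as the paper's proof: fix a good protocol $\Pi$ via the expectation argument, build the comparison tensor $\bar L$ by summing near-optimal rank-$k$ approximations over the $1$-labeled rectangles of $W_\Pi$, and chain $L \to \bar L \to L_\gamma$. One minor bookkeeping slip: the statement defines $L_\gamma$ as achieving error $OPT+\gamma$, not $OPT+\gamma/2$ as you assert mid-proof, so your chain actually yields $OPT+3\gamma/2$ rather than $OPT+\gamma$; this is immaterial since $\gamma>0$ is arbitrary (and the paper's own $\gamma$-accounting in this step is comparably loose).
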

Note that the final rank of the approximation $L$ that can be efficiently computed using the algorithms of \cite{song2019relative} is $O((k'/\epsilon)^{t-1}) = O\left ((k/\epsilon)^{t-1} \cdot 2^{(t-1) \cdot R^t_{\epsilon_1, \epsilon_2}(f)}\right )$.
\begin{proof}
The proof closely follows that of Theorem \ref{thm:rand}.
As discussed in Def. \ref{def:multi}, $R^t_{\epsilon_1,\epsilon_2}(f)$ is the minimum number so that there is a distribution on protocols inducing partitions of $W$, each containing at most $2^{R^t_{\epsilon_1,\epsilon_2}(f)}$ rectangles, such that (1) for every $x_1,\ldots, x_t \in \{0,1\}^{\log  n}$ with $f(x_1,\ldots,x_t) = 1$, $(x_1,\ldots,x_t)$ lands in a rectangle labeled $1$ with probability $\ge 1- \epsilon_1$ and (2) for every $x_1,\ldots, x_t \in \{0,1\}^{\log  n}$ with $f(x_1,\ldots,x_t) = 0$, $(x_1,\ldots,x_t)$ lands in a rectangle labeled $0$ with probability $\ge 1-\epsilon_2$. In other words, letting $W_\Pi$ be the (random) binary tensor corresponding to the function computed by the protocol, $W \circ (1-W_\Pi)$ has each entry equal to $1$ with probability at most $\epsilon_1$ and $W_\Pi \circ (1-W)$ has each entry  equal to $1$ with probability  at most $\epsilon_2$.
Thus, fixing some $L_\gamma$:
\begin{align*}
\E_{protocol\ \Pi} \left [\norm{A \circ W \circ (1-W_\Pi)}_\star + \norm{L_\gamma \circ W_\Pi \circ (1-W)}_\star\right ] &\le \epsilon_1 \norm{A \circ W}_\star + \epsilon_2 \norm{L_\gamma \circ (1- W)}_\star.
\end{align*}
Thus, there is at least one protocol $\Pi$ (inducing a partition with $\le 2^{R^t_{\epsilon_1,\epsilon_2}(f)}$ rectangles) with:
\begin{align}\label{eq:plowT}\norm{A \circ W \circ (1-W_\Pi)}_\star + \norm{L_\gamma \circ W_\Pi \circ (1-W)}_\star \le \epsilon_1 \norm{A \circ W}_\star + \epsilon_2 \norm{L_\gamma \circ (1-W)}_\star .
\end{align}
Let $P_1$ be the set  of rectangles (each a subset of $\underbrace{\{0,1\}^{\log n} \times \ldots \times \{0,1\}^{\log n}}_{t-\text{times}}$) on which the protocol achieving \eqref{eq:plowT} returns $1$ and $P_0$ be the set on which it returns $0$.
 For any $R \in P_1$ let $L^R$ be any rank-$k$ tensor satisfying $\norm{A_R \circ W_R - L_R}_\star \le \inf_{\rank-k\ \hat L} \norm{A_R \circ W_R - \hat L}_\star + \frac{\gamma}{2^{R^t_{\epsilon_1, \epsilon_2}(f)}}$. Let $\bar L^R$ be equal to $L^R$ on $R$ and $0$ elsewhere. Let 
$\bar L = \sum_{R \in P_1} \bar L^R$. Note that $\bar L$ has rank at most $k \cdot |P_1| \le k \cdot 2^{R^t_{\epsilon_1, \epsilon_2}(f)}$. So by the assumption that $L$ satisfies $\norm{A \circ W - L}_\star \le \inf_{rank-k'\ \hat L} \norm{A \circ W - \hat L}_\star + \epsilon_3 \norm{A \circ W}_\star$: 
\begin{align}\label{eq:compareT}
\norm{(A - L) \circ W}_\star \le \norm{A \circ W-L}_\star &\le \norm{A \circ W-\bar L}_\star + \epsilon_3 \norm{A\circ W}_\star\nonumber \\
&= \norm{(A \circ W -\bar L) \circ W_\Pi}_\star + \norm{(A \circ W -\bar L) \circ (1-W_\Pi)}_\star+\epsilon_3 \norm{A\circ W}_\star\nonumber \\
&=  \norm{(A \circ W -\bar L) \circ W_\Pi}_\star  + \norm{A \circ W \circ (1-W_\Pi)}_\star + \epsilon_3 \norm{A \circ W}_\star,
\end{align}
where the third line follows since $\bar L$ is $0$ outside the support of $W_\Pi$ (i.e., outside the rectangles in $P_1$). Since $\bar L$ is within additive error $\frac{\gamma}{2^{R^t_{\epsilon_1, \epsilon_2}(f)}}$ of the best rank-$k$ approximation to $A_R \circ W_R$ on each rectangle $R$ in the support of $W_\Pi$ (i.e., each $R \in P_1)$ and since $\norm{(A-L_\gamma)\circ W}_\star = OPT + \gamma$,
\begin{align*}
\norm{(A \circ W -\bar L) \circ W_\Pi}_\star &\le \norm{(A \circ W -L_\gamma) \circ W_\Pi}_\star - \gamma + \frac{\gamma}{2^{R^t_{\epsilon_1, \epsilon_2}(f)}} \cdot |P_1| \\
&\le \norm{(A -L_\gamma) \circ W \circ W_\Pi}_\star + \norm{L_\gamma\circ (1- W) \circ W_\Pi}_\star\\
&\le OPT + \gamma +\norm{L_\gamma\circ (1- W) \circ W_\Pi}_\star.
\end{align*}
Plugging back into \eqref{eq:compareT} and applying \eqref{eq:plowT}:
\begin{align*}
\norm{(A - L) \circ W}_\star &\le OPT +\gamma + \norm{L_\gamma\circ (1- W) \circ W_\Pi}_\star  + \norm{A \circ W \circ (1-W_\Pi)}_\star + \epsilon_3 \norm{A\circ W}_\star\\
&\le OPT + \gamma + \epsilon_1 \norm{A\circ W}_\star + \epsilon_2 \norm{L_\gamma\circ(1-W)}_\star + \epsilon_3 \norm{A\circ W}_\star,
\end{align*}
which completes the theorem.
\end{proof}
Theorem \ref{thm:tensorIntro} follows immediately from Theorem \ref{thm:multiParty} by  considering $R^{t,1-sided}_{\epsilon}(\neg f) = R^t_{0,\epsilon}(f).$

\subsection{Applications of Main Tensor Theorem}

We now give some example applications of Theorem \ref{thm:multiParty}. We focus on a few common settings of $W$, however note that essentially  any of the data patterns considered for matrices in Section \ref{sec:mainBound} can be generalized to the tensor case .
We first consider the natural generalization of the low-rank plus diagonal matrix approximation problem to tensors.
\begin{corollary}[Low-Rank Plus Diagonal Tensor Approximation]\label{cor:lrpdT}
Let $W$ be the $t^{th}$ order tensor with $W_{i_1,\ldots,i_t} = 0$ when $i_1 = i_2 =\ldots = i_t$ and $W_{i_1,\ldots,i_t} = 1$ otherwise. Then for $k' = O \left (\frac{(kt)^{t-1}}{\epsilon^{2t-2}} \cdot 2^{t(t-1)} \right )$ there is an algorithm computing rank-$k'$ $L$ with high probability in $O(\nnz(A)) + n \poly(k'/\epsilon)$ time satisfying:
\begin{align*}
\norm{(A-L) \circ W}_F^2 \le OPT + \epsilon \norm{A \circ W}_F^2.
\end{align*}
\end{corollary}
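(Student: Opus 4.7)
The plan is to instantiate the multi-party tensor bound of Theorem \ref{thm:multiParty} with the specific function computed by the diagonal mask, and then bound its multi-party randomized 1-sided communication complexity. Writing $f$ for the function computed by $W$, we have $f(x_1,\ldots,x_t) = 1$ exactly when the inputs are not all equal, so $\neg f$ is the $t$-party equality function $EQ^t$ in the number-in-hand blackboard model. Applying Theorem \ref{thm:multiParty} (equivalently, the $t$-party generalization of Theorem \ref{thm:tensorIntro}) with $\epsilon_1 = \epsilon$ and $\epsilon_2 = 0$ (since the 1-sided error is on $\neg f$) makes the $\norm{L_\gamma \circ (1-W)}$ term vanish, leaving a bound of the form $OPT + O(\epsilon) \norm{A \circ W}_F^2$, with bicriteria rank
\[ k' = O\!\left( (k/\epsilon)^{t-1} \cdot 2^{(t-1)\cdot R^{t,1-sided}_{\epsilon}(EQ^t)} \right). \]

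Next I would give a simple 1-sided protocol for $EQ^t$: player $1$ broadcasts the fingerprint $h(x_1)$ arising from the 1-way 1-sided equality protocol of Theorem \ref{thm:eq}, with error parameter $\epsilon/(t-1)$, using $\log((t-1)/\epsilon) + O(1)$ bits. Each remaining player $i \ge 2$ computes $h(x_i)$ from the shared randomness and broadcasts a single bit indicating whether $h(x_i) = h(x_1)$; the protocol declares ``all equal'' iff every bit is $1$. When all inputs are equal the fingerprints agree with probability $1$, so the protocol is truly 1-sided; when some $x_i \ne x_1$, the probability of a false positive at coordinate $i$ is at most $\epsilon/(t-1)$, so a union bound gives total error at most $\epsilon$. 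Counting communication, this protocol uses $\log(t/\epsilon) + O(t)$ bits, hence $2^{R^{t,1-sided}_\epsilon(EQ^t)} = O\!\left( (t/\epsilon) \cdot 2^t\right)$.

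Plugging this bound into the rank estimate gives
\[ k' = O\!\left( (k/\epsilon)^{t-1} \cdot (t/\epsilon)^{t-1} \cdot 2^{O(t(t-1))} \right) = O\!\left( \frac{(kt)^{t-1}}{\epsilon^{2t-2}} \cdot 2^{t(t-1)} \right), \]
matching the stated bound (after absorbing constants). The additive error from Theorem \ref{thm:multiParty} is $(\epsilon_1 + \epsilon_3) \norm{A \circ W}_F^2$, which is $O(\epsilon) \norm{A \circ W}_F^2$ on choosing $\epsilon_3 = \epsilon$; rescaling $\epsilon$ by a constant factor at the outset yields exactly the $\epsilon \norm{A \circ W}_F^2$ error stated in the corollary. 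The run-time bound comes directly from the $O(\nnz(A)) + n \poly(k'/\epsilon)$ tensor approximation algorithm of \cite{song2019relative} invoked inside Theorem \ref{thm:multiParty}, with $\gamma$ chosen polynomially small so its contribution is absorbed into the additive error.

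The only step requiring care is the multi-party equality protocol and its union-bound analysis; the rest is a mechanical substitution into Theorem \ref{thm:multiParty}. I do not expect any genuine obstacle, since the $t$-party equality is well known to admit a short 1-sided fingerprinting protocol in the blackboard model and the rank/error arithmetic is routine.
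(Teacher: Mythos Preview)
Your proposal is correct and follows essentially the same approach as the paper: identify $\neg f$ as $t$-party equality, give the fingerprinting protocol where player~1 broadcasts a hash with error $\epsilon/(t-1)$ and each other player broadcasts a single agreement bit, union-bound to get $2^{R^{t,1-sided}_\epsilon(\neg f)} = O((t/\epsilon)\cdot 2^t)$, and then plug into Theorem~\ref{thm:multiParty} with $\epsilon_2=0$, $\epsilon_3=\epsilon$, and $\gamma$ negligible. The rank arithmetic and constant-rescaling of $\epsilon$ match the paper's argument step for step.
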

\begin{proof}
The function $f$ corresponding to $W$ is the inequality function $NEQ$.  By Theorem \ref{thm:eq}, $R_{\epsilon,0}(NEQ)  = R_\epsilon^{1-way,1-sided}(\neg NEQ) = R_\epsilon^{1-way,1-sided}(EQ) $ is bounded by $\log(1/\epsilon) + 5$. In the number-in-hand blackboard model, a single player can simply run this protocol, with error probability $\epsilon' = \epsilon/(t-1)$. The remaining $t-1$ players can then check equality, all succeeding via a union bound with probability $\ge 1-\epsilon$. These players can then all send the results of their equality test and all players can output the solution based on these results. The total communication is $\log(t/\epsilon) + t + O(1)$ and thus we have $2^{R^{t,1-sided}_{\epsilon}(\neg f)} = O\left ( \frac{t}{\epsilon} \cdot 2^t \right).$ Applying Theorem \ref{thm:multiParty}, we can set $k' = O\left ( \frac{kt}{\epsilon} \cdot 2^t \right)$. We can efficiently output $L$ with rank $O\left ((k'/\epsilon)^{t-1}\right ) = O \left (\frac{(kt)^{t-1}}{\epsilon^{2t-2}} \cdot 2^{t(t-1)} \right )$ achieving within an additive $\epsilon \norm{A \circ W}_F^2$ error of the best $k'$-rank approximation to $A \circ W$. Additionally, we can set $\gamma < \epsilon \norm{A \circ W}_F^2$. Since our error is 1-sided we do not pay any error in terms of $\norm{L_\gamma}_F^2$. Overall, we will have $\norm{(A-L)\circ W}_F^2 \le OPT + 3\epsilon \norm{A \circ W}_F^2$, which completes the corollary after adjusting $\epsilon$ by a constant factor.
\end{proof}
As in the case of matrices, Corollary \ref{cor:lrpdT} also immediately extends to the low-rank plus block diagonal tensor approximation problem where $W$ is zero in a block diagonal pattern. We can also use a similar technique to solve the low-rank plus sparse tensor approximation problem. We have:
\begin{corollary}[Low-Rank Plus Sparse Tensor Approximation]\label{cor:lrpsT}
Let $W$ be any $t^{th}$ order binary tensor such that for any fixed $i_1 \in [n]$, the $(t-1)^{th}$ order `face' $W(i_1,\cdot,\ldots,\cdot)$ has at most $s$ zero entries. Then for $k' = O \left (\frac{k^{t-1}s^{(t-1)^2}}{\epsilon^{(t-1)t}} \cdot 2^{6t(t-1)} \right )$ there is an algorithm computing rank-$k'$ $L$ with high probability in $O(\nnz(A)) + n \poly(k'/\epsilon)$ time satisfying:
\begin{align*}
\norm{(A-L) \circ W}_\star \le OPT + \epsilon \norm{A \circ W}_\star.
\end{align*}
\end{corollary}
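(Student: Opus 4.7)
The plan is to mirror the proof of Corollary \ref{cor:lrpdT}, but with a multiparty protocol tailored to the sparse (rather than purely diagonal) structure of $W$. The negated function $\neg f$ is: Player $1$ holds $i_1$, which determines the set $S_{i_1} \subseteq [n]^{t-1}$ of at most $s$ zero locations on the face $W(i_1,\cdot,\ldots,\cdot)$; Players $2,\ldots,t$ hold $i_2,\ldots,i_t$; and we want to decide whether $(i_2,\ldots,i_t) \in S_{i_1}$. The main task is to design a 1-sided error number-in-hand blackboard protocol for $\neg f$ of low communication, then invoke Theorem \ref{thm:multiParty} with $\epsilon_1 = \epsilon$, $\epsilon_2 = 0$, and $\gamma$ small.

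The protocol is as follows. Using public randomness, draw $t-1$ independent fingerprint functions $h_2,\ldots,h_t$, each with collision probability at most $\delta \eqdef \epsilon/s$ on unequal pairs (by Theorem \ref{thm:eq} each can be encoded in $\log(s/\epsilon) + O(1)$ bits in the standard 1-sided equality sense). Players $2,\ldots,t$ each broadcast $h_\ell(i_\ell)$. Player $1$ then scans the at most $s$ tuples $(j_2,\ldots,j_t) \in S_{i_1}$ and outputs $1$ iff $h_\ell(i_\ell) = h_\ell(j_\ell)$ for every $\ell$ and some such tuple; otherwise it outputs $0$ and broadcasts the result. If $(i_2,\ldots,i_t) \in S_{i_1}$, all fingerprints match at the corresponding $(j_2,\ldots,j_t)$, so the protocol outputs $1$ deterministically, giving 1-sided error. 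If $(i_2,\ldots,i_t) \notin S_{i_1}$, each $(j_2,\ldots,j_t) \in S_{i_1}$ differs from $i$ in at least one coordinate $\ell^*$, and a spurious match at $j$ requires $h_{\ell^*}(i_{\ell^*}) = h_{\ell^*}(j_{\ell^*})$, which happens with probability $\le \delta$. A union bound over the $\le s$ tuples gives total error at most $s \delta = \epsilon$.

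The total communication is $(t-1)(\log(s/\epsilon) + O(1)) + O(1)$, so
\begin{align*}
2^{R^{t,1\text{-}sided}_{\epsilon}(\neg f)} \;=\; O\!\left( (s/\epsilon)^{t-1} \cdot 2^{O(t)}\right).
\end{align*}
Applying Theorem \ref{thm:multiParty} with these parameters produces a target rank
\begin{align*}
k' \;=\; k \cdot 2^{R^{t,1\text{-}sided}_{\epsilon}(\neg f)} \;=\; O\!\left( k \cdot (s/\epsilon)^{t-1} \cdot 2^{O(t)}\right)
\end{align*}
such that there exists a rank-$k'$ tensor attaining the desired additive guarantee (the $\epsilon_2$ term vanishes since the protocol is 1-sided, and we take $\gamma < \epsilon \|A \circ W\|_\star$). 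To make this efficient, we invoke the algorithm of \cite{song2019relative} to compute in $O(\nnz(A)) + n\poly(k'/\epsilon)$ time a tensor $L$ of rank $O((k'/\epsilon)^{t-1})$ achieving additive error $\epsilon \|A \circ W\|_\star$ against the best rank-$k'$ approximation of $A \circ W$. Multiplying out gives the final rank
\begin{align*}
O\!\left((k'/\epsilon)^{t-1}\right) \;=\; O\!\left( \frac{k^{t-1}\, s^{(t-1)^2}}{\epsilon^{(t-1)^2 + (t-1)}} \cdot 2^{O(t(t-1))} \right) \;=\; O\!\left( \frac{k^{t-1}\, s^{(t-1)^2}}{\epsilon^{(t-1)t}} \cdot 2^{6t(t-1)}\right),
\end{align*}
after absorbing the universal constant into the exponent of $2$. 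Adjusting $\epsilon$ by a constant to combine the three $\epsilon \|A \circ W\|_\star$ contributions then yields the stated bound. The only nontrivial step is verifying the fingerprint union bound in the multiparty setting; the rest is bookkeeping on the exponents from Theorem \ref{thm:multiParty} and \cite{song2019relative}.
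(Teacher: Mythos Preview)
Your proof is correct and follows essentially the same approach as the paper: both design the same number-in-hand protocol in which Players $2,\ldots,t$ each broadcast a 1-sided equality fingerprint with collision probability $\epsilon/s$, Player $1$ checks all $\le s$ tuples in $S_{i_1}$ via a union bound, and then Theorem~\ref{thm:multiParty} together with the bicriteria tensor algorithm of \cite{song2019relative} is invoked with $\gamma < \epsilon\norm{A\circ W}_\star$ to obtain the stated rank and error. The only cosmetic difference is that the paper counts $+t$ output bits (yielding the explicit constant $6$ in $2^{6t}$) whereas you write $+O(1)$ and absorb constants at the end; both give the same final bound.
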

\begin{proof}
To compute the function $f$ corresponding to $W$, for each of the $s$ entries $(i_2,\ldots,i_t)$ that are zero on the face $W(x_1,\cdot,\ldots,\cdot)$, player 1 must output zero if $x_j = i_j$ for all $j \in 2,\ldots t$. This corresponds to running $t-1$ equality tests for each of the $s$ entries. Each player aside from player $1$ can run the 1-way equality protocol of Theorem \ref{thm:eq} with error $\epsilon'  = \frac{\epsilon}{s}$, requiring $(t-1) \left [\log\left(\frac{s}{\epsilon}\right )+5\right ]$ bits in total. Player 1 can then run the $t-1$ equality tests for the $s$ different zero entries on their face, outputting $0$ if for one of the entries all $t-1$ tests succeed. For any one of the entries, if $(x_2,\ldots,x_t) \neq (i_2,\ldots,i_t)$ the strings differ in at least one location and so this equality test will fail with probability at least $1-\epsilon/s$. If  $(x_2,\ldots,x_t) = (i_2,\ldots,i_t)$, since the equality tests are 1-sided, the protocol will always be correct. Union bounding over all $s$ entries tested, the protocol has 1-sided error at most $\epsilon$. The total communication, including the players' output bits is $R^t_{\epsilon,0}(f) = (t-1) \left [\log\left(\frac{s}{\epsilon}\right )+5\right ] + t$ and thus $2^{R^t_{\epsilon,0}(f)} = O \left (\left (\frac{s}{\epsilon}\right)^{t-1} \cdot 2^{6t} \right ) $.

Applying Theorem \ref{thm:multiParty}, we can set $k' =  \left (k \cdot \left (\frac{s}{\epsilon}\right )^{t-1} \cdot 2^{6t} \right )$
and can efficiently output $L$ with rank $O\left ((k'/\epsilon)^{t-1}\right ) = O \left (\frac{k^{t-1}s^{(t-1)^2}}{\epsilon^{(t-1)t}} \cdot 2^{6t(t-1)} \right )$ achieving within an additive $\epsilon \norm{A \circ W}_F^2$ error of the best $k'$-rank approximation to $A \circ W$. Additionally, we can set $\gamma < \epsilon \norm{A \circ W}_F^2$. Since our error is 1-sided we do not pay any error in terms of $\norm{L_\gamma}_F^2$. Overall, we will have $\norm{(A-L)\circ W}_F^2 \le OPT + 3\epsilon \norm{A \circ W}_F^2$, which completes the corollary after adjusting $\epsilon$ by a constant factor.
\end{proof}

\section{Boolean Low-Rank Approximation from Nondeterministic Communication Complexity}\label{sec:nondet}

In this section we use nondeterministic communication complexity to give a bicriteria approximation result for masked Boolean low-rank approximation.

\begin{reptheorem}{thm:nondet}[Nondeterministic communication complexity $\rightarrow$ Boolean Low-Rank Approximation]
Given $A, W \in \{0,1\}^{n \times n}$, let $f$ be the function computed by $W$. For any $k' \ge 2^{N(f)} \cdot k$, if one computes $U,V \in \{0,1\}^{n \times k'}$ satisfying $\norm{A \circ W - U \cdot V}_0 \le \min_{\hat U,\hat V \in \{0,1\}^{n \times k'}} \norm{A \circ W - \hat U \cdot \hat V}_0 + \Delta$ then:
$$\norm{W \circ (A- U \cdot V)}_0 \le 2^{N(f)} \cdot OPT + \Delta,$$
where  
$\displaystyle OPT = \min_{\hat U, \hat V \in \{0,1\}^{k \times n}} \norm{W \circ (A- U \cdot V)}_0 $ and $U \cdot V$ denotes Boolean matrix multiplication.
\end{reptheorem}
\begin{proof}
As discussed in Definition \ref{def:nonDet}, $N(f)$ is the minimum number so that there is a protocol inducing  a set of $t = 2^{N(f)}$ possibly overlapping rectangles $R_1,\ldots, R_t$ such that for any $x,y \in \{0,1\}^{\log n}$ with $f(x,y) = 1$, $(x,y)$ is in at least one of these rectangles and for any $x,y$ with $f(x,y) = 0$, $(x,y)$ is in none of these rectangles. Let $W_{R_i}$ be the binary matrix that is one on $R_i$ and zero elsewhere. Equivalently, we have $W_{R_1} + \ldots + W_{R_t} = W$ where $+$ denotes Boolean addition. 
Let $\bar U_i, \bar V_i = \argmin_{\hat U, \hat V \in \{0,1\}^{k \times n}} \norm{A \circ W_{R_i} -\hat U \cdot \hat V}_0$. Note that $\hat U \cdot \hat V$ only has support on $R_i$ and is $0$ elsewhere.
Let $\bar U = [U_1,\ldots, U_t]$ and $\bar V = [V_1;\ldots; V_t]$. Note that $\bar U \cdot \bar V$ only has support on $R_1 \cup \ldots \cup R_t$ and is zero wherever $W_{R_1} + \ldots + W_{R_t} = W$ is $0$. Using this fact:
\begin{align*}
\norm{(A-U\cdot V)\circ W}_0 \le \norm{A\circ W -U\cdot V}_0 &\le \norm{A \circ W - \bar U \cdot \bar V}_0 + \Delta\\
&= \norm{(A \circ W - \bar U \cdot \bar V)\circ W}_0 + \Delta.
\end{align*}
We can then bound via triangle inequality (critically using Booleanity here so that we can write $A \circ W = \sum_{i=1}^t A \circ W_{R_i}$):
\begin{align*}
\norm{(A \circ W - \bar U\cdot  \bar V)\circ W}_0 \le \sum_{i=1}^t \norm{A \circ W_{R_i} - \bar U_i \cdot \bar V_i}_0 \le t \cdot OPT,
\end{align*}
which gives the theorem since $t = 2^{N(f)}$.
\end{proof}

\subsection{Applications of Boolean Low-Rank Approximation Theorem}

Using Theorem \ref{thm:nondet} we can give, for example, a bicriteria result for Boolean low-rank plus (block) diagonal approximation. Note that we could also apply Corollary   \ref{cor:lrpb} here, which uses 1-sided randomized communication complexity. The two theorems gives different tradeoffs between rank and accuracy.
\begin{corollary}[Boolean Low-Rank Plus Block Diagonal Approximation]\label{cor:lrpbBool} Consider any partition $B_1 \cup B_2\cup  \ldots \cup B_b = [n]$ and let $W \in \{0,1\}^{n \times n}$ be the block diagonal  matrix with $W_{i,j} = 0$ if $i,j \in B_k$ for some $k$ and $W_{i,j} = 1$ otherwise.  
Then for $k' \ge 8 k \lceil \log b\rceil$ and $U,V$ satisfying $\norm{A \circ W - U\cdot V}_0 \le \min_{\hat U , \hat V \in \{0,1\}^{k' \times n}} \norm{A \circ W - \hat U\cdot  \hat V}_0 + \Delta$:
\begin{align*}
\norm{(A-U\cdot V) \circ W}_0 \le 8 \lceil \log b\rceil \cdot OPT + \Delta,
\end{align*}
where $\displaystyle OPT = \min_{\hat U, \hat V \in \{0,1\}^{k \times n}} \norm{ (A- U \cdot V)\circ W}_0 $  and $U \cdot V$ denotes Boolean matrix multiplication.
\end{corollary}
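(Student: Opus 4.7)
The plan is to derive this corollary as a direct application of Theorem~\ref{thm:nondet}, with the main task being to bound the nondeterministic communication complexity of the function $f$ computed by $W$. First, I would observe that $W_{x,y} = 0$ iff $x$ and $y$ lie in the same block $B_k$, so if we define $\kappa: [n] \to [b]$ by $\kappa(x) = k$ whenever $x \in B_k$, then $f(x,y) = \mathbf{1}[\kappa(x) \neq \kappa(y)]$. In other words, $f$ is simply the inequality function applied to $\lceil \log b\rceil$-bit block labels that each player can compute locally from their input.

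Next I would invoke Theorem~\ref{thm:neq}: on $m$-bit inputs, $N(\mathrm{NEQ}) \le \lceil \log m \rceil + 2$. Applying this with $m = \lceil \log b\rceil$ (Alice and Bob first reduce to their block labels, then run the inequality protocol) yields $N(f) \le \lceil \log \lceil \log b \rceil \rceil + 2$. Using the standard bound $2^{\lceil \log t\rceil} \le 2t$, this gives
\[
2^{N(f)} \;\le\; 4 \cdot 2^{\lceil \log \lceil \log b\rceil \rceil} \;\le\; 8 \lceil \log b\rceil.
\]

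Finally, I would plug this bound into Theorem~\ref{thm:nondet}. Any $k' \ge k \cdot 2^{N(f)}$ suffices, so in particular $k' \ge 8 k \lceil \log b \rceil$ works, and for any $U,V \in \{0,1\}^{n \times k'}$ satisfying the stated near-optimality condition, the theorem gives
\[
\|W \circ (A - U\cdot V)\|_0 \;\le\; 2^{N(f)} \cdot OPT + \Delta \;\le\; 8 \lceil \log b\rceil \cdot OPT + \Delta,
\]
which is exactly the claimed guarantee. There is no real obstacle here: the whole argument is a chain of black-box applications, and the only small care needed is to verify that the nondeterministic protocol for $\mathrm{NEQ}$ on block labels (rather than raw indices in $[n]$) is what yields the correct $\lceil \log b\rceil$ (as opposed to $\lceil \log n \rceil$) dependence in the final bound.
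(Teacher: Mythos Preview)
Your proposal is correct and follows essentially the same approach as the paper: identify $f$ as $\mathrm{NEQ}$ on $\lceil \log b\rceil$-bit block labels, apply Theorem~\ref{thm:neq} to get $N(f) \le \lceil \log \lceil \log b\rceil \rceil + 2$ and hence $2^{N(f)} \le 8\lceil \log b\rceil$, and then plug into Theorem~\ref{thm:nondet}.
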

Note that if $W$ is simply $1-I$, corresponding to the standard low-rank plus diagonal approximation problem, $k' = O(k \log n)$ and the approximation factor is $O(\log n)$.
\begin{proof}
$W$ is the communication matrix of the inequality  problem NEQ where Alice and Bob first map their inputs to the index of the block containing them. By Theorem \ref{thm:neq} we have $N(f) \le \lceil \log\left ( \lceil \log b\rceil\right ) \rceil + 2$ and so $2^{N(f)} \le 8 \lceil \log b \rceil$, which gives the corollary.
\end{proof}
We note that, as in Corollary \ref{cor:toep}, a similar bound can be given where $W$ is the binary Toeplitz matrix corresponding to inequality mod $p$. 

In some cases, the nondeterministic communication complexity can be much lower than the randomized communication complexity, allowing us to obtain much tighter bicriteria bounds. For example we can consider a sparsity pattern corresponding to the disjointness function:
\begin{corollary}\label{cor:kdisj} Let $W \in \{0,1\}^{n \times n}$ have $W_{i,j} = 0$ if, letting $x,y \in \{0,1\}^{\log n}$ be the binary representations of $i,j$ respectively, there is no $k$ for which $x(k) = y(k) = 1$ (i.e., $x$ and $y$ are disjoint). Otherwise, let $W_{i,j} = 1$.
Then for $k' \ge 8 k \log n$ and $U,V$ satisfying $\norm{A \circ W - U\cdot V}_0 \le \min_{\hat U , \hat V \in \{0,1\}^{k' \times n}} \norm{A \circ W - \hat U\cdot  \hat V}_0 + \Delta$:
\begin{align*}
\norm{(A-U\cdot V) \circ W}_0 \le 8 \log n \cdot OPT + \Delta
\end{align*}
where $\displaystyle OPT = \min_{\hat U, \hat V \in \{0,1\}^{k \times n}} \norm{ (A- U \cdot V)\circ W}_0 $ and $U \cdot V$ denotes Boolean matrix multiplication.
\end{corollary}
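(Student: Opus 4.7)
The plan is to simply apply Theorem \ref{thm:nondet} together with the nondeterministic communication bound for disjointness in Theorem \ref{thm:disj}. First I would observe that the function $f$ corresponding to $W$ is exactly $\neg \mathrm{DISJ}$ on $\log n$-bit inputs: by definition $W_{i,j}=1$ precisely when there exists some coordinate $k$ with $x(k)=y(k)=1$, where $x,y \in \{0,1\}^{\log n}$ are the binary representations of $i,j$. This is the standard communication matrix for $\neg \mathrm{DISJ}$.

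Next I would invoke Theorem \ref{thm:disj}, which states that for $N$-bit inputs, $N(\neg \mathrm{DISJ}) \le \lceil \log N\rceil + 2$. Since our inputs are $\log n$-bit strings, we plug in $N=\log n$ to obtain
\[
N(f) = N(\neg \mathrm{DISJ}) \le \lceil \log\log n\rceil + 2,
\]
so that $2^{N(f)} \le 4 \cdot 2^{\lceil \log\log n\rceil} \le 8\log n$.

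Finally, I would apply Theorem \ref{thm:nondet} with this bound: for any $k' \ge 2^{N(f)}\cdot k$, which is implied by the hypothesis $k' \ge 8 k \log n$, and any $U,V \in \{0,1\}^{n \times k'}$ with $\|A\circ W - U\cdot V\|_0 \le \min_{\hat U,\hat V} \|A \circ W - \hat U \cdot \hat V\|_0 + \Delta$, Theorem \ref{thm:nondet} yields
\[
\|(A - U\cdot V)\circ W\|_0 \le 2^{N(f)} \cdot OPT + \Delta \le 8\log n \cdot OPT + \Delta,
\]
which is exactly the claim. There is no real obstacle here: the corollary is a direct instantiation of two previously established results, with the only mildly delicate step being the replacement of the ambient input length (the paper's convention of $n$ in Theorem \ref{thm:disj}) by $\log n$ in our application, since the communication problem is posed on $(\log n)$-bit indices of an $n \times n$ mask.
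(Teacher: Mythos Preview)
Your proposal is correct and follows essentially the same approach as the paper: identify $f$ as $\neg\mathrm{DISJ}$ on $\log n$-bit inputs, apply Theorem~\ref{thm:disj} to bound $N(f) \le \lceil \log\log n\rceil + 2$ so that $2^{N(f)} \le 8\log n$, and then invoke Theorem~\ref{thm:nondet}. The paper's proof is exactly this, stated more tersely.
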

\begin{proof}
$W$ is the communication matrix of the negation of the disjointness function $\neg DISJ$ on $\log n$ bit strings, which by Theorem \ref{thm:disj} has communication complexity $N(\neg DISJ) = \lceil \log(\log n) \rceil + 2$. We thus have $2^{N(f)} \le 8 \log n$, which yields the corollary.
\end{proof}
The randomized communication complexity of $W$ in Corollary \ref{cor:kdisj} is the same as the randomized complexity for set disjointness, which is $\Theta(\log n)$ \cite{kalyanasundaram1992probabilistic} on $\log n$ bit inputs. Plugging this complexity e.g. into Theorem \ref{thm:rand} would thus require rank $k' = \poly(n)$.

\section{Lower Bounds}\label{sec:lower}

As discussed, solving Problem \ref{def:main}, even up to additive error $\Theta(1) \cdot \norm{A}_F^2$ (i.e. achieving \eqref{eqn:main} with $\epsilon = \Theta(1)$), was shown by  \cite{razenshteyn2016weighted} to require $2^{\Omega(r)}$ time when $W$ is rank-$r$. Essentially all weight matrices of interest are not low-rank and so, to solve Problem \ref{def:main} in polynomial time, it seems that resorting to bicriteria approximation is necessary.
A natural open question is: what bicriteria rank $k'$ is required? Can this rank be characterized by some natural measure of $W$'s complexity? Our main Theorems \ref{thm:1} and \ref{thm:2} can be shown to hold with rank $k'$ equal to the $k$ times the public coin partition bound of W \cite{jain2014quadratically}. The log of this bound can be polynomially smaller than the randomized communication complexity \cite{goos2017randomized}. Thus, the communication complexity itself does not tightly characterize the bicriteria rank. However, for some classes of weight matrices, we can lower bound the bicriteria rank in terms of the communication complexity. We view such lower bounds as a first step in better understanding the hardness of bicriteria masked low-rank approximation.
We give two results, based on the following conjecture on the hardness of approximate $3$-coloring:
\begin{conjecture}[Hardness of Approximate $3$-coloring]\label{conj:main}
For some fixed $\gamma > 0$, there is no polynomial time algorithm that given a $3$-colorable graph $G$ on $n$ nodes returns a valid $n^\gamma$ coloring of $G$.
\end{conjecture}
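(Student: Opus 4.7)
The statement is an open conjecture rather than a theorem with a known proof, so rather than sketching a complete argument I outline the approach one would take to establish it and where the main obstacles lie. The conjecture asserts polynomial-factor inapproximability of the chromatic number on 3-colorable graphs, extending a long line of hardness-of-approximation results for coloring.

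The plan is to build a PCP-style reduction. First I would start from a suitably hard Label Cover instance $(G,\Sigma,\Pi)$ with near-perfect completeness and very small soundness, obtained either from the PCP theorem with parallel repetition or from a stronger starting point such as the $2$-to-$2$ Games Theorem of Khot--Minzer--Safra. Next I would design a graph gadget $H$ on $N$ vertices with two properties: (completeness) if the Label Cover instance is satisfiable then $H$ admits a proper $3$-coloring, and (soundness) if no labeling satisfies more than an $\eta$-fraction of constraints then $\chi(H)\geq N^{\gamma}$. The completeness side typically relies on odd-cycle or FGLSS-style gadgets where each satisfying label induces a canonical $3$-coloring and consistency across constraints glues local colorings together.

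The main obstacle, and where I would expect most of the work to go, is the soundness analysis. One must show that any coloring of $H$ using fewer than $N^{\gamma}$ colors can be decoded into a labeling of the Label Cover instance satisfying an $\omega(\eta)$-fraction of constraints. The standard machinery here is to view color classes as independent sets in a product structure, apply hypercontractivity or invariance-principle arguments to extract ``influential'' coordinates from each large color class, and use these coordinates as labels. Pushing this approach from constant-factor hardness (as in Khanna--Linial--Safra and its successors) or poly-logarithmic hardness (as in Dinur--Mossel--Regev and Huang under UGC-like hypotheses) up to a polynomial factor $N^{\gamma}$, while keeping the completeness gadget $3$-colorable, is precisely the step where all known techniques stall.

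The tension is structural: completeness demands a highly restrictive $3$-chromatic gadget, which limits how much combinatorial complexity can be embedded, while soundness demands enough complexity that any low-chromatic coloring decodes to a good labeling. Until a new composition or gap-amplification technique is found that respects $3$-colorability in the completeness case, the conjecture is best treated as a working hypothesis. Evidence supporting it includes the fact that the best known polynomial-time algorithms (Karger--Motwani--Sudan, Blum--Karger, Arora--Chlamtac, Chlamtac, Kawarabayashi--Thorup, and follow-ups) color $3$-colorable graphs with roughly $\tilde{O}(n^{0.199\ldots})$ colors and have resisted further improvement for decades, matching the qualitative shape of the conjectured polynomial barrier for some $\gamma>0$.
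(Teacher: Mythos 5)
This statement is Conjecture \ref{conj:main}, which the paper does not prove: it is stated as a hardness assumption underlying the lower bounds of Section \ref{sec:lower}, supported only by the cited evidence that decades of work on approximate $3$-coloring (currently achieving exponent roughly $0.207$) has not broken the polynomial barrier and that relaxation-based approaches appear unable to do so. You correctly treat it as open rather than manufacturing a proof, and your survey of the evidence and of where a PCP-style reduction would stall is consistent with the paper's own framing, so there is nothing to reconcile.
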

It is known that chromatic number of a graph in general is hard to approximate beyond an $n^\gamma$ factor \cite{zuckerman2006linear}. While a hardness result beyond $\Omega(1)$ has not been shown when the chromatic number is  $3$ \cite{dinur2009conditional}, a long line of work on approximate $3$-coloring has failed to break the $n^\gamma$ approximation barrier \cite{wigderson1983improving,blum1994new,karger1998approximate}, with the smallest known $\gamma$ currently $0.207$ \cite{arora2006new}. This line of work uses a relaxation-based approach to the problem and there is some evidence that this approach cannot go beyond a polynomial approximation factor \cite{szegedy1994note,feige2004graphs,dinur2009conditional}. A related but weaker conjecture on the hardness of coloring was used to show hardness of bicriteria approximate matrix completion in \cite{hardt2014computational}. Finally, we note that even if Conjecture \ref{conj:main} does not hold, our lower bounds still show that improving the bicriteria approximation factor for masked low-rank approximation would lead to a breakthrough in the $3$-coloring problem: the discovery of a polynomial time algorithm with a sub-polynomial  approximation factor.
Assuming Conjecture \ref{conj:main}, we show that there is a class of weight matrices such that:

\begin{enumerate}
\item A near-linear dependence of the bicriteria rank on the deterministic communication complexity, $k' = \Omega \left (\frac{D(f)}{\log D(f)}\right )$ is required to obtain polynomial runtime, even to achieve within additive error $\epsilon \cdot \norm{A}_F^2$ of OPT for small enough $\epsilon$. Note that $D(f) \ge R_{\epsilon}^{1-sided}(\neg f)$ and so this bound is only stronger than a near-linear bound in terms of $R_{\epsilon}^{1-sided}(\neg f)$.
\item An exponential dependence of the bicriteria rank on the deterministic communication complexity, $k' = 2^{\Omega(D(f))}$ is required for two natural variants of the masked low-rank approximation problem -- when the low-rank approximation $L$ is required to have a non-negative or binary factorization. In the binary case, the bound holds even for algorithms that achieve within additive error $\epsilon \cdot \norm{A}_F^2$ of OPT for any constant $\epsilon < 1$. 

We note that our techniques yield matching algorithmic results analogous to Theorems \ref{thm:1} and \ref{thm:2} for these variants. We also note that in the parameter regimes considered (we just require $k = 3$), there exist polynomial time algorithms for the \emph{non-masked} versions of these variants. Thus, the hardness in terms of communication complexity comes from adding the mask to the low-rank cost function rather than the binary and non-negativity constraints themselves.
\end{enumerate}

Our lower bounds are closely related to those of \cite{hardt2014computational} on the hardness of bicriteria low-rank matrix completion. We note that for any $n \times n$ mask matrix $W$, we can always bound $D(f) = O(\log n)$. Thus, achieving a $2^{o(D(f))}$ bicriteria approximation factor means achieving an approximation factor sub-polynomial  in $n$. \cite{hardt2014computational} leaves open if achieving a $\sqrt{n}$ bicriteria approximation to rank-$3$ matrix completion is hard (Question 4.3 in \cite{hardt2014computational}), and more generally asks what bicriteria approximation is achievable in polynomial time (Question 4.2 in \cite{hardt2014computational}). 

\subsection{Exponential Communication Complexity Lower Bound for Binary and Non-Negative Masked Low-Rank Approximation}\label{sec:expLB}

We start with our lower bound for two variants of Problem \ref{def:main} where the low-rank approximation $L$ is required to have a binary or non-negative factorization. Binary and non-negative matrix factorization are both well-studied problems in the non-masked setting \cite{lee2001algorithms,ding2005equivalence,arora2012computing,shen2009mining,dan2015low, fomin2018parameterized,kumar2019faster} and our algorithmic results here may be of independent interest.
\begin{problem}[Masked Binary Low-Rank Approximation]\label{def:main2}
Given $A \in \R^{n \times n}$, binary $W \in \{0,1\}^{n \times n}$, and rank parameter $k$, find binary $U,V \in \{0,1\}^{n \times k}$  minimizing:
$$\|W \circ (A-UV^T)\|_F^2 = \sum_{i,j \in [n]} W_{i,j} \cdot (A_{i,j} - (UV^T)_{i,j})^2.$$
\end{problem}
\begin{problem}[Masked  Non-Negative Low-Rank Approximation]\label{def:main3}
Given $A \in \R^{n \times n}$, binary $W \in \{0,1\}^{n \times n}$, and rank parameter $k$, find $U,V \in \R^{n \times k}$  with non-negative entries minimizing:
$$\|W \circ (A-UV^T)\|_F^2 = \sum_{i,j \in [n]} W_{i,j} \cdot (A_{i,j} - (UV^T)_{i,j})^2.$$
\end{problem}

Both these variants admit an efficient bicriteria solution for constant $k$ using similar arguments to our previous bounds (in particular, Theorem \ref{thm:rand}). 

\begin{theorem}[Masked Binary Low-Rank Approximation]\label{thm:binary1}
Consider $W \in \{0,1\}^{n \times n}$ and let $f$ be the function computed by it. Assuming knowledge of a randomized communication protocol for $\neg f$ achieving complexity $R_{\epsilon}^{1-sided}(\neg f)$, there is an algorithm running in $2^{R_{\epsilon}^{1-sided}(\neg f) \cdot k^2 \log k} \cdot \poly(n)$ time that outputs, $U,V \in \{0,1\}^{n \times k'}$ with $k' = k \cdot 2^{R_{\epsilon}^{1-sided}(\neg f)}$ satisfying with high probability:
\begin{align*}
\norm{(A-UV^T) \circ W}_F \le O(1) \cdot OPT + \epsilon \norm{A}_F,
\end{align*}
where $OPT$ is the optimum value of Problem \ref{def:main2}. Since $2^{R_{\epsilon}^{1-sided}(\neg f)} \le n$ for all $W$, the runtime is polynomial when $k = O(1)$.
\end{theorem}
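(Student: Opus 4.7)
The plan is to mirror the proof of Theorem \ref{thm:rand}, replacing the non-constructive choice of an optimal rank-$k$ approximation on each rectangle with an explicit call to a binary rank-$k$ approximation subroutine. First, invoke the given 1-sided randomized protocol $\Pi$ for $\neg f$; it induces, via its public random string, a distribution over partitions of $[n]\times[n]$ into at most $2^{R^{1-sided}_{\epsilon}(\neg f)}$ combinatorial rectangles labeled $0$ or $1$, with (random) label matrix $W_{\Pi}$ satisfying $W_{\Pi}\le W$ entrywise (by 1-sidedness) and $\Pr[(W_{\Pi})_{ij}=0 \mid W_{ij}=1]\le \epsilon$. As in the proof of Theorem \ref{thm:rand} (running the protocol with error $\epsilon^{2}$, which alters the exponent only by a constant factor for the typical protocols used in applications), there is a fixed random string with $\|A\circ W\circ(1-W_{\Pi})\|_F^{2}\le \epsilon^{2}\|A\|_F^{2}$, found by sampling $O(1)$ candidates and verifying in $\poly(n)$ time.

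Next, for each of the $\le 2^{R^{1-sided}_{\epsilon}(\neg f)}$ rectangles $R_{j}$ labeled $1$ by $\Pi$, run a known $O(1)$-approximation algorithm for ordinary (unmasked) binary rank-$k$ approximation on the submatrix $A_{R_{j}}$; a standard enumeration-over-structured-candidate-columns approach yields such a subroutine in $\poly(n)\cdot 2^{O(k^{2}\log k)}$ time. Let $U^{R_{j}},V^{R_{j}}\in\{0,1\}^{|R_{j}|\times k}$ denote the returned factors. Pad each outside the row/column index sets of $R_{j}$ with zeros and concatenate across $j$ to obtain $U,V\in\{0,1\}^{n\times k'}$ with $k'=k\cdot 2^{R^{1-sided}_{\epsilon}(\neg f)}$. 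Since the rectangles are disjoint and labeled $1$, the product $UV^{T}$ (for which Boolean and real arithmetic agree, since columns of $U$ corresponding to distinct rectangles have disjoint supports) is supported entirely inside the support of $W_{\Pi}\subseteq W$.

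For the analysis, decompose
\begin{align*}
\|(A-UV^{T})\circ W\|_F \;\le\; \|(A-UV^{T})\circ W_{\Pi}\|_F \;+\; \|A\circ W\circ(1-W_{\Pi})\|_F,
\end{align*}
using $UV^{T}\circ(1-W_{\Pi})=0$. The second term is at most $\epsilon\|A\|_F$ by the choice of randomness. For the first, because rectangles labeled $1$ partition the support of $W_{\Pi}$ and $W_{\Pi}\subseteq W$, the per-rectangle $O(1)$-approximation guarantee together with the fact that $(L_{opt})_{R_{j}}$ is a legal binary rank-$k$ candidate on each rectangle gives
\begin{align*}
\|(A-UV^{T})\circ W_{\Pi}\|_F^{2} \;\le\; O(1)\sum_{j}\|A_{R_{j}} - (L_{opt})_{R_{j}}\|_F^{2} \;\le\; O(1)\cdot OPT^{2},
\end{align*}
which yields the claimed $\|(A-UV^{T})\circ W\|_F \le O(1)\cdot OPT + \epsilon\|A\|_F$ after taking square roots. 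The total runtime is $2^{R^{1-sided}_{\epsilon}(\neg f)}$ invocations of a $\poly(n)\cdot 2^{O(k^{2}\log k)}$ subroutine, which fits within the stated bound.

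The main obstacle is securing a constant-factor approximation algorithm for unmasked binary rank-$k$ approximation running in $\poly(n)\cdot 2^{O(k^{2}\log k)}$ time that composes cleanly with the communication-based partition. The composition relies crucially on 1-sidedness ($W_{\Pi}\le W$): it ensures that the \emph{unmasked} cost on each 1-rectangle equals the \emph{masked} cost of $L_{opt}$ restricted there, so summing per-rectangle $O(1)$-approximation guarantees yields a global $O(1)$-approximation against the masked optimum. Without 1-sidedness a 2-sided analog would incur an additional $\norm{L_{opt}\circ (1-W)}_F$ term as in Theorem \ref{thm:2}.
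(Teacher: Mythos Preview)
Your proposal is correct and follows essentially the same approach as the paper: draw a random string for the 1-sided protocol, on each rectangle labeled $1$ run the $2^{O(k^2\log k)}\cdot\poly(n)$-time $O(1)$-approximate binary rank-$k$ algorithm of \cite{kumar2019faster}, concatenate the resulting factors, and invoke the analysis of Theorem~\ref{thm:rand}. The paper's proof is a two-sentence sketch; you are somewhat more explicit about two points the paper glosses over: (i) to obtain the additive $\epsilon\|A\|_F$ term in the \emph{unsquared} Frobenius norm one effectively needs error probability $\epsilon^2$ in the protocol (the paper simply says the bound ``holds in expectation''), and (ii) you spell out why 1-sidedness makes the unmasked per-rectangle cost coincide with the masked cost of $L_{opt}$ there. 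One small correction: to achieve the ``with high probability'' guarantee you should repeat $O(\log n)$ times and take the best outcome (as the paper does), not $O(1)$ times.
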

\begin{proof}
Since we have knowledge of a communication protocol for $\neg f$, we can explicitly compute an $O(1)$-optimal binary rank-$k$ approximation of each monochromatic rectangle colored $1$ by the protocol. Each computation requires $2^{O(k^2 \log k)} \poly(n)$ time using the algorithm of \cite{kumar2019faster}. By the argument used in Theorem \ref{thm:rand} the given error bound will hold in expectation over the randomized protocol, and can be achieved with high probability by repeating the algorithm a logarithmic number of times and choosing the best approximation found.
\end{proof}

\begin{theorem}[Masked Non-Negative Low-Rank Approximation]\label{thm:nonneg}
Consider $W \in \{0,1\}^{n \times n}$ and let $f$ be the function computed by it. Assuming knowledge of a randomized communication protocol for $\neg f$ achieving complexity $R_{\epsilon}^{1-sided}(\neg f)$, there is an algorithm running in $2^{R_{\epsilon}^{1-sided}(\neg f)} \cdot n^{O(k^2)}$ time  time that outputs non-negative $U,V \in \R^{n \times k'}$ with $k' = k \cdot 2^{R_{\epsilon}^{1-sided}(\neg f)}$ satisfying with high probability:
\begin{align*}
\norm{(A-UV^T) \circ W}_F \le \epsilon \norm{A}_F,
\end{align*}
when the optimum value of Problem \ref{def:main3} is $OPT = 0$. Since $2^{R_{\epsilon}^{1-sided}(\neg f)} \le n$ for all $W$, the runtime is polynomial when $k = O(1)$.
\end{theorem}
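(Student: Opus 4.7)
The plan is to mirror the structure of Theorem \ref{thm:binary1}, replacing the binary low-rank approximation subroutine inside each rectangle with an \emph{exact} non-negative matrix factorization algorithm -- for instance the $n^{O(k^{2})}$-time algorithm of Arora, Ge, Kannan, and Moitra. The critical observation making this tractable is that $OPT = 0$ supplies non-negative $U^{\star}, V^{\star} \in \R^{n \times k}$ with $A_{x,y} = (U^{\star} V^{\star T})_{x,y}$ on every entry where $W_{x,y} = 1$; hence for any rectangle $S \times T \subseteq \{(x,y) : W_{x,y} = 1\}$, the submatrix $A_{S \times T} = U^{\star}_{S}(V^{\star}_{T})^{T}$ has non-negative rank at most $k$ and admits an exact NMF in $n^{O(k^{2})}$ time. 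I would first invoke the given 1-sided $\neg f$-protocol with error parameter $\epsilon' = \Theta(\epsilon^{2})$, producing a partition of $W$ into $t \le 2^{R^{1-sided}_{\epsilon'}(\neg f)}$ rectangles, each labeled $0$ or $1$ by the protocol. By 1-sidedness (zero error when $\neg f = 1$, i.e., when $W = 0$), every rectangle labeled $0$ by this protocol lies entirely inside $\{W = 1\}$; dually, each entry with $W_{x,y} = 1$ lands in some $0$-labeled rectangle with probability at least $1 - \epsilon'$ over the protocol's coins.

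For each rectangle $R_{i} = S_{i} \times T_{i}$ labeled $0$, I run the exact NMF algorithm on $A_{R_{i}}$ to obtain non-negative factors $U_{i}, V_{i}$ with $U_{i} V_{i}^{T} = A_{R_{i}}$. Padding by zero rows outside $S_{i}$ and $T_{i}$ yields $n \times k$ matrices $\tilde U_{i}, \tilde V_{i}$, and horizontally concatenating across all $0$-labeled rectangles gives $U = [\tilde U_{1} \mid \cdots \mid \tilde U_{t}]$ and $V = [\tilde V_{1} \mid \cdots \mid \tilde V_{t}]$, each $n \times k'$ with $k' = kt \le k \cdot 2^{R^{1-sided}_{\epsilon'}(\neg f)}$. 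Because the $0$-labeled rectangles are pairwise disjoint, $\bar L \eqdef U V^{T}$ agrees exactly with $A$ on their union and vanishes elsewhere; in particular $\bar L$ is non-negative and has rank at most $k'$.

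The only $W = 1$ entries contributing to $\norm{(A - \bar L) \circ W}_{F}^{2}$ are those landing in a $1$-labeled rectangle, and by the 1-sided bound each such entry occurs with probability at most $\epsilon'$; hence $\E \norm{(A - \bar L) \circ W}_{F}^{2} \le \epsilon' \cdot \norm{A \circ W}_{F}^{2} \le \epsilon' \norm{A}_{F}^{2}$. Setting $\epsilon' = \Theta(\epsilon^{2})$ and applying Markov yields $\norm{(A - \bar L) \circ W}_{F} \le \epsilon \norm{A}_{F}$ with constant probability in a single run; since the error is computable post hoc in $O(\nnz(A))$ time, repeating $O(\log n)$ times and returning the best output achieves the stated high-probability guarantee. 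The main obstacle is securing a polynomial-time exact NMF subroutine that returns a true factorization whenever the target non-negative rank is at most $k$ -- unlike approximate NMF, this is what we need per rectangle so as not to pollute the error on the support of $W$. This is precisely what the Arora--Ge--Kannan--Moitra algorithm provides in $n^{O(k^{2})}$ time, and once granted, the rest of the argument is a direct adaptation of the rectangle-covering analysis of Theorem \ref{thm:rand}.
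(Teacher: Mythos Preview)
Your approach is essentially identical to the paper's: it too runs an exact NMF subroutine (the paper cites Moitra's $n^{O(k^{2})}$ algorithm rather than Arora--Ge--Kannan--Moitra, but the role is the same) on each rectangle of the protocol's partition lying inside $\{W=1\}$, pads and concatenates the resulting factors, and then appeals to the rectangle-covering analysis of Theorem~\ref{thm:rand} together with repetition for the high-probability guarantee.

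The one wrinkle worth noting is your choice $\epsilon' = \Theta(\epsilon^{2})$ for the protocol's error parameter. You do this to match the \emph{unsquared} bound $\norm{(A-UV^T)\circ W}_F \le \epsilon\norm{A}_F$ in the statement, but it means your rank is $k' = k\cdot 2^{R^{1\text{-}sided}_{\epsilon'}(\neg f)}$ rather than the stated $k\cdot 2^{R^{1\text{-}sided}_{\epsilon}(\neg f)}$, and the theorem only assumes access to a protocol with error parameter $\epsilon$, not $\epsilon^{2}$. The paper's own proof simply defers to Theorem~\ref{thm:rand}, which (for $\norm{\cdot}_\star = \norm{\cdot}_F^2$) yields squared error $\le \epsilon\norm{A}_F^2$, i.e.\ unsquared error $\le \sqrt{\epsilon}\,\norm{A}_F$; so the statement appears to be loose by a square root on the paper's side as well, and your adjustment is a reasonable way to reconcile it.
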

\begin{proof}
The proof is analogous to that of Theorem \ref{thm:binary1}. Each rank-$k$ non-negative low-rank approximation for a monochromatic rectangle can be computed in $n^{O(k^2)}$ via \cite{moitra2012singly}. This algorithm requires an exact non-negative factorization to exist, which it does for each monochromatic rectangle when $OPT = 0$ since we have a 1-sided error protocol for $\neg f$. We could apply the results of \cite{arora2012computing} to the case when $OPT > 0$ and/or two sided error is allowed in the communication protocol, although the bounds are somewhat more complex.
\end{proof}

We now show that the bicriteria factor of Theorems \ref{thm:binary1} and \ref{thm:nonneg} cannot be improved significantly under Conjecture \ref{conj:main}. For Theorem \ref{thm:binary1} we give a general lower bound, applying for any additive error $\epsilon \norm{A}_F^2$ with constant $\epsilon < 1$. For Theorem \ref{thm:nonneg}, our lower bound only applies to the case where $\epsilon = 0$. Our lower bounds are in terms of the deterministic communication complexity of $f$, which is only higher than $R_{\epsilon}^{1-sided}(\neg f)$.

\begin{theorem}[Masked Binary/Non-Negative Low-Rank Approximation Lower Bound] \label{thm:lb1} Assuming Conjecture \ref{conj:main}, there is no polynomial time algorithm achieving the guarantee of Theorem \ref{thm:binary1} with and constant $\epsilon < 1$ and rank $k' = 2^{o(D(f))}$. There is also no polynomial time algorithm achieving the guarantee of Theorem \ref{thm:nonneg} with $\epsilon = 0$ and rank $k' = 2^{o(D(f))}$.
\end{theorem}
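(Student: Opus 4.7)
The plan is to reduce from the hardness of approximately $3$-coloring a $3$-colorable graph (Conjecture \ref{conj:main}). Given a $3$-colorable $G = ([n], E)$, I set $A = I_n$ and define the mask $W \in \{0,1\}^{n \times n}$ by $W_{i,j} = 1$ iff $i = j$ or $(i,j) \in E$. For generic $G$ the mask function $f$ satisfies $D(f) = \Theta(\log n)$, and a simple padding argument enforces this without destroying the hardness of $3$-coloring. The crux is a correspondence between proper $k$-colorings of $G$ and rank-$k$ binary (respectively, nonnegative) factorizations of $A$ on $\supp(W)$: a coloring $c$ yields the symmetric factorization $U = V$ with $U_{i,\ell} = \mathbf{1}[c(i) = \ell]$, giving $(UV^T)_{i,i} = 1$ and $(UV^T)_{i,j} = 0$ on every edge, while conversely any $UV^T$ matching $A$ on $\supp(W)$ decodes to a proper coloring via $c(i) \in \supp(U_i) \cap \supp(V_i)$; this intersection is nonempty because $L_{i,i} > 0$, and the constraint $L_{i,j} = 0$ on an edge forces $\supp(U_i) \cap \supp(V_j) = \emptyset$, hence $c(i) \neq c(j)$. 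In particular $OPT = 0$ whenever $k = 3$.

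The nonnegative case is then immediate: a polynomial-time algorithm with $\epsilon = 0$ and bicriteria rank $k' = 2^{o(D(f))} = n^{o(1)}$ returns an exact factorization, which decodes in polynomial time to a proper $n^{o(1)}$-coloring of $G$, contradicting Conjecture \ref{conj:main}. For the binary case, the algorithm returns $UV^T$ with $\|(A - UV^T) \circ W\|_F \le \epsilon \sqrt{n}$, so at most $\epsilon^2 n$ masked entries disagree with $A$ (each disagreement contributes at least $1$ to the squared Frobenius error). Let $R$ consist of every vertex whose diagonal entry is wrong together with one endpoint of each bad edge; then $|R| \le \epsilon^2 n$, and the decoding gives a proper $k'$-coloring of $V \setminus R$. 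I then recursively invoke the supposed algorithm on $G[R]$ (still $3$-colorable) with a fresh batch of $k'$ colors. Since $|R|$ shrinks by a factor of at most $\epsilon^2 < 1$ per level, $O(\log n / \log(1/\epsilon^2)) = O(\log n)$ rounds terminate the recursion, using $O(k' \log n) = n^{o(1)}$ colors overall and again contradicting Conjecture \ref{conj:main}.

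The main obstacle is controlling the additive bicriteria error in the binary case, which prevents any single call from producing an exactly valid coloring. The geometric shrinkage $|R_{t+1}| \le \epsilon^2 |R_t|$ is the engine that resolves this: it holds for every constant $\epsilon < 1$ (not merely small $\epsilon$), so $O(\log n)$ recursive rounds always suffice, and the inherited subgraphs $G[R_t]$ remain $3$-colorable with deterministic communication complexity at most $D(f) = O(\log n)$, so the recursive calls stay well-defined. Coupled with Theorems \ref{thm:binary1} and \ref{thm:nonneg}, this pins down the bicriteria rank for masked binary and nonnegative low-rank approximation at $2^{\Theta(D(f))}$ on this class of masks.
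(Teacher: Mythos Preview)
Your proof is correct and follows essentially the same approach as the paper: the same hard instance $A = I_n$ with graph-based mask $W$, the same coloring--factorization correspondence (picking $c(i) \in \supp(U_i)\cap\supp(V_i)$), and the same recursive removal of ``bad'' vertices in the approximate binary case. Your treatment is slightly more careful on two points the paper glosses over---the padding argument ensuring $D(f) = \Theta(\log n)$, and the observation that the recursive subgraphs remain $3$-colorable with communication complexity bounded by $O(\log n)$---but the core argument is the same.
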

\begin{proof}
We follow the reduction in \cite{hardt2014computational} from coloring to matrix completion (i.e., masked low-rank approximation).
Consider any undirected  $n$-node graph $G = (V,E)$, let $A$ be the $n \times n$ identity matrix, and let $W$ be the $n \times n$ mask matrix with $W_{ii} = 1$ for all $i$, $W_{i,j} = 1$ for all $(i,j) \in E$, and $W_{ij} = 0$ for $(i,j) \notin E$.
We can see that any exact masked low-rank approximation $L$ for $A$ with mask $W$ must admit a factorization $UV^T$ where $u_i^T v_i = 1$ for all $i$ and $u_i^T v_j = 0$ for all $j$ that are neighbors of $i$ in $G$. Since $G$ is $3$-colorable, one valid factorization is the binary (and hence also non-negative) factorization where $U = V$ and each row $u_i$ of $U$ is the standard basis vector corresponding to the color assigned to node $i$. Thus for both Problems \ref{def:main2} and \ref{def:main3}, $OPT = 0$.

We first show that finding a rank $k' = 2^{o(D(f))}$ binary or non-negative factorization achieving error $OPT = 0$ is hard under Conjecture \ref{conj:main}. We then make the result robust to additive $\epsilon \norm{A}_F^2$ error in the binary case. 
Consider binary $U,V \in \R^{n \times k'}$ achieving cost $OPT = 0$. Note that $u_i$ and $v_i$ must overlap on exactly $1$ entry to have $u_i^T v_i = 1$ as required (since $A_{ii} = W_{ii} = 1$ for all $i$). Let $Z \in \R^{n \times k'}$ be the matrix whose $i^{th}$ row $z_i = u_i \circ v_i$ just contains this one overlapping entry. $Z$ is clearly efficiently computable from $U,V$, has $z_i^T z_i = 1$ for all $i$, and if $u_i^T v_j = 0$, clearly, has $z_i^T z_j = 0$ since $Z$'s rows only have fewer non-zeros. Thus, $ZZ^T$ is a binary factorization of $A$ also achieving zero error under the mask $W$. Further, each row of $Z$ has just a single non-zero, which corresponds to a color for the $i^{th}$ node in $G$. Thus, $Z$ directly gives a valid $k'$ coloring of $G$. If $k' = 2^{o(D( f))} = 2^{o(\log n)}$ this would refute Conjecture \ref{conj:main}.

A nearly  identical argument holds for masked non-negative matrix factorization. Consider non-negative $U,V \in \R^{n \times k'}$ achieving cost $OPT = 0$ and let $\bar U, \bar V$ have each entry equal to $0$ where $U,V$ are $0$ and equal to $1$ where $U,V$ are $> 0$. $\bar U, \bar V$ would give a binary  factorization with zero error, except that we may have $\bar u_i^T \bar v_i > 1$ for some $i$. However, if we let $Z \in \R^{n \times k'}$ be the matrix whose $i^{th}$ row $z_i$ is given by picking a single entry to $\bar u_i \circ \bar v_i$ and setting the rest to $0$, we will have $z_i^T z_i = 1$ for all $i$ and $z_i^T z_j = 0$ for all $(i,j) \in E$. Thus $Z$ is a binary factorization of $A$ also achieving zero error under the mask $W$ and as before gives a valid $k'$ coloring of $G$. If $k' = 2^{o(D(f))} = 2^{o(\log n)}$ this would refute Conjecture \ref{conj:main}.

Finally, we make the above argument robust to additive $\epsilon \norm{A}_F^2$ error  in the binary case. Since $A$ is just the identity matrix,  and since $U,V$ are binary, additive $\epsilon \norm{A}_F^2$ error implies that $U,V$ do not match $A$ in at most $\epsilon n$ locations. Let $S$ be the set of at most $\epsilon n$ rows on which $U,V$ err at least once. Let $G'$ be the graph with the nodes corresponding to those rows removed. Then as described above, $U,V$ can be used to compute a valid $k'$ coloring of $G'$. Assuming that $\epsilon < 1$, repeating the process $\log_{1-\epsilon} 1/n $ times on the remaining uncolored nodes, since the size of the input is cut by $1-\epsilon$ each time, gives a $\log_{1-\epsilon}(1/n) \cdot k'$ coloring of the full graph $G$. For constant $\epsilon$, if $k' = 2^{o(D(f))} = 2^{o(\log n)}$ this would refute Conjecture \ref{conj:main}.
\end{proof}

\subsection{Near Linear Communication Complexity Lower Bound for Masked Low-Rank Approximation}
We next show a near-linear lower bound on the bicriteria rank in terms of communication complexity for the masked low-rank approximation problem (Problem \ref{def:main}). We conjecture that this bound can be improved to exponential in communication complexity, like the bounds given in Section \ref{sec:expLB}.

Our bound uses a result of \cite{hardt2014computational} which shows that a real valued low-rank factorization for the coloring matrix described in the proof of Theorem \ref{thm:lb1} can be rounded to obtain a large independent set of $G$, which via repetition will yield a small coloring. There is a loss in the size of the coloring due to this rounding scheme, which is why the eventual lower bound is weaker than that of Section \ref{sec:expLB}, only near-linear rather than exponential in the communication complexity.
\begin{lemma}[Lemma 2.2 of \cite{hardt2014computational}]\label{lem:22}
Consider an $n$-node graph $G = (V,E)$. Let $A$ be the $n \times n$ identity matrix, and  $W$ be the $n \times n$ mask matrix with $W_{ii} = 1$ for all $i$, $W_{i,j} = 1$ for all $(i,j) \in E$, and $W_{ij} = 0$ for $(i,j) \notin E$. Given rank-$k'$ $L$ with $\norm{(A-L) \circ W}_F^2 \le \epsilon \norm{A}_F^2 = \epsilon n$ and $|L_{ij}| < c$ for all $i,j$, there is a randomized polynomial time algorithm that finds an independent set in $G$ of size 
\begin{align*}
T \ge \frac{(1-4(ck')^2 \epsilon) n}{k' \sqrt{\pi} (8 \sqrt{ck'})^{k'}}.
\end{align*} 
\end{lemma}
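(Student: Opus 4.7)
The plan is to round the rank-$k'$ factorization $L = UV^T$ into a geometric bucketing of the vertices of $G$ such that some bucket is nearly an independent set, then extract $T$ vertices by discarding a few ``bad'' endpoints. I would begin with a setup step: by replacing $L$ with its symmetric part $(L+L^T)/2$, which at most doubles the rank and only decreases $\|(A-L)\circ W\|_F$ (since $A, W$ are symmetric), I may assume WLOG that $L$ is symmetric. Eigendecomposing and folding signs into the factors yields $L = UU^T$ (with negative-eigenvalue blocks handled separately by running the scheme on each part) and, crucially, $\|u_i\|^2 = L_{ii} \le c$ for every row $u_i$ of $U$, so each $u_i$ lies in the Euclidean ball $B(0,\sqrt{c}) \subset \mathbb{R}^{k'}$.

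Next I would filter using the Frobenius error bound, which unpacks into $\sum_i (u_i^T u_i - 1)^2 + \sum_{(i,j)\in E} (u_i^T u_j)^2 \le \epsilon n$. Setting the threshold $\eta = 1/(2ck')$ and applying Markov, the ``good'' vertices $S = \{i : |u_i^T u_i - 1| \le \eta\}$ satisfy $|S| \ge n(1 - \epsilon/\eta^2) = n(1 - 4(ck')^2\epsilon)$, and the ``bad'' edges $B = \{(i,j) \in E : |u_i^T u_j| > \eta\}$ satisfy $|B| \le 4(ck')^2 \epsilon n$. These two losses together account for the $1 - 4(ck')^2\epsilon$ factor appearing in the numerator.

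The core geometric step is to cover $B(0,\sqrt{c})$ by $N = O((8\sqrt{ck'})^{k'})$ cells of $\ell_\infty$-diameter $\delta = \Theta(1/\sqrt{ck'})$ using a standard volume / sphere-packing bound, and to assign each good vertex $i$ to the cell containing $u_i$. For any two good vertices $i, j$ in the same cell, $\|u_i - u_j\|_2 \le \delta\sqrt{k'}$ and hence $|u_i^T u_j - u_j^T u_j| \le \|u_i - u_j\|_2 \cdot \|u_j\|_2 \le \delta\sqrt{k'}\cdot \sqrt{c} \le 1/2$. Combined with $|u_j^T u_j - 1| \le \eta$ this forces $u_i^T u_j \ge 1/2 - \eta > \eta$, which contradicts $(i,j) \in E \setminus B$; so within every cell, each edge of $G$ joining two good vertices must lie in $B$. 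By pigeonhole (or, more carefully, a weighted averaging over cells proportional to their good-vertex count) some cell $C$ contains at least $|S|/N$ good vertices and at most a comparable share of $B$; removing one endpoint per bad edge in $C$ yields an independent set of size at least $|S|/N - |B|$. A refined Gaussian-rounding averaging, in which one draws $k'$ iid Gaussian directions and buckets by $\lfloor u_i^T g_\ell / \delta \rfloor$ rather than using a deterministic grid, both sharpens the cell count and contributes the $k'\sqrt{\pi}$ correction to the denominator. Every step---SVD, Markov filtering, bucket assignment, and bad-edge removal---is polynomial time.

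The main obstacle lies in calibrating the three competing quantities $\eta$ (the Markov threshold), $\delta$ (the cell width), and $N$ (the cell count) so that the discretization is coarse enough to give few cells but fine enough that $u_i^T u_j$ transfers cleanly between neighbors within a cell; choosing $\eta = 1/(2ck')$ and $\delta = \Theta(1/\sqrt{ck'})$ is what makes the three losses compose into the stated bound. A secondary obstacle is the PSD-reduction for $L$: handling negative eigenvalues requires splitting into positive and negative eigenspaces and applying the rounding to each, absorbing a constant factor into the exponent of $(8\sqrt{ck'})^{k'}$. The whole argument uses nothing beyond the rank, entrywise, and Frobenius-error hypotheses of the lemma.
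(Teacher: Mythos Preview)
First, note that the paper does not prove this lemma: it is quoted as Lemma~2.2 of \cite{hardt2014computational} and used as a black box, so there is no in-paper argument to compare against. Your sketch is in the right spirit---Markov filtering followed by geometric bucketing of the factor vectors is indeed the Hardt--Moitra mechanism---but the reduction to the PSD case is a genuine gap, not a technicality you can absorb into constants.

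The crux of your covering bound is the claim $\|u_i\|^2 = L_{ii} \le c$, which confines every factor vector to the ball $B(0,\sqrt{c})$ and drives the cell count $(8\sqrt{ck'})^{k'}$. That identity holds only when $L$ is PSD. After symmetrizing (which already doubles the rank, so every occurrence of $k'$ in the final bound would become $2k'$), you propose to split $L = L^{+} - L^{-}$ into its PSD parts and ``run the scheme on each part.'' But neither part inherits the hypotheses: we only know $(L^{+})_{ii} - (L^{-})_{ii} = L_{ii}$, so $\|u_i^{+}\|^2 = (L^{+})_{ii}$ and $\|u_i^{-}\|^2 = (L^{-})_{ii}$ are individually uncontrolled. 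Concretely, take the rank-$2$ symmetric matrix $L = c(\mathbf{1}_S\mathbf{1}_T^{T} + \mathbf{1}_T\mathbf{1}_S^{T})$ with $|S|=1$, $|T|=n-1$; every entry satisfies $|L_{ij}|\le c$, yet $|L|_{ii} = \|u_i^{+}\|^2 + \|u_i^{-}\|^2 = c\sqrt{n-1}$ for $i\in S$. So the radius of the ball you must cover is $\Theta(\sqrt{c}\,n^{1/4})$, the cell count becomes $n$-dependent, and the stated denominator $(8\sqrt{ck'})^{k'}$ is unrecoverable along this route. Running the bucketing on $L^{+}$ and $L^{-}$ separately does not help either, since neither satisfies $(L^{\pm})_{ii}\approx 1$ or $(L^{\pm})_{ij}\approx 0$ on edges. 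The actual Hardt--Moitra argument works directly with a (non-symmetric) factorization $L=UV^{T}$ and obtains the needed norm control by a different route; your sketch does not supply that step.
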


\begin{theorem}[Masked Low-Rank Approximation Lower Bound] \label{thm:lb3} Assuming Conjecture \ref{conj:main}, there is no polynomial time algorithm achieving the guarantee of \eqref{eqn:main} for Problem \ref{def:main} with $\epsilon \le \frac{1}{8(ck')^2}$, that outputs $L$ with $|L_{ij}| < c$ for all $i,j$ and rank $k' = o \left (\frac{D(f)}{\log D(f)} \right )$. 
\end{theorem}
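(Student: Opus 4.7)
The plan is to follow the same reduction from approximate $3$-coloring as in Theorem \ref{thm:lb1}, but to use Lemma \ref{lem:22} as the ``rounding'' step to extract a combinatorial structure (an independent set) from the real-valued bounded low-rank approximation $L$, rather than reading off colors directly from a binary or non-negative factorization. Concretely, given a $3$-colorable graph $G = (V,E)$ on $n$ nodes, form $A = I$ and set $W_{ii} = 1$, $W_{ij} = 1$ for $(i,j) \in E$, and $W_{ij} = 0$ otherwise. Any valid $3$-coloring of $G$ yields a rank-$3$ exact solution, so $OPT = 0$. If the hypothetical algorithm produces rank-$k'$ $L$ with $|L_{ij}| \le c$ and $\norm{(A-L)\circ W}_F^2 \le \epsilon n$ where $\epsilon \le \frac{1}{8(ck')^2}$, then $4(ck')^2 \epsilon \le 1/2$ and Lemma \ref{lem:22} gives, in randomized polynomial time, an independent set of $G$ of size at least $T \ge \frac{n/2}{k'\sqrt{\pi}\,(8\sqrt{ck'})^{k'}}$.

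Next, I would iterate: remove the independent set (which becomes one color class), re-run the hypothetical algorithm on the induced subgraph (which is also $3$-colorable and matches the form of our construction), and apply Lemma \ref{lem:22} again. Since the fraction of vertices removed at each step is at least $\alpha = \frac{1}{2k'\sqrt{\pi}\,(8\sqrt{ck'})^{k'}}$, after $O(\log n / \alpha)$ rounds every vertex receives a color. The randomness in Lemma \ref{lem:22} can be dealt with by standard amplification (run $O(\log n)$ times and keep the largest independent set). The total number of colors used is therefore bounded by
\[
  C \;=\; O\!\left(\frac{\log n}{\alpha}\right) \;=\; O\!\left(k'\,(8\sqrt{ck'})^{k'} \log n\right).
\]

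Finally, I would invoke Conjecture \ref{conj:main}: a polynomial time algorithm producing an $n^\gamma$ coloring of a $3$-colorable graph is impossible. Taking logarithms, $\log C = O(k' \log k' + \log\log n)$, so $C$ is sub-polynomial in $n$ whenever $k' \log k' = o(\log n)$, which holds precisely when $k' = o(\log n / \log\log n)$. Since every $f$ on $\{0,1\}^{\log n} \times \{0,1\}^{\log n}$ satisfies $D(f) \le \log n + O(1)$, the condition $k' = o\!\left(\tfrac{D(f)}{\log D(f)}\right)$ implies $k' = o(\log n / \log \log n)$, and we obtain a sub-polynomial coloring in polynomial time, refuting the conjecture. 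The main obstacle is that Lemma \ref{lem:22} loses an exponential factor $(8\sqrt{ck'})^{k'}$ in the size of the independent set, and it is this loss that forces the lower bound to be near-linear ($D(f)/\log D(f)$) rather than exponential ($2^{\Omega(D(f))}$) in the communication complexity, as in the binary/non-negative variants of Section \ref{sec:expLB}; tightening this would require a better rounding scheme for unconstrained low-rank approximations of the coloring matrix.
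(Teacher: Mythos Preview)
Your proposal is correct and follows essentially the same approach as the paper: construct $A=I$ and the graph-based mask $W$ so that $OPT=0$, apply Lemma~\ref{lem:22} to extract a large independent set, and iterate to obtain a sub-polynomial coloring, contradicting Conjecture~\ref{conj:main}. The paper phrases the iteration count as ``repeat $o(n^\gamma)$ times'' rather than your $O(\log n/\alpha)$, but these are equivalent once one notes that the denominator $k'\sqrt{\pi}(8\sqrt{ck'})^{k'}$ is $n^{o(1)}$ under the hypothesis. One small omission: your expression $\log C = O(k'\log k' + \log\log n)$ drops the dependence on $c$; it should be $O(k'\log(ck') + \log\log n)$, so the argument as written implicitly assumes $c$ is not too large (e.g., $c = \poly(k')$), which is exactly the regime the paper discusses in the remark following the theorem.
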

We note that the lower bound applies to $\epsilon$ that depends on the size of the maximum entry in $L$. Our upper bounds (Theorem \ref{thm:rand} in particular) do not restrict the maximum entry of $L$. However, our hard instance  will be binary (using $A$ and $W$ from Lemma \ref{lem:22}) with $OPT = 0$ and thus our algorithms can always output $L$ with $c = \poly(k)$ bounded entries (using e.g. a result like Lemma 2.1 of \cite{hardt2014computational}). 
\begin{proof}
Letting $A$ and $W$ be as defined in Lemma \ref{lem:22}, the conditions on $L$ ensure that, by Lemma \ref{lem:22}, given $L$ a polynomial time algorithm can find an independent set in $G$ of size
\begin{align*}
T \ge \frac{n/2}{k' \sqrt{\pi} (8 \sqrt{ck'})^{k'}}.
\end{align*}
If $k' = o \left (\frac{D(f)}{\log D(f)} \right ) = o \left (\frac{\log n}{\log \log n} \right ) $ then the denominator of this fraction is $o(n^\gamma)$ for any constant $\gamma$. Thus, the independent set identified has size $\omega(n^{1-\gamma})$. If we repeat the algorithm $o(n^{\gamma})$ times, each time removing the independent set found and assigning a color to all nodes in this set, we will color the full graph with  $o(n^{\gamma})$ colors. Since the runtime is clearly polynomial, this contradicts Conjecture \ref{conj:main}.
\end{proof}

\section{Open Questions}

By focusing on bicriteria approximation, we show how to solve masked low-rank approximation in polynomial time using a  simple heuristic. A number of open questions remain. It would be very interesting to improve the bicriteria ranks we achieve for common masks (summarized in Table \ref{tab:results}). It would also be interesting to give relative error bounds achieving error $(1+\epsilon) \cdot OPT$ instead of our additive error bounds. This is challenging sinces it requires achieving zero error when there is an exact masked low-rank factorization of $A$. 

Relatedly, while we have connected bicriteria masked low-rank approximation to the randomized communication complexity of the mask matrix $W$ (in fact, the public coin partition number of $W$), it would be very interesting to find a notion of $W$'s complexity that tightly characterizes the bicriteria rank achievable in polynomial time. We make some initial steps in Section \ref{sec:lower}, but the question remains mostly unanswered.

Finally, a related problem is \emph{weighted low-rank approximation} -- when $W$ is real valued and we seek to minimize $\norm{W \circ (A-L)}_F^2$. Approximation algorithms depending exponentially on the rank $k$, error parameter $\epsilon$, and various notions of $W$'s complexity, such as its rank or number of distinct columns  are known \cite{razenshteyn2016weighted}. However, it would be very interesting to give polynomial time bicriteria approximation algorithms as we have done in the special case of binary $W$. 

\medskip
 
{\bf Acknowledgments:} David Woodruff would like to thank support from the 
Office of Naval Research (ONR) grant N00014-18-1-2562. 
Part of this work was done while the authors were visiting the 
Simons Institute for the Theory of Computing.
\bibliographystyle{alpha}
\bibliography{factor_analysis}


\appendix

\section{Alternate Approach via a Structural Result for Low-Rank Matrices}\label{app:alt}

In this appendix, we present a different approach to giving efficient algorithms for the masked low-rank approximation problem (Problem \ref{def:main}). Rather than considering the communication complexity of the mask matrix $W$, we prove a simple structural result about low-rank matrices: 
\begin{center}\emph{Any low-rank matrix cannot have too many ``heavy'' entries on its diagonal, or more generally, on the support of a column sparse matrix $W$.}\end{center}

We use this structural result to give an alternative proof of our main bicriteria approximation bound (Theorem \ref{thm:1}) in the Low-Rank Plus Diagonal (LRPD) and Low-Rank Plus Sparse (LRPS) setting. That is, when $W$ is either zero exactly on its diagonal, or only has a few zeros per column. In this setting, Theorem \ref{thm:1} yields Corollaries \ref{cor:lrpd} and \ref{cor:lrps} which show that  simply outputting a standard low-rank approximation of $A \circ W$ achieves error $OPT + \epsilon \norm{A}_F^2$ with rank $k' = O(kt/\epsilon)$, where $t$ is the maximum number of zeros in a column of $W$ ($t=1$ in the LRPD case). Our alternative proof applies to the same algorithm gives the same error bound with the same setting of $k'$. We are not aware of any formal connection between the two approaches or more generally the structural result and the communication complexity of $W$. Identifying such a connection would be very interesting.

We further show that the above structural result can be used  to obtain a fixed-parameter-tractable, relative error, non-bicriteria algorithm for 
Problem \ref{def:main} in the LRPD setting. We also give a fixed-parameter-tractable algorithm for the closely related Factor Analysis (FA) problem, where we want to decompose positive semidefinite $A$ as $L + D$ where $L$ is rank-$k$ and positive semidefinite and $D$ is diagonal and positive semidefinite.
Note that removing the positive semidefinite constraints, this is exactly equivalent to the LRPD problem. Minimizing $\norm{W \circ (A-L)}_F^2$ is equivalent to minimizing $\norm{A-(L+D)}_F^2$  where $D$ is diagonal, since given $L$ we can always set $D = \diag(A-L)$.

\smallskip
\spara{Application to Bicriteria Approximation:}

Consider Problem \ref{def:main} with mask matrix $W$ that has at most $t$ zeros per column (the LRPS approximation problem). 
As in the proof of Theorem \ref{thm:rand} (which yields Theorem \ref{thm:1} as a corollary), we will prove the bicriteria approximation bound via a comparison argument. Let $L_{opt}$ be any rank-$k$ matrix achieving error $OPT = \min_{\rank-k\ \hat L} \norm{W \circ (A-\hat L)}_F^2$. Since $L_{opt}$ is low-rank, by the structural result above (stated formally in Theorem \ref{thm:Mk2}), it cannot place significant mass on the entries in the sparse support of $1-W$, outside a small subset of rows (size $O(kt/\epsilon)$). This in turn implies the existence of a rank-$O(kt/\epsilon) + k$ matrix $\bar L$ that exactly matches $A \circ W$ on those rows and matches $L_{opt}$ on the rest of the matrix. Overall, $\bar L$ places very little weight outside the support of $W$. This is analogous to how $\bar L$ constructed in the proof of Theorem \ref{thm:rand} places no weight outside the support of the protocol communication matrix $W_\Pi$, which closely approximates $W$.

In Theorem \ref{thm:rand} we compare the error $\bar L$ to that of $L$ obtained by outputting a (near) optimal rank-$O(kt/\epsilon)$ approximation to $A \circ W$. Here we perform the same comparison. Since it is optimal, we have $\norm{A \circ W - L}_F^2 \le \norm{A\circ W - \bar L}_F^2$. On the entries outside the support of  $W$, $\bar L$ is already  very small and so close to $A \circ W$ (which is $0$ on these entries). Thus $L$ cannot give significantly smaller error than $\bar L$ on these entries. In turn, it cannot give significantly larger error on the entries in the support of $W$. This means that $L$ matches the approximation of $\bar L$, and in turn $L_{opt}$ on the support of $W$, yielding our bound. See Theorem \ref{thm:bicriteria} for a formal statement and proof.


\smallskip
\spara{Application to FPT Algorithm:}

In designing a fixed-parameter-tractable algorithm for the LRPD and FA problems we apply a recursive approach: we split our matrix into four quadrants and compute a low-rank plus diagonal decomposition of the top left and bottom right quadrants. 
Consider the case when $OPT$ is $0$: there is a rank-$k$ $L^\star$ with $\norm{W \circ (A-L^\star)}_F^2 = 0$. Equivalently, $A$ can be exactly decomposed as $A= D^\star + L^\star$ where $D^\star$ is diagonal. Note that this decomposition may not be unique. 
Letting $A_{11} = D^\star_{11}+L^\star_{11}$ denote the upper left quadrant, our recursively computed output $D'_{11}, L'_{11}$ satisfies:
$$D'_{11} + L'_{11} =A_{11} = D^\star_{11}+L^\star_{11}.$$
Thus $D'_{11}-D^\star_{11} = L^\star_{11} + L'_{11}$. Since this is a diagonal matrix and since it has rank $\le 2k$, we can see that it can have at most $2k$ nonzero entries. This is a special case of our main structural result (Theorem \ref{thm:Mk2}), which lets us make an analogous claim that $D'_{11}-D^\star_{11}$ does not have many large entries in the case when $A$ does not admit an exact decomposition. The same bound holds for the lower right quadrant and so overall, appending the recursively computed diagonal matrices, we have found $D^\star$ up to at most $4k$ incorrect entries.

If we iterate over all possible locations of these incorrect entries (a total of $O(n^{O(k)})$ possibilities), it only remains to solve the LRPD problem where we know all but $O(k)$ of the diagonal entries. This problem can be solved in $\poly(n)\cdot 2^{\poly(k)}$ time using generic polynomial solvers. Polynomial solvers have been used numerous times in the past to solve constrained low-rank approximation problems (see e.g., \cite{arora2012computing, moitra2012singly,razenshteyn2016weighted,basu2016computing}). Overall, since we need $n^{O(k)}$ guesses to succeed in identifying the incorrect entries, we obtain runtime $n^{O(k)} \cdot 2^{\poly(k)}$. 

\subsection{Additional Notation and Tools}

Throughout this section,
given an $n \times p$ matrix $M$ and $i \in [n], j \in [p]$ we let $M(i,j)$ denote its $(i,j)^{th}$ entry. For sets $R \subseteq [n]$, $C \subseteq [p]$ we let $M(R,C)$ be the $|R| \times |C|$ matrix composed of the intersection of the rows and columns indexed by $R$ and $C$ respectively. 
$\supp(M) \subseteq [n] \times [p]$ denotes the set of indices of $M$'s nonzero entries and $\nnz(M) = |\supp(M)|$ denotes the number of such entries. The above definitions all extend to vectors, except that we omit the index of the second dimension. So for a length-$n$ vector $m$, $m(i)$ is its $i^{th}$ entry and $\supp(m) \subseteq [n]$ is the set of indices of its nonzero entries.

We let $\orth(M)$ output $Q \in \R^{n \times \rank(M)}$ with orthonormal columns that span $M$.
For $M \in \R^{n \times n}$,  $M$ is positive semidefinite (PSD) if $x^T M x \ge 0$ for all $x\in \R^n$. We sometimes denote this by  $M \succeq 0$, with $M \succeq N$ denoting that $M-N \succeq 0$.
 Any $M \in \R^{n\times p}$ can be written via singular value decomposition as $M = U\Sigma V^T$ where $U \in \R^{n \times \rank(M)}, V\in \R^{p \times \rank(M)}$ have orthonormal columns and $\Sigma$ is a positive diagonal matrix containing the singular values of $M$, $\sigma_1(M) \ge ... \ge \sigma_{\rank(M)}(M) > 0$. We have $\sum_{i=1}^{\rank(M)} \sigma_i^2(M) = \norm{M}_F^2.$ The matrix pseudoinverse $M^+$ is given by  $M^+ = V \Sigma^{-1} U^T$, so that $M^+ M = VV^T$. 

\subsubsection*{Polynomial Solvers}
In our relative error approximation algorithms we make black-box use of polynomial system verifiers \cite{r92a,r92b,bpr96}, which can determine if there exists a solution to any given polynomial system of equations. We combine these verifiers with binary search techniques to perform polynomial optimization under polynomial constraints. Polynomial solvers have been used numerous times in the past to solve constrained low-rank approximation problems (see e.g., \cite{arora2012computing, moitra2012singly,razenshteyn2016weighted,basu2016computing}).

\begin{theorem} [Polynomial Decision Problem \cite{r92a,r92b,bpr96}]\label{thm:decision_solver}
Suppose we are given $m$ polynomial constraints over $v$ variables: $p_i (x_1, x_2, \ldots, x_v) \Delta_i 0$, where $\Delta_i$ is any of the ``standard relations'': $\{\geq, =, \leq \}$. Let $d$ denote the maximum degree of $p_i$ for $i\in [m]$ and let $H$ denote the maximum bit-length of the coefficients in all of the polynomial constraints. Then, in
\begin{equation*}
(m d)^{O(v)} \poly(H)
\end{equation*}
time, one can determine if there exist $x_1,\ldots,x_v$ satisfying all of the constraints. I.e., if $$\{x \in \R^v | \forall\, i\, p_i (x_1, x_2, \ldots, x_v) \Delta_i 0\} \neq \emptyset.$$
\end{theorem}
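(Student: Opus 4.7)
Since the statement is a classical result from real algebraic geometry (Renegar; Basu--Pollack--Roy) rather than something the paper itself establishes, my plan is to reprove it via the \emph{critical points method}, which is known to achieve the stated $(md)^{O(v)} \poly(H)$ bound. The guiding principle is that to decide whether a semi-algebraic set is nonempty it suffices to compute at least one ``witness'' point in every connected component of the realization of each sign-condition vector, then test every constraint symbolically at each witness.

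First I would normalize the input. Each strict inequality $p_i > 0$ can be rewritten as $p_i y_i^2 = 1$ for a fresh variable $y_i$, and each weak inequality $p_i \geq 0$ as $p_i = y_i^2$; negations are similar. After this step the question reduces to solvability of a system of polynomial equalities in $v + O(m)$ variables of degree at most $d+2$. To force smoothness and compactness of the real zero locus I would add a ball constraint $\sum_j x_j^2 \leq 1/\eta$ for an infinitesimal $\eta$, together with a generic infinitesimal coefficient perturbation, handled symbolically via Puiseux series over the base ring of rationals of bit length $H$.

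Next I would apply the critical-points construction: compute the critical points of projection onto a generic coordinate, restricted to the perturbed variety. These are the joint zeros of $v$ polynomials of degree $O(d)$ in $v$ unknowns, so by a B\'ezout-type bound there are at most $d^{O(v)}$ complex solutions, and every connected component of the real variety contains at least one such critical point. Each real critical point is represented by a univariate polynomial together with a Thom encoding of one of its real roots; sign determination of the original $p_i$ at such an encoded root can be performed by subresultant-based sign tables in time $(md)^{O(v)} \poly(H)$ per witness. The system is feasible iff some witness satisfies all $m$ sign conditions $\Delta_i$.

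The main obstacle is the bit-complexity bookkeeping: showing that Puiseux-series arithmetic, subresultant sequences, and sign-determination procedures for real algebraic numbers can all be performed with coefficient bit length staying polynomial in $H$ and with degrees bounded by $d^{O(v)}$. This is the technical heart of Basu--Pollack--Roy, and once granted, multiplying the $d^{O(v)}$ witness points by the $O(m)$ constraints to test and the $(md)^{O(v)} \poly(H)$ per-operation cost yields the overall bound $(md)^{O(v)} \poly(H)$ as claimed.
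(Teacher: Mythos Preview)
The paper does not prove this theorem at all: it is stated as a black-box result with citations to Renegar and Basu--Pollack--Roy, and is then used as a tool in the appendix. So there is no ``paper's own proof'' to compare against.

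That said, your sketch has a genuine complexity gap in the normalization step. You propose to replace each of the $m$ inequality constraints by an equality involving a fresh slack variable $y_i$, arriving at a system in $v + O(m)$ variables. But then the critical-points/B\'ezout count you invoke gives at most $d^{O(v+m)}$ witness points, not $(md)^{O(v)}$. When $m$ is large relative to $v$ (which is precisely the regime of interest here, since the paper applies the theorem with $m$ polynomial in $n$ and $v = O(k^2)$ constant), the bound $d^{O(v+m)}$ is exponentially worse than the stated $(md)^{O(v)}$.

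The actual critical-points method of Basu--Pollack--Roy avoids this blowup by \emph{not} introducing a variable per constraint. Instead one works directly with the family $\{p_1,\ldots,p_m\}$ in the original $v$ variables: one bounds the number of connected components of all realizable sign conditions by an Oleinik--Petrovsky/Thom--Milnor-type estimate of the form $(md)^{O(v)}$, and produces one sample point per component via critical points of a generic projection restricted to (infinitesimally perturbed) level sets of a single auxiliary polynomial built from the $p_i$ (for instance a sum-of-squares or product construction). The number of variables stays $v + O(1)$, and the degree of the auxiliary polynomial absorbs the factor $m$, which is how $m$ enters the base of the exponent rather than the exponent itself. Your outline after the normalization step is essentially right in spirit; it is the reduction to pure equalities via $m$ slack variables that breaks the bound.
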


\subsection{Main Structural Result}

We begin with our main structural result.
Informally, if $W \in \{0,1\}^{n \times p}$ is a sparsity pattern with few zero entries in each column, then, ignoring a small subset of rows, no low-rank matrix can place significant mass outside the entries in $\supp(W)$ in comparison to those in this support. When $W = 1- I$ (corresponding to the LRPD problem), this implies that $L$ cannot place significant mass on all but a small subset of the diagonal. This special case generalizes a simple fact: if $L$ is low-rank and diagonal (i.e., has mass $0$ on the support of $W = 1-I$), then it has exactly $k$ non-zero diagonal entries.

\begin{theorem}[Main Structural Result]\label{thm:Mk2}
Consider any support matrix $W \in \{0,1\}^{n \times p}$ with at most $t$ zero entries per column.
For any rank-$k$ $L \in \R^{n \times p}$ and $0 \le \epsilon \le 1$, there is some subset of row indices $\mathcal{S} \subset  [n]$ with $|\mathcal{S}| \le \frac{tk}{\epsilon}$ such that, if we let $\bar W  = W([n]\setminus \mathcal{S},[p])$  and $\bar L = L([n]\setminus \mathcal{S},[p])$, then:

$$\norm{\bar L \circ (1-\bar W)}_F^2 \le \frac{\epsilon}{1-\epsilon} \cdot  \norm{L \circ W}_F^2.$$
\end{theorem}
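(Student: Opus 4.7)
The natural approach is through a \emph{leverage score} argument. Via the compact SVD, write $L = U\Sigma V^T$ with $U \in \R^{n \times k}$ and $V \in \R^{p \times k}$ having orthonormal columns and $\Sigma$ diagonal; let $u_i \in \R^k$ denote the $i$-th row of $U$ and define the row leverage scores $\ell_i := \|u_i\|^2$. Orthonormality gives $\sum_i \ell_i = \tr(U^T U) = k$. I would take $\mathcal{S} := \{i \in [n] : \ell_i > \epsilon/t\}$; by Markov's inequality, $|\mathcal{S}| \le k/(\epsilon/t) = tk/\epsilon$, matching the cardinality promised by the theorem.

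The crucial property of this threshold is a \emph{per-column leverage budget}. For each column $j$, let $Z_j := \{i : W(i,j)=0\}$ (with $|Z_j| \le t$ by hypothesis) and $Z_j' := Z_j \setminus \mathcal{S}$; then $\sum_{i \in Z_j'} \ell_i \le t \cdot (\epsilon/t) = \epsilon$. Writing the $j$-th column of $L$ as $L_{:,j} = U\Sigma V(j,[k])^T$, Cauchy--Schwarz gives $L(i,j)^2 \le \|u_i\|^2 \cdot \|\Sigma V(j,[k])^T\|^2 = \ell_i \|L_{:,j}\|^2$. Summing over $i \in Z_j'$:
\[
\|L(Z_j', j)\|^2 \;\le\; \Bigl(\sum_{i \in Z_j'} \ell_i\Bigr) \|L_{:,j}\|^2 \;\le\; \epsilon\, \|L_{:,j}\|^2.
\]
Splitting $\|L_{:,j}\|^2 = \|L(Z_j', j)\|^2 + \|L([n] \setminus Z_j', j)\|^2$ and rearranging yields the per-column inequality
\[
\|L(Z_j', j)\|^2 \;\le\; \tfrac{\epsilon}{1-\epsilon}\,\|L([n] \setminus Z_j', j)\|^2.
\]
Summed over $j$, the left-hand side is exactly $\|\bar L \circ (1-\bar W)\|_F^2$, and since $\{i : W(i,j)=1\} \subseteq [n] \setminus Z_j'$, the right-hand side carries the on-support mass $\|L \circ W\|_F^2$, giving the claimed global bound.

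The main technical obstacle is the careful accounting in this final summation. A naive sum of the per-column inequality collects on the right-hand side, \emph{in addition to} $\|L \circ W\|_F^2$, a residual term $\|L_\mathcal{S} \circ (1-W)_\mathcal{S}\|_F^2$ corresponding to the off-support mass carried by the removed rows. Matching the exact $\tfrac{\epsilon}{1-\epsilon}\|L \circ W\|_F^2$ bound therefore requires a sharper per-column step---for example, tightening the Cauchy--Schwarz estimate by replacing $\|L_{:,j}\|^2$ with $\|\bar L_{:,j}\|^2 = V(j,[k])\,\Sigma\, \bar U^T \bar U\, \Sigma\, V(j,[k])^T$, where $\bar U := U([n]\setminus\mathcal{S},[k])$, using the PSD inequality $U(Z_j',[k])^T U(Z_j',[k]) \preceq \bar U^T \bar U$. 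This replacement makes the right-hand side after summation telescope cleanly to $\|\bar L \circ \bar W\|_F^2 \le \|L \circ W\|_F^2$, without spurious cross-terms.
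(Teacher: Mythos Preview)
Your leverage-score approach is the right idea, and the paper uses it too, but your choice $\mathcal{S} = \{i : \ell_i > \epsilon/t\}$ does not work, and the PSD fix you propose does not close the gap. Concretely: take $n=3$, $p=1$, $k=1$, $t=2$, $\epsilon = 0.6$, $L = (\sqrt{0.7}, \sqrt{0.25}, \sqrt{0.05})^T$, and $W = (0,0,1)^T$. The leverage scores are $0.7, 0.25, 0.05$, so your $\mathcal{S} = \{1\}$. Then $\|\bar L \circ (1-\bar W)\|_F^2 = 0.25$ while $\tfrac{\epsilon}{1-\epsilon}\|L\circ W\|_F^2 = 1.5\cdot 0.05 = 0.075$, and the desired inequality fails. (The theorem still holds: $\mathcal{S} = \{1,2\}$ has size $2\le tk/\epsilon$ and makes the left side zero.) Your PSD inequality $U(Z_j',[k])^T U(Z_j',[k]) \preceq \bar U^T\bar U$ is true, but it only gives the trivial subset bound $\|L(Z_j',j)\|^2 \le \|\bar L_{:,j}\|^2$; there is no $\epsilon$ factor in it, and indeed in this example $\|L(Z_j',j)\|^2 = 0.25 > 0.18 = \epsilon\|\bar L_{:,j}\|^2$, so the tightened estimate you need is false.

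The underlying problem is that removing high-leverage rows can \emph{raise} the leverage of the surviving rows relative to the remaining submatrix, so once you delete $\mathcal{S}$ you can no longer use the full-column norm $\|L_{:,j}\|^2$ as a proxy for $\|\bar L_{:,j}\|^2$. The paper sidesteps this with the \emph{coherence-reducing reweighting} lemma of Cohen et al.: there is a diagonal $D\in[0,1]^{n\times n}$ with at most $tk/\epsilon$ entries $\neq 1$ such that $\tau_i(DL)\le \epsilon/t$ for \emph{every} $i$ simultaneously. Taking $\mathcal{S}=\{i:D_{ii}\neq 1\}$, one applies your per-column argument to $DL$ (now valid for all rows, not just those outside $\mathcal{S}$), getting $\|(DL)\circ(1-W)\|_F^2 \le \tfrac{\epsilon}{1-\epsilon}\|(DL)\circ W\|_F^2$, and then sandwiches: $\|\bar L\circ(1-\bar W)\|_F^2 \le \|DL\circ(1-W)\|_F^2$ since $D_{ii}=1$ on $[n]\setminus\mathcal{S}$, and $\|DL\circ W\|_F^2 \le \|L\circ W\|_F^2$ since $D\in[0,1]$. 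The reweighting, not mere thresholding, is the missing ingredient.
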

That is, if we exclude $tk/\epsilon$ rows from $L$, the Frobenius norm mass outside the entries in $\supp(W)$ is at most an $\frac{\epsilon}{1-\epsilon}$ fraction of the mass on these entries. In the special case when $W = 1-I$, this gives that, excluding a subset of $O(k/\epsilon)$ entries, $L$ cannot have more than an $\epsilon$ fraction of its Frobenius norm mass on the diagonal.
\begin{proof}
Let $l_i$ denote the $i^{th}$ column of $L$. 
The $i^{th}$ row leverage score of $L$ is defined to be 
\begin{align}\label{eq:leverage}
\tau_i(L) = \max_{y \in \R^p} \frac{[Ly](i)^2}{\norm{Ly}_2^2}.
\end{align}
If we set $y  = e_j$ where $e_j$ is the $j^{th}$ standard basis vector in $\R^p$, we can see that $\tau_i(L)$ upper bounds how much mass the $i^{th}$ entry can have in column $j$, in comparison to the other entries of the column.
It is well known (see e.g. \cite{cohen2015uniform}) that when $L$ has rank $k$, the sum of leverage scores $\sum_{i=1}^n \tau_i(L) = k$. Thus we can see that there are at most $k/\epsilon$ rows with leverage score $\ge \epsilon$. In fact, it is possible to prove something even stronger: if we reweight at most $k/\epsilon$ rows of our matrix appropriately, then we can reduce all leverage scores to be simultaneously bounded by $\epsilon$. Formally:
\begin{claim}[Coherence Reducing Reweighting -- Lemma 1 of \cite{cohen2015uniform}]\label{clm:crr}
For any rank-$k$ $L \in \R^{n \times n}$ and $\beta > 0$, there exists a diagonal $D \in [0,1]^{n \times n}$ with at most $ k/\beta$ entries not equal to $1$ and 
$$\tau_i(DL) \le \beta\text{ for all }i \in [n].$$
\end{claim}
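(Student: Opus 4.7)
My plan is to prove the claim via an optimization argument rather than via an explicit iterative algorithm. Consider the following optimization problem: maximize $\sum_{i=1}^n D_{ii}$ over all diagonal matrices $D$ with $D_{ii} \in [0,1]$, subject to the leverage-score constraints $\tau_i(DL) \le \beta$ for all $i \in [n]$. The feasible set is nonempty (it contains $D = 0$, for which $\tau_i(0) = 0$ by the convention $0/0 = 0$) and compact, and the objective is continuous, so a maximizer $D^\star$ exists. I will show that $D^\star$ is the desired reweighting.

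The core technical ingredient I need is a monotonicity lemma: increasing a single diagonal entry $D_{ii}$ weakly increases $\tau_i(DL)$ and weakly decreases $\tau_j(DL)$ for every $j \ne i$. I plan to prove this directly from the variational definition $\tau_i(DL) = \max_y (DLy)(i)^2 / \|DLy\|_2^2$. For each fixed $y$, the ratio $D_{ii}^2 (Ly)(i)^2 / \bigl(D_{ii}^2 (Ly)(i)^2 + \sum_{k \ne i} D_{kk}^2 (Ly)(k)^2\bigr)$ has the form $b a / (b a + c)$ with $b = D_{ii}^2 \ge 0$, so it is nondecreasing in $b$; for $j \ne i$ the numerator $D_{jj}^2 (Ly)(j)^2$ is independent of $D_{ii}$ while the denominator is nondecreasing in $D_{ii}$, so that ratio is nonincreasing in $D_{ii}$. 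Since pointwise monotonicity in the variable being maximized over is preserved by $\max$, both claims follow at the level of $\tau_i$ and $\tau_j$.

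Given the monotonicity lemma, I will argue that at the maximizer the following dichotomy holds for every $i$: either $D^\star_{ii} = 1$, or $\tau_i(D^\star L) = \beta$. Indeed, if $D^\star_{ii} < 1$ and $\tau_i(D^\star L) < \beta$, then by continuity of the leverage scores in $D$ one can slightly increase $D^\star_{ii}$ while keeping $\tau_i(DL) < \beta$, and by the off-diagonal direction of the monotonicity lemma every other $\tau_j(DL)$ can only decrease, so feasibility is preserved while the objective strictly improves---contradicting optimality of $D^\star$.

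Finally I combine this dichotomy with the sum-of-leverage-scores identity. Letting $S = \{i : D^\star_{ii} < 1\}$, every $i \in S$ satisfies $\tau_i(D^\star L) = \beta$, so $\beta \cdot |S| = \sum_{i \in S} \tau_i(D^\star L) \le \sum_{i=1}^n \tau_i(D^\star L) = \rank(D^\star L) \le \rank(L) = k$, which gives $|S| \le k/\beta$, the desired bound. The step I expect to be most delicate is the off-diagonal direction of the monotonicity lemma in the degenerate case where $D$ has zero entries and $L^\top D^2 L$ becomes rank deficient; the pointwise-in-$y$ argument sketched above deliberately avoids any pseudoinverse manipulation and should handle this uniformly, but verifying it cleanly (in particular, dealing with the $0/0$ convention when $DLy = 0$) is the technical piece I would need to write out carefully.
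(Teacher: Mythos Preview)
The paper does not give its own proof of this claim; it is quoted verbatim as Lemma~1 of Cohen et al.\ and then used as a black box. Your optimization approach is essentially the one used in that reference, and two of your three ingredients are solid: the monotonicity lemma is correct (your pointwise-in-$y$ argument handles both the diagonal and off-diagonal directions uniformly, zero entries included), and the feasible set is genuinely compact (each $\tau_i(DL)$ is a supremum over $y$ of functions that are lower semicontinuous in $D$, hence is itself l.s.c., so the sublevel set $\{\tau_i\le\beta\}$ is closed).

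The real gap, however, is not where you flag it. Off-diagonal monotonicity in the degenerate case is fine; what fails is the \emph{continuity of $\tau_i$ in $D_{ii}$} that you invoke in the dichotomy step. At $D^\star_{ii}=0$, $\tau_i$ can jump discontinuously. Concretely, take $L$ with rows $e_1,e_1,e_2$ and $\beta=1/2$: the maximizer is $D^\star=\diag(1,1,0)$ with $\tau_3(D^\star L)=0$, yet any $D_{33}=\epsilon>0$ gives $\tau_3=1$ (row~3 is orthogonal to the span of the surviving rows). So your dichotomy ``$D^\star_{ii}=1$ or $\tau_i(D^\star L)=\beta$'' is false for $i=3$, and the inequality $\beta\,|S|\le\sum_i\tau_i(D^\star L)$ no longer controls $|S|$. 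The dichotomy \emph{does} hold whenever $D^\star_{ii}>0$ (there $\tau_i$ is continuous in $D_{ii}$, since the column space of $L^T D^2 L$ does not change), which yields $|\{i:0<D^\star_{ii}<1\}|\le \rank(D^\star L)/\beta$. To finish you must separately bound $Z=\{i:D^\star_{ii}=0\}$; the relevant observation is that for each $i\in Z$ maximality forces $\ell_i\notin\mathrm{rowspan}(D^\star L)$, and one then has to show $|Z|\le (k-\rank(D^\star L))/\beta$. That step requires an additional argument (e.g.\ analyzing what happens when several coordinates in $Z$ are raised simultaneously) and is not resolved by the $0/0$ convention.
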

Armed with Claim \ref{clm:crr} we are ready to prove the theorem.
Let $D$ be the diagonal matrix guaranteed to exist by Claim \ref{clm:crr} with $\beta = \epsilon/t$. Let $\mathcal{S} = \{i: D(i,i) \neq 1\}$. We have $|\mathcal{S}| \le \frac{kt}{\epsilon}$. Additionally, note that for all $i \in [n]\setminus \mathcal{S}$ and all $j \in [p]$, $[DL](i,j) = L(i,j)$ since multiplying by $D$ does not reweight the $i^{th}$ row of $L$. Letting $z_i$ denote the $i^{th}$ column of $DL$, by  the leverage score bound of Claim \ref{clm:crr}, for all $i \in [n]$ and $j \in [p]$ we have:
\begin{align*}
\frac{[DL](i,j)^2}{\norm{z_i}_2^2} = \frac{[DLe_j](i)^2}{\norm{DLe_j}_2^2} \le \tau_i(DL) \le \frac{\epsilon}{t}.
\end{align*}
Thus, letting $\bar z_i$ denote the $i^{th}$ column of $DL \circ (1-W)$, which has at most $t$ nonzero entries:
\begin{align*}
\norm{\bar z_i}_2^2 \le \epsilon \cdot \norm{z_i}_2^2.
\end{align*}
Since $\bar z_i$ is just a subset of the entries in $z_i$, $\norm{z_i}_2^2 = (\norm{z_i-\bar z_i}_2^2) + \norm{\bar z_i}_2^2$ and so:
\begin{align*}
\norm{\bar z_i}_2^2 \le \frac{\epsilon}{1-\epsilon} \norm{z_i - \bar z_i}_2^2,
\end{align*}
where $z_i - \bar z_i$ is the $i^{th}$ row of $DL \circ W$.
This gives:
\begin{align*}
\norm{\bar L \circ (1-\bar W)}_F^2 \le \norm{DL \circ (1-W)}_F^2 \le \frac{\epsilon}{1-\epsilon} \norm{DL \circ W}_F^2 \le \frac{\epsilon}{1-\epsilon} \norm{L \circ W}_F^2,
\end{align*}
where we use that all entries of $D$ are in $[0,1]$. This completes the theorem.
\end{proof}

\subsection{Polynomial Time Bicriteria Approximation}\label{sec:bicriteria}

We start by using Theorem \ref{thm:Mk2} to give an alternative proof that performing a standard low-rank approximation of $A \circ W$ gives a strong bicriteria approximation bound for Problem \ref{def:main} when $W = 1-I$ or more generally has at most $t$ zeros per column (the LRPD and LRPS approximation problems). The bounds match those of Corollaries \ref{cor:lrpd} and \ref{cor:lrps} up to constants.

%
\begin{theorem}\label{thm:bicriteria} Let $W \in \{0,1\}^{n \times p}$ have at most $t$ zeros in each column. Then for $k' = \frac{6kt}{\epsilon}$ and $L$ with $\norm{A \circ W - L}_F^2 \le \min_{\rank-k'\ \hat L} \norm{A \circ W - \hat L}_F^2 + \epsilon_1 \norm{A}_F^2$, 
\begin{align*}
\norm{(A-L) \circ W}_F^2 \le OPT + \epsilon \norm{A}_F^2 + \epsilon_1 \norm{A}_F^2,
\end{align*}
where $OPT = \min_{\rank-k\ \hat L} \norm{(A-L)\circ W}_F^2$ is the optimal value of Problem \ref{def:main}.
\end{theorem}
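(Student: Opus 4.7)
The plan is to follow the comparison strategy of Theorem \ref{thm:rand}: fix a rank-$k$ minimizer $L_{opt}$ achieving $OPT$, exhibit a rank-$k'$ ``witness'' matrix $\bar L$ that (i) has error at most $OPT$ on $\supp(W)$ and (ii) carries only $O(\epsilon)\norm{A}_F^2$ Frobenius mass off $\supp(W)$, and then leverage the near-optimality of $L$ as a rank-$k'$ approximation to $A\circ W$ to transfer these properties to $L$. The sole structural input is Theorem \ref{thm:Mk2}.

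To construct $\bar L$, I apply Theorem \ref{thm:Mk2} to $L_{opt}$ with parameter $\beta = \epsilon/5$, obtaining a row set $\mathcal{S}\subseteq[n]$ with $|\mathcal{S}|\le 5kt/\epsilon$ such that the rows of $L_{opt}$ indexed by $[n]\setminus\mathcal{S}$ place squared Frobenius mass at most $\frac{\beta}{1-\beta}\norm{L_{opt}\circ W}_F^2$ off the support of the corresponding submask. I then define $\bar L$ to equal $L_{opt}$ on rows outside $\mathcal{S}$ and to equal $A\circ W$ on rows in $\mathcal{S}$. Writing $\bar L = L_{opt}^{\mathcal{S}^c} + E$, where $L_{opt}^{\mathcal{S}^c}$ is $L_{opt}$ with the rows in $\mathcal{S}$ zeroed (rank $\le k$) and $E$ is supported only on the $|\mathcal{S}|$ rows indexed by $\mathcal{S}$ (rank $\le |\mathcal{S}|$), I obtain $\rank(\bar L)\le k + 5kt/\epsilon \le 6kt/\epsilon = k'$ in the standard regime $\epsilon\le 1\le t$.

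Property (i) is immediate from construction: on rows in $\mathcal{S}$, $(A-\bar L)\circ W = (A - A\circ W)\circ W = 0$, while on the remaining rows $(A-\bar L)\circ W = (A-L_{opt})\circ W$, so $\norm{(A-\bar L)\circ W}_F^2 \le OPT$. Property (ii) uses Theorem \ref{thm:Mk2} plus the trivial bound $OPT \le \norm{A}_F^2$ (take $\hat L = 0$): on rows in $\mathcal{S}$, $\bar L\circ(1-W) = (A\circ W)\circ(1-W) = 0$; on the remaining rows $\bar L\circ(1-W)$ coincides with $L_{opt}\circ(1-W)$ restricted to those rows, whose squared Frobenius norm is at most $\frac{\beta}{1-\beta}\norm{L_{opt}\circ W}_F^2$ by Theorem \ref{thm:Mk2}. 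The triangle inequality gives $\norm{L_{opt}\circ W}_F \le \norm{A\circ W}_F + \sqrt{OPT} \le 2\norm{A}_F$, so
\begin{align*}
\norm{\bar L\circ(1-W)}_F^2 \le \tfrac{4\beta}{1-\beta}\norm{A}_F^2 \le \epsilon\norm{A}_F^2
\end{align*}
by our choice $\beta = \epsilon/5$ (which makes $\tfrac{4\beta}{1-\beta} = \tfrac{4\epsilon}{5-\epsilon}\le \epsilon$ for $\epsilon\le 1$).

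Finally I complete the comparison. Splitting the squared Frobenius norm by the mask and using $\rank(\bar L)\le k'$ together with the assumed near-optimality of $L$,
\begin{align*}
\norm{(A-L)\circ W}_F^2 &\le \norm{A\circ W - L}_F^2 \le \norm{A\circ W - \bar L}_F^2 + \epsilon_1\norm{A}_F^2\\
&= \norm{(A-\bar L)\circ W}_F^2 + \norm{\bar L\circ(1-W)}_F^2 + \epsilon_1\norm{A}_F^2 \le OPT + \epsilon\norm{A}_F^2 + \epsilon_1\norm{A}_F^2,
\end{align*}
which is the desired bound. The only mildly delicate step is balancing $\beta$: it must be small enough that $\tfrac{4\beta}{1-\beta}\le \epsilon$, yet large enough that $|\mathcal{S}|\le kt/\beta$ fits inside the rank budget $6kt/\epsilon - k$. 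Both constraints are comfortably met by $\beta = \epsilon/5$, so no real obstacle arises beyond the single invocation of Theorem \ref{thm:Mk2}.
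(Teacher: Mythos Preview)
Your proof is correct and follows essentially the same approach as the paper: both apply Theorem~\ref{thm:Mk2} to $L_{opt}$ with parameter $\epsilon/5$, construct $\bar L$ by overwriting the rows in $\mathcal{S}$ with $A\circ W$, bound $\norm{L_{opt}\circ W}_F\le 2\norm{A}_F$ via the triangle inequality and $OPT\le\norm{A}_F^2$, and then compare $L$ to $\bar L$ using the near-optimality hypothesis. Your explicit separation into properties (i) and (ii) and the clean final chain $\norm{A\circ W-\bar L}_F^2=\norm{(A-\bar L)\circ W}_F^2+\norm{\bar L\circ(1-W)}_F^2$ are arguably a bit tidier than the paper's presentation, but the substance is identical.
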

\begin{proof}
Since $L$ minimizes $\norm{A \circ W -L}_F^2$ over all rank-$k'$ matrices up to a $\epsilon_1 \norm{A}_F^2$ additive factor, for any rank-$k'$ $\bar L$:
\begin{align*}
\norm{(A-L) \circ W}_F^2 \le \norm{A\circ W-L}_F^2 \le \norm{A\circ W-\bar L}_F^2  + \epsilon_1 \norm{A}_F^2.
\end{align*}
Thus to prove the theorem it suffices to exhibit  any rank-$\frac{6tk}{\epsilon}$ matrix $\bar L$ achieving
\begin{align*}
\norm{A \circ W-\bar  L}_F^2\le OPT + \epsilon \norm{A}_F^2. 
\end{align*}
Consider $L_{opt}$ achieving $OPT$. Let $\mathcal{S} \subseteq [n]$ be the set of rows given by applying Theorem \ref{thm:Mk2} to $L_{opt}$ with error parameter $\epsilon/5$. Let $\bar L$ be equal to $L_{opt}$ on the rows in $[n] \setminus \mathcal{S}$ and be equal to $A \circ W$ on the rows in $\mathcal{S}$. Note that $\bar L$ has rank $\le |\mathcal{S}| + k \le  \frac{6tk}{\epsilon}$. 
Since $\bar L$ exactly matches $A\circ W$ on the rows in $\mathcal{S}$ and matches $L_{opt}$ everywhere else
we can bound:
\begin{align}\label{thusHave}
\norm{A \circ W -\bar L }_F^2 &= \norm{[A \circ W - L_{opt}]([n]\setminus \mathcal{S},p)}_F^2\nonumber\\
&= OPT + \norm{[L_{opt} \circ (1- W)]([n]\setminus \mathcal S,p)}_F^2.
\end{align}
By  the guarantee of  Theorem \ref{thm:Mk2} applied with error parameter $\epsilon/5$ we have:
\begin{align}\label{thusHave2}
\norm{[L_{opt} \circ (1- W)]([n]\setminus \mathcal S,p)}_F^2 \le \frac{\epsilon/5}{1-\epsilon/5} \norm{L_{opt} \circ W}_F^2 \le  \frac{\epsilon}{4} \norm{L_{opt} \circ W}_F^2.
\end{align}
We can bound the right hand side since
we must have $\norm{L_{opt} \circ W}_F^2 \le 4 \norm{A}_F^2$.  Otherwise we would have by the triangle inequality:
\begin{align*}
\sqrt{OPT} = \norm{(A-L_{opt})\circ W}_F  \ge \norm{L_{opt} \circ W}_F - \norm{A \circ W}_F > \norm{A}_F,
\end{align*}
which is not possible as we can always set $L= 0$ and obtain objective function value $\norm{A}_F^2$. Plugging back into \eqref{thusHave} and \eqref{thusHave2} we thus have
 \begin{align*}
 \norm{A\circ W- \bar L}_F^2 \le OPT + \epsilon \norm{A}_F^2,
 \end{align*}
 which gives the theorem.
\end{proof}

\subsection{Fixed-Parameter-Tractable Algorithms}\label{sec:nk}

In this section we leverage Theorem \ref{thm:Mk2} in a different way: to give a fixed-parameter-tractable, relative error, non-bicriteria approximation algorithm for the LRPD approximation problem and the related constrained Factor Analysis (FA) problem using a simple recursive scheme. 

\subsubsection{Exact Decomposition}

For exposition, we first consider both problems in the case when $A$ can be exactly decomposed as $D^\star+L^\star$ where $D^\star$ is diagonal and $L^\star$ is rank $k$ (i.e., when $\min_{\rank-k\ L} \norm{W\circ(A-L)}_F^2 = 0$). In Section \ref{sec:appDec} we extend our techniques to solve the problems in full generality.
Our algorithm uses a very simple recursive approach: 
We split $A$ into four quadrants and compute LRPD (resp. Factor Analysis) decompositions of the upper left and lower right quadrants. Using Theorem \ref{thm:Mk2} we can prove that the diagonal matrices returned by these decompositions match $D^\star$ on all but $O(k)$ entries.\footnote{In general, the decomposition $A=D^\star+L^\star$ may not be unique, however this result holds for any $D^\star,L^\star$.} 

Letting $A_{11} = D^\star_{11}+L^\star_{11}$ denote the upper left quadrant, our algorithm recursively computes $D'_{11} + L'_{11} =A_{11} = D^\star_{11}+L^\star_{11}$. So $D'_{11}-D^\star_{11} = L^\star_{11} + L'_{11}$. Since this is a diagonal matrix and since it has rank $\le 2k$, by Theorem \ref{thm:Mk2} applied with $t =1$ and constant $\epsilon$, it can have at most $O(k)$ nonzero entries (since the norm of the remaining entries is bounded by the norm of the off-diagonal entries, which is $0$). Of course, Theorem \ref{thm:Mk2} is overkill here and we can see this fact directly. However, the more general theorem will be important in extending our result to the non-exact decomposition case. The same bound holds for the lower right quadrant and so overall, appending the recursively computed diagonal matrices, we have found $D^\star$ up to $O(k)$ incorrect entries.

If we guess the locations of these incorrect entries (which we can do with $O(n^{O(k)})$ guesses), it only remains to solve the LRPD problem where we know all but $O(k)$ of the diagonal entries. This problem can be solved in $\poly(n)\cdot 2^{\poly(k)}$ time using generic polynomial solvers. Overall, since we need $n^{O(k)}$ guesses to succeed in identifying the incorrect entries, we obtain runtime $n^{O(k)} \cdot 2^{\poly(k)}$. 

\begin{theorem}[Low-Rank Plus Diagonal Exact Decomposition]\label{thm:lrpd1}
There is an algorithm solving 
the LRPD problem (Problem \ref{def:main} with $W = 1-I$) up to additive error $1/2^{\poly(n)}$ in $n^{O(k)}\cdot 2^{O(k^2)}$ time when there exists and rank-$k$ $L^\star$ with $\norm{W \circ (A - L^\star)}_F^2 = 0$ and all entries of $A,L^\star$ are bounded in magnitude by $2^{\poly(n)}$. That is, the algorithm outputs rank-$k$ $L$ with: 
$$\norm{W \circ (A-L)}_F^2  \le \frac{1}{2^{\poly(n)}}.$$
\end{theorem}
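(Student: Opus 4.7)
The plan is to combine the structural result of Theorem \ref{thm:Mk2} with a divide-and-conquer recursion, leaving behind a polynomial system on only $O(k)$ effective unknowns that can be dispatched to a generic solver (Theorem \ref{thm:decision_solver}). The recursive subroutine takes an $n \times n$ input known to admit some exact decomposition $A = D^\star + L^\star$ with $L^\star$ of rank at most $k$, and returns some (possibly different) such exact decomposition $(D',L')$. It splits $A$ into four $n/2 \times n/2$ blocks and recursively decomposes the two diagonal blocks $A_{11}$ and $A_{22}$; both blocks inherit the LRPD structure since restricting $A = D^\star + L^\star$ to a diagonal block preserves both the diagonality of $D^\star$ and the rank bound on $L^\star$. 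The base case $n = O(k)$ is handled directly by the polynomial solver.

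The key step is to argue, via Theorem \ref{thm:Mk2}, that the glued diagonal $D' = \diag(D'_{11}, D'_{22})$ agrees with $D^\star$ on all but $O(k)$ entries. From $D'_{11} + L'_{11} = A_{11} = D^\star_{11} + L^\star_{11}$ we get $D'_{11} - D^\star_{11} = L^\star_{11} - L'_{11}$: the right-hand side has rank at most $2k$, while the left is diagonal. Applying Theorem \ref{thm:Mk2} with $W = 1-I$, $t = 1$, and any constant $\epsilon$ (say $\epsilon = 1/2$), and using that the off-diagonal Frobenius mass of the left-hand side is zero, forces $D'_{11}$ and $D^\star_{11}$ to differ on at most $O(k)$ indices; the same holds for $D'_{22}$, so $D'$ and $D^\star$ differ on at most $4k$ indices overall.

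Finally, I would enumerate over all $\binom{n}{4k} = n^{O(k)}$ candidate subsets $S \subseteq [n]$ of mismatched indices, and for each guess solve a polynomial system whose unknowns are the $|S|$ diagonal values $\{D^\star_i\}_{i\in S}$ together with factors $U, V \in \R^{n \times k}$ representing $L^\star = UV^T$, subject to the constraints $A - D^\star = UV^T$ and $D^\star_i = D'_i$ for $i \notin S$. Because the $n - O(k)$ rows of $A - D^\star$ indexed outside $S$ are pinned down by the guess, the row span of $L^\star$ is generically fixed by these rows, so the factors can be reparametrized to leave only $\poly(k)$ free variables, bringing the per-guess cost of Theorem \ref{thm:decision_solver}, combined with binary search on bit-representations to precision $1/2^{\poly(n)}$, down to $\poly(n) \cdot 2^{O(k^2)}$. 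Multiplying by the $n^{O(k)}$ guesses and by the $O(\log n)$ recursion depth yields the claimed runtime. The main obstacle is this final reduction: a naive encoding with $2nk$ factor variables plus Theorem \ref{thm:decision_solver} would blow up to $n^{O(nk)}$, so one must exploit the pinned-down rows to cut the effective variable count to $\poly(k)$, and handle the (non-generic) case in which the determined rows do not span a full $k$-dimensional subspace by additional enumeration over rank-deficient configurations.
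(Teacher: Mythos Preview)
Your proposal is correct and mirrors the paper's proof almost exactly: recursive halving, the $O(k)$-mismatch bound via Theorem \ref{thm:Mk2}, enumeration over $n^{O(k)}$ subsets, and a final polynomial system on $O(k^2)$ variables. For the reparametrization you flag as the main obstacle, the paper simply computes $U_B = \orth\bigl([A_{BT},\, A_{BB} - D'_{BB}]\bigr)$ and $V_B = \orth\bigl([A_{TB};\, A_{BB} - D'_{BB}]\bigr)$ explicitly (these submatrices are fully known once the guess is correct) and writes $L = \begin{bmatrix} U_T \\ U_B R_U \end{bmatrix}\begin{bmatrix} V_T^T & R_V^T V_B^T \end{bmatrix}$, leaving only the $O(k^2)$ unknowns in $U_T, V_T, R_U, R_V, D_{TT}$; this parametrization covers the rank-deficient case automatically (pad $U_B, V_B$ to $k$ columns if needed), so your worry about extra enumeration over non-generic configurations is unnecessary.
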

Note that $\frac{1}{2^{\poly(n)}}$ additive error is on the order of error introduced by rounding an exact solution (with possibly irrational entries) to a $\poly(n)$ bit representation, so can be regarded as negligible. The assumption that $A,L$ have entries bounded by $2^{\poly(n)}$ is required to apply a polynomial solver is also very mild, as without this assumption these matrices could not in general be represented in $\poly(n)$ bits. 
\begin{proof}
As discussed, we use a recursive algorithm, cutting the size of $A$ in half in each step. In the base case, when $n \le k$ we can solve the problem trivially, simply by returning $L = A$. 

Fixing $L^\star$ and writing $D^\star = A - L^\star$, consider splitting $A$ into $4$ quadrants, each $n/2 \times n/2$:
\begin{align*}
\begin{bmatrix}
	A_{11} &A_{12}\\
	A_{21} &A_{22}
	\end{bmatrix} = \begin{bmatrix}
	D^\star_{11} &\\
	 &D^\star_{22}
	\end{bmatrix} + \begin{bmatrix}
	L^\star_{11} &L^\star_{12}\\
	L^\star_{21} &L^\star_{22}
	\end{bmatrix}.
\end{align*}
Since $A_{11} = D^\star_{11} + L^\star_{11}$ and  $A_{22} = D^\star_{22} + L^\star_{22}$, we know that Problem \ref{def:main} with $W = 1-I$ has optimum value $0$ on these submatrices.
Assume via induction that we compute rank-$k$ $L'_{11},L'_{22} \in  \R^{n/2 \times n/2}$ and diagonal $D'_{11},D'_{22} \in \R^{n/2 \times n/2}$ with 
such that $D'_{11}  = A_{11}- L'_{11}$ and $D'_{22}= A_{22} - L'_{22}.$\footnote{We actually only have these equalities up to $1/2^{\poly(n)}$ additive error. For now we ignore this detail. It will be formally handled in our proof for the general case, when $A$ is not exactly  equal to $D^\star+L^\star$ (Theorem \ref{thm:lrpd2}).}

By Theorem \ref{thm:Mk2}, $D'_{11}$ and $D^\star_{11}$ (similarly, $D'_{22}$ and $D^\star_{22}$) differ on at most $O(k)$ entries. Thus, letting $D' = \begin{bmatrix}
	D'_{11} &\\
	 &D'_{22}
	\end{bmatrix}$, $D'$ and $D^\star$ differ on at most $O(k)$ entries.
In $O(n^{O(k)})$ time we can iterate through every set of $O(k)$ indices $\mathcal{I} \subseteq [n]$. 
We can reorder the matrix so that the rows/columns corresponding to the indices in $\mathcal{I}$ are the first $O(k)$ rows/columns and repartition $A$ into $4$ quadrants:
\begin{align*}
\begin{matrix*}[r]
\scriptstyle{O(k)} \lbrace\\
\scriptstyle{n-O(k)} \lbrace
\end{matrix*}
\begin{bmatrix}
	A_{TT} &A_{TB}\\
	A_{BT} &A_{BB}
	\end{bmatrix} = \begin{bmatrix}
	D^\star_{TT} &\\
	 &D^\star_{BB}
	\end{bmatrix} + \begin{bmatrix}
	L^\star_{TT} &L^\star_{TB}\\
	L^\star_{BT} &L^\star_{BB}
	\end{bmatrix}.
\end{align*}
Here $A_{TT}$, $L^\star_{TT}$, and $D^\star_{TT}$ are $O(k) \times O(k)$ matrices. $A_{TB} = L^\star_{TB}$ is $O(k) \times (n-O(k))$. $A_{BT} = L^\star_{BT}$ is $(n-O(k)) \times O(k)$. And finally, $A_{BB}$, $L^\star_{BB}$, and $D^\star_{BB}$ are $(n-O(k)) \times (n-O(k))$. 

Assuming that $D'_{BB}$ matches  $D^\star_{BB}$ on all entries (which will happen for at least one guess of $O(k)$ indices $\mathcal{I}$), we can compute $L^\star_{BB}$ explicitly by setting $L^\star_{BB} = A_{BB} - D'_{BB}$. Let $U_B \in \R^{(n-O(k)) \times k}$ be an orthonormal span for the columns of $\begin{bmatrix}
	L^\star_{BT} &L^\star_{BB}\end{bmatrix}$ (which has rank $\le k$ since $L^\star$ is rank-$k$) and let $V_B\in \R^{n-O(k)\times k}$ be an orthonormal span for the rows of $\begin{bmatrix} L^\star_{TB}\\ L^\star_{BB}\end{bmatrix}$.
We know that $A = D^\star + L^\star$ can be written as:
\begin{align}
\label{eq:exact_fact}
\begin{bmatrix}
	A_{TT} &A_{TB}\\
	A_{BT} &A_{BB}
	\end{bmatrix} = \begin{bmatrix}
	D_{TT} &\\
	 &D'_{BB}
	\end{bmatrix} + \begin{bmatrix}
	U_T\\
	 U_B R_U
	\end{bmatrix} \begin{bmatrix}
	V_T^T & R_V^T V_B^T 
	\end{bmatrix},
\end{align}
where $R_V,R_U \in \R^{k\times k}$, $U_T,V_T \in \R^{O(k) \times k}$, and $D_{TT} \in \R^{O(k)\times O(k)}$ are unknown. In particular, one satisfying solution sets $D_{TT} = D_{TT}^\star$ and $R_U,R_V,U_T,V_T$ so that $\begin{bmatrix}
	U_T\\
	 U_B R_U
	\end{bmatrix} \begin{bmatrix}
	V_T^T & R_V^T V_B^T 
	\end{bmatrix} = L^\star$.

Equation \eqref{eq:exact_fact} is a degree-$2$ polynomial system in $O(k^2)$ unknown variables (the entries of $D_{TT},U_T,V_T,R_U,$ and $R_V$). Under the assumption that $A$ and $L^\star$ (and hence $D^\star = A - L^\star$) have entries bounded by $2^{\poly(n)}$, we can solve for entries satisfying \eqref{eq:exact_fact} up to $1/2^{\poly(n)}$ error using generic polynomial system solvers in $\poly(n) \cdot 2^{O(k^2)}$ time. We  detail this process in Section \ref{app:polySolver}. Our final runtime follows since at each layer of the recursion we require $n^{O(k)} \cdot 2^{O(k^2)}$ time, and since the problem size is cut in half in each layer, the total runtime is also bounded by $n^{O(k)} \cdot 2^{O(k^2)}$.
\end{proof}

We can give an algorithm for factor analysis in the exact decomposition case using very similar techniques to Theorem \ref{thm:lrpd1}.
\begin{theorem}[Factor Analysis Exact Decomposition]\label{thm:fa1}
There is an algorithm solving 
the Factor Analysis problem up to additive error $1/2^{\poly(n)}$ in $n^{O(k)}\cdot 2^{O(k^2 \log k)}$ time when $A = D^\star + L^\star$ for PSD $A,D^\star,L^\star$ and all entries of $A$ are bounded in magnitude by $2^{\poly(n)}$. That is, the algorithm outputs rank-$k$ $L \succeq 0$ and diagonal $D \succeq 0$ with: 
$$\norm{A-(D+L)}_F^2  \le \frac{1}{2^{\poly(n)}}.$$
\end{theorem}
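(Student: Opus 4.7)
The plan is to follow the recursive skeleton of Theorem~\ref{thm:lrpd1} while keeping the PSD constraints intact at every level. At the base case $n \le k$, return $L = A$ and $D = 0$, which is a valid rank-$k$ PSD plus PSD diagonal decomposition since $A \succeq 0$ by hypothesis. Otherwise, partition $A$ into four $n/2 \times n/2$ blocks. Each diagonal block $A_{ii}$ is a principal submatrix of the PSD matrix $A$ and hence PSD, with $A_{ii} = D^\star_{ii} + L^\star_{ii}$ for PSD $L^\star_{ii}$ of rank at most $k$ and PSD diagonal $D^\star_{ii}$. Recurse on $A_{11}$ and $A_{22}$ to obtain PSD outputs $L'_{ii}$, $D'_{ii}$ with $D'_{ii} + L'_{ii} = A_{ii}$ up to inverse-exponential error.

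Now $D'_{ii} - D^\star_{ii} = L^\star_{ii} - L'_{ii}$ is diagonal and of rank at most $2k$, so Theorem~\ref{thm:Mk2} (applied with $t = 1$, constant $\epsilon$, and off-diagonal Frobenius norm essentially zero) implies it has at most $O(k)$ nonzero entries. Hence $D' \eqdef \diag(D'_{11}, D'_{22})$ agrees with $D^\star$ in all but at most $O(k)$ coordinates. Enumerate all $n^{O(k)}$ candidate index sets $\mathcal{I}$ of that size. For each guess, reorder rows and columns so that $\mathcal{I}$ occupies a top block $T$ of size $O(k)$; when the guess is correct we have $D'_{BB} = D^\star_{BB}$, and so $L^\star_{BB} = A_{BB} - D'_{BB}$ is known explicitly and is PSD of rank at most $k$. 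Compute in polynomial time (up to $1/2^{\poly(n)}$ error) a factor $Y_B \in \R^{(n - O(k)) \times k}$ with $Y_B Y_B^T = L^\star_{BB}$. Any PSD decomposition $L^\star = X X^T$ with $X = \begin{bmatrix} X_T \\ X_B \end{bmatrix}$ must then satisfy $X_B = Y_B Q$ for some orthogonal $Q \in \R^{k \times k}$.

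This reduces the problem to the polynomial system in unknowns $D_{TT} \in \R^{O(k) \times O(k)}$ diagonal, $X_T \in \R^{O(k) \times k}$, and $Q \in \R^{k \times k}$:
\begin{align*}
A_{TT} = D_{TT} + X_T X_T^T, \quad A_{TB} = X_T Q^T Y_B^T, \quad D_{TT}(i,i) \ge 0, \quad Q^T Q = I_k,
\end{align*}
which has $O(k^2)$ variables, constant degree, and coefficients of bit-length $\poly(n)$ (using the boundedness hypothesis on $A$). Applying Theorem~\ref{thm:decision_solver} together with binary search as in Section~\ref{app:polySolver} produces a feasible solution up to additive error $1/2^{\poly(n)}$ in $\poly(n) \cdot 2^{O(k^2 \log k)}$ time. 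Combined with the $n^{O(k)}$ guesses per recursive layer and the halving of problem size at each level, this yields the claimed $n^{O(k)} \cdot 2^{O(k^2 \log k)}$ runtime. The main obstacle is cleanly encoding the PSD cone as polynomial constraints: writing $L = X X^T$ and introducing the orthogonal rotation $Q$ to pin down $X_B$ handles this at the cost of $k^2$ extra variables and $O(k^2)$ quadratic constraints (accounting for the $\log k$ factor in the exponent), but care is required to track the accumulation of $1/2^{\poly(n)}$ error through $O(\log n)$ recursive levels so that the hypothesis of Theorem~\ref{thm:Mk2} and the feasibility of the polynomial system continue to hold along the intended branch.
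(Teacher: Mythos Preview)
Your proposal is correct and follows essentially the same recursive-plus-polynomial-solver strategy as the paper. The only difference is cosmetic: to enforce $L \succeq 0$ in the final polynomial system, the paper writes $L = ZZ^T$ with $Z = \begin{bmatrix} U_T \\ U_B R_U \end{bmatrix}$ by taking $U_B$ to be an orthonormal column basis for $[A_{BT}, A_{BB} - D'_{BB}]$ and simply identifying $U_T = V_T$, $R_U = R_V$ (no orthogonality constraint on $R_U$ needed, since the equation $A_{BB} - D'_{BB} = U_B R_U R_U^T U_B^T$ is already part of the system), whereas you pre-factor $L^\star_{BB} = Y_B Y_B^T$ and introduce an explicit orthogonal $Q$—both encodings yield $O(k^2)$ variables, $O(k)$ inequality constraints, and hence the same $2^{O(k^2\log k)}\cdot\poly(n)$ solver cost.
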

\begin{proof}
Note that in the theorem statement we only assume the entries of $A$ are bounded by $2^{\poly(n)}$ and do not require an additional assumption on the entries of $L^\star$ as we did in Theorem \ref{thm:lrpd1}. This is simply because, since $D^\star,L^\star \succeq 0$, both have positive diagonal entries, and we should never choose $D^\star$ such that the sum of these entries is greater than the corresponding diagonal entry of $A$. Thus, the entries of $D^\star$ are bounded by $2^{\poly(n)}$. In turn, since $L^\star$ is an approximation of $A-D^\star$, which has bounded magnitude entries, $L^\star$ also has bounded magnitude entries (otherwise, a better approximation could be achieved by setting $L^\star=0$.)

We use the same recursive approach as Theorem \ref{thm:lrpd1}. The only difference is that we need to ensure that $D$ and $L$ are PSD. We already  have that $D'_{BB}$ is PSD (i.e., has all nonnegative entries) since it is composed of entries of $D'_{11}$ and $D'_{22}$, which are PSD by the recursive guarantee. When solving the final polynomial system we can ensure that $D_{TT}$ is PSD by adding $4k$ positivity constraints. Additionally since $A$ is PSD and symmetric, we have $U_B =V_B$. To ensure that $L$ is PSD we simple require that $R_U = R_V$ and that $U_T = V_T$. Thus, $L =  \begin{bmatrix}
	U_T\\
	 U_B R_U
	\end{bmatrix} \begin{bmatrix}
	U_T^T & R_U^T U_B^T \end{bmatrix}$, which is PSD. In solving the final polynomial system, this amounts to just using a single set of variables for the entries of $R_U = R_V$ and a set for the entries of $U_T = V_T$.

Overall, the runtime is similar to what is given in Theorem \ref{thm:lrpd1}, except the polynomial system now involves $O(k)$ constraints (each still with $O(1)$ degree and $O(k)$ variables) and thus requires $2^{O(k^2  \log k)} \cdot \poly(n)$ time to solve. See Section \ref{app:polySolver} for a detailed explanation.
\end{proof}

\subsubsection{Approximate Decomposition}\label{sec:appDec}
We now consider LRPD and FA problems when $\min_{D,L} \norm{A-(D+L)}_F^2 > 0$ and we wish to find $L,D$ achieving a $(1+\epsilon)$ approximation to this minimum.
Using Theorem \ref{thm:Mk2} we prove an analogous result to what we used in the exact decomposition case, namely, that given two near optimal low-rank plus diagonal approximations to $A$, their diagonal entries cannot differ outside a small subset of entries.
\begin{lemma}[Sparse Difference of Approximate Solutions]\label{lem:kdiffapp}
Consider any $A$ with $\norm{A-  (D + L)}_F^2 =  C$ and $\norm{A-(D'+L')}_F^2 \le C + \gamma$ for  diagonal $D,D'$,  rank-$k$ $L,L'$, and $\gamma \ge 0$.
Let $\mathcal{D}$ be either  the set of all diagonal matrices or of all nonnegative diagonal matrices. If $D = \argmin_{\hat D \in \mathcal{D}} \norm{A-(\hat D+L)}_F^2$ and $D' = \argmin_{\hat D \in \mathcal{D}} \norm{A-(\hat D+L')}_F^2$ then there is some set of $O(k/\epsilon^2)$ indices $\mathcal{S} \subseteq [n]$ with:
\begin{align*}
\sum_{i \notin \mathcal S} [D(i,i)-D'(i,i)]^2\le \epsilon^2 C + \epsilon^2 \gamma.
\end{align*}

\end{lemma}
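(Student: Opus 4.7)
The plan is to reduce this to Theorem \ref{thm:Mk2} applied to the rank-$2k$ difference matrix $M \eqdef L' - L$. The key observation will be that the optimality of $D$ and $D'$ forces the diagonal entries of $M$ to essentially equal $D-D'$, while the off-diagonal entries of $M$ are controlled by the two residuals $E \eqdef A-(D+L)$ and $E' \eqdef A-(D'+L')$, which have known Frobenius norm bounds $\|E\|_F^2 = C$ and $\|E'\|_F^2 \le C+\gamma$.

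First, I would handle the diagonal entries. In the unconstrained case ($\mathcal{D}$ is all diagonal matrices), the per-coordinate optimality of $D$ gives $D(i,i) = A(i,i) - L(i,i)$ and similarly for $D'$, so $D(i,i) - D'(i,i) = L'(i,i) - L(i,i) = M(i,i)$ exactly. In the nonnegative case, $D(i,i) = \max(A(i,i)-L(i,i),0)$ and $D'(i,i) = \max(A(i,i)-L'(i,i),0)$, so the $1$-Lipschitz property of $x \mapsto \max(x,0)$ gives $|D(i,i)-D'(i,i)| \le |M(i,i)|$. In either case $[D(i,i)-D'(i,i)]^2 \le M(i,i)^2$, so it suffices to bound $\sum_{i\notin \mathcal S} M(i,i)^2$.

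Next, I would bound the off-diagonal mass of $M$. Since $D-D'$ is diagonal, $M \circ (1-I) = (L'-L)\circ (1-I) = (E-E')\circ (1-I)$, where I used that $E - E' = (D'-D) + (L'-L)$. Hence by the triangle inequality
\begin{align*}
\|M \circ (1-I)\|_F^2 \le \|E-E'\|_F^2 \le 2\|E\|_F^2 + 2\|E'\|_F^2 \le 4C + 2\gamma.
\end{align*}

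Finally, I would apply Theorem \ref{thm:Mk2} to the rank-$2k$ matrix $M$ with mask $W = 1-I$ (so $t=1$) and a parameter $\epsilon' = \Theta(\epsilon^2)$ chosen so that $\epsilon'/(1-\epsilon') \le \epsilon^2/4$. This produces a set $\mathcal S$ of size at most $2k/\epsilon' = O(k/\epsilon^2)$ with
\begin{align*}
\sum_{i \notin \mathcal S} M(i,i)^2 \le \frac{\epsilon'}{1-\epsilon'} \|M\circ(1-I)\|_F^2 \le \frac{\epsilon^2}{4}(4C+2\gamma) \le \epsilon^2 C + \epsilon^2 \gamma,
\end{align*}
which, combined with the first step, yields the claim. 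The only subtle point is the choice of constants so that both the $C$ and $\gamma$ contributions absorb cleanly into $\epsilon^2 C + \epsilon^2\gamma$; there is no real obstacle, since the proof is essentially just identifying the right rank-$2k$ matrix to feed into Theorem \ref{thm:Mk2}.
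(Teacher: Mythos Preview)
Your proposal is correct and follows essentially the same approach as the paper: bound the off-diagonal mass of the rank-$2k$ difference $L'-L$ by $4C+2\gamma$ via the residuals, apply Theorem~\ref{thm:Mk2} with $W=1-I$ and parameter $\Theta(\epsilon^2)$ to control the diagonal of $L'-L$ outside $O(k/\epsilon^2)$ indices, and then transfer this to $D-D'$ using the per-coordinate optimality (exact equality in the unconstrained case, the $1$-Lipschitz property of $\max(\cdot,0)$ in the nonnegative case). The only cosmetic differences are the order of the steps and that the paper bounds $\|E-E'\|_F$ via $(\sqrt{C}+\sqrt{C+\gamma})^2$ rather than $2\|E\|_F^2+2\|E'\|_F^2$, both yielding $4C+2\gamma$.
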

For the LRPD problem, the set $\mathcal{D}$ will be the set of all diagonal matrices, in the factor analysis problem it will be all nonnegative diagonal matrices. If we solve one of these problems, fixing $L$, it is easy to compute $\argmin_{\hat D \in \mathcal{D}} \norm{A-(\hat D+L)}_F^2$, which can only improve our error. When $\mathcal{D}$ is all diagonal matrices, we just set $\hat D$ to $\diag(L-A)$. If $\mathcal{D}$ is all non-negative diagonal matrices we set $\hat D$ to $\max(0,\diag(L-A))$, where $\max$ denotes the entrywise maximum.
\begin{proof}
By triangle inequality we have:
\begin{align}\label{firsttri}
\norm{(D+L) - (D'+L')}_F^2 &\le \left (\sqrt{C} + \sqrt{C+\gamma}\right)^2\nonumber\\
&\le 4C + 2\gamma.
\end{align}
By  \eqref{firsttri} we also have (just ignoring any of the on-diagonal difference between $D+L$ and $D'+L$):
\begin{align*}
\norm{(L-L')-\diag(L-L')}_F^2 = \norm{(D+L)-(D'+L')-\diag((D+L)-(D'+L'))}_F^2 \le 4C + 2\gamma.
\end{align*}
Applying Theorem \ref{thm:Mk2} with $W = 1-I$ and $t=1$, since $L-L'$ has rank $\le 2k$, there is some set of $O(k/\epsilon^2)$ indices $\mathcal{S}$ such that:
\begin{align}\label{eq:Lcloseness}
\sum_{i  \in [n] \setminus \mathcal{S}} [L(i,i)-L'(i,i)]^2 = \sum_{i  \in [n] \setminus \mathcal{S}} [(A-L)-(A-L')](i,i)^2 \le \epsilon^2 C + \epsilon^2 \gamma.
\end{align}
If $\mathcal{D}$ is the set of all diagonal matrices, then 
by our assumption that $D$ is the closest diagonal matrix in $\mathcal{D}$ to $\diag(L-A)$ and $D'$ is the closest to $\diag(L'-A)$ we have $D(i,i) = -A(i,i)+L(i,i)$ and $D'(i,i) = -A(i,i)+L'(i,i)$. Combined with \eqref{eq:Lcloseness} this gives:
\begin{align*}
\sum_{i  \in [n] \setminus \mathcal{S}} [D(i,i)-D'(i,i)]^2 \le \epsilon^2 C + \epsilon^2 \gamma
\end{align*}
completing the lemma in this case.

Alternatively, if $\mathcal{D}$ is the set of all non-negative diagonal matrices, $D(i,i) = \max(0, -A(i,i)+L(i,i))$ and $D'(i,i) = \max(0,-A(i,i)+L'(i,i))$. If $A(i,i)-L(i,i)$ and $A(i,i)-L'(i,i)$ have the same sign, then either $D(i,i) = D'(i,i) = 0$ or $D(i,i) = -A(i,i)+L(i,i)$ and $D'(i,i) = -A(i,i)+L'(i,i)$. If they have opposite signs, then we have $|D(i,i) - D'(i,i)| \le |[A-L](i,i)-[A-L'](i,i)|$. Overall using \eqref{eq:Lcloseness} we again have:
\begin{align*}
\sum_{i  \in [n] \setminus \mathcal{S}} [D(i,i)-D'(i,i)]^2 \le \epsilon^2 C + \epsilon^2 \gamma,
\end{align*}
which completes the lemma.
\end{proof}

Using Lemma \ref{lem:kdiffapp} we can now give analogous results  to Theorems \ref{thm:lrpd1} and \ref{thm:fa2} using a similar recursive scheme.
As in the exact case, we will split $A$ into 4 quadrants and recursively approximate the top left and bottom right. This will yield $D'$ that nearly matches $D$ on all but $O(k/\epsilon^2)$ indices by Lemma \ref{lem:kdiffapp}. We can guess what these indices are in $n^{O(k/\epsilon^2)}$ time. We then subtract $D'$ from $A$ and have to solve LRPD/Factor Analysis with just $O(k/\epsilon^2)$ unknown entries. This last step is more complicated in the approximate case since we no longer know an exact low-rank span for the submatrix corresponding to the correct diagonals in $D'$. However, we can still solve the problem efficiently using ideas similar to the ``projection-cost preserving'' sketches of \cite{CohenElderMusco:2015}.

\begin{theorem}[Low-Rank Plus Diagonal Approximate Decomposition]\label{thm:lrpd2}
There an algorithm solving 
the LRPD problem (Problem \ref{def:main} with $W = 1-I$) to relative error $(1+\epsilon)$ and additive error $1/2^{\poly(n)}$ in $n^{O(k/\epsilon^2)}\cdot 2^{O(k^2/\epsilon^2)}$ time, assuming that all entries of $A$ are bounded in magnitude by $2^{\poly(n)}$ and there exists an optimal $L^\star$ with all entries bounded in magnitude by $2^{\poly(n)}$. That is, the algorithm outputs rank-$k$ $L$ with: 
$$\norm{W \circ (A-L)}_F^2  \le  (1+\epsilon)  \norm{W \circ ({A} -  L^\star)}_F^2 +\frac{1}{2^{\poly(n)}}.$$
\end{theorem}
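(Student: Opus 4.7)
\textbf{Proof proposal for Theorem \ref{thm:lrpd2}:}

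The plan is to mirror the recursive scheme of Theorem \ref{thm:lrpd1}, replacing the exact-case structural facts by their approximate counterparts. First I would split $A$ into four $n/2 \times n/2$ quadrants and recursively compute rank-$k$ plus diagonal decompositions $(L'_{11}, D'_{11})$ and $(L'_{22}, D'_{22})$ of $A_{11}$ and $A_{22}$, with a slightly sharpened accuracy parameter $\epsilon' = \Theta(\epsilon/\log n)$ (or an analogous function that keeps the total multiplicative blowup $(1+\epsilon')^{\log n} \le 1+\epsilon$). Let $L^\star$ achieve $OPT$ and set $D^\star_{jj} = \diag(A_{jj} - L^\star_{jj})$ so $(L^\star_{jj}, D^\star_{jj})$ is a valid LRPD pair for $A_{jj}$. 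Applying Lemma \ref{lem:kdiffapp} to the pairs $(L'_{jj}, D'_{jj})$ and $(L^\star_{jj}, D^\star_{jj})$ on each half yields a set $\mathcal{S}_j$ of size $O(k/\epsilon^2)$ outside which the diagonals $D'_{jj}$ and $D^\star_{jj}$ agree up to additive $\ell_2^2$ error $\epsilon^2 \cdot OPT$ (plus a negligible $1/2^{\poly(n)}$ additive slack inherited from the polynomial-solver base case).

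Next I would enumerate, in $n^{O(k/\epsilon^2)}$ time, all subsets $\mathcal{I} \subseteq [n]$ of size $|\mathcal{S}_1|+|\mathcal{S}_2| = O(k/\epsilon^2)$ as guesses for the union of wrong indices. For the correct guess, after reordering so the indices of $\mathcal{I}$ sit in a top block, we are left with the constrained problem: find rank-$k$ $L$ and a diagonal $D_{TT}\in\R^{O(k/\epsilon^2)\times O(k/\epsilon^2)}$ minimizing $\|A - L - \tilde D\|_F^2$, where $\tilde D$ carries $D_{TT}$ on top and the \emph{known} $D'_{BB}$ on the bottom. Setting $B := A - \tilde D$ with bottom-right block $B_{BB} = A_{BB} - D'_{BB}$, the correct guess ensures $B_{BB}$ is within $\ell_F^2$-distance $\epsilon \cdot OPT + 1/2^{\poly(n)}$ of $L^\star_{BB}$, which is rank at most $k$.

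The main obstacle, and the place where the exact-case argument must be upgraded, is parameterizing the rank-$k$ $L$ when we no longer possess an exact rank-$k$ span for the bottom part. I would resolve this via a projection-cost-preserving (PCP) sketch in the style of Cohen--Elder--Musco: let $U_B, V_B \in \R^{(n-|\mathcal{I}|)\times O(k/\epsilon)}$ be the top $O(k/\epsilon)$ left/right singular vectors of $B_{BB}$. A PCP-type argument shows that restricting the bottom rows/columns of $L$ to lie in the column/row span of $U_B, V_B$ preserves the optimal objective up to a multiplicative $(1+\epsilon)$ factor, because (i) $L^\star_{BB}$ is close to $B_{BB}$, and (ii) the best rank-$k$ approximation in the span of $B_{BB}$'s top singular directions is nearly as good as the best unconstrained rank-$k$ approximation. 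I would then write $L = \begin{pmatrix} U_T \\ U_B R_U \end{pmatrix}\begin{pmatrix} V_T^T & R_V^T V_B^T\end{pmatrix}$ exactly as in \eqref{eq:exact_fact}, where now $R_U, R_V$ are $k \times O(k/\epsilon)$ matrices, and feed the resulting polynomial system with $O(k^2/\epsilon^2)$ real variables and $O(1)$ degree into the polynomial verifier of Theorem \ref{thm:decision_solver}, combined with binary search to minimize $\|A - L - \tilde D\|_F^2$ up to additive $1/2^{\poly(n)}$. This runs in $2^{O(k^2/\epsilon^2)} \poly(n)$ time per guess, yielding the claimed overall bound.

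Finally I would assemble the error analysis: each recursion level introduces a multiplicative $(1+\epsilon')$ factor via Lemma \ref{lem:kdiffapp} (through the $\epsilon^2 C + \epsilon^2 \gamma$ slack) combined with a $(1+\epsilon')$ factor from the PCP sketch at that level, and an additive $1/2^{\poly(n)}$ from the polynomial solver. Chaining across $O(\log n)$ levels of recursion with $\epsilon' = \Theta(\epsilon/\log n)$ yields total multiplicative error $(1+\epsilon)$ and total additive error that remains $1/2^{\poly(n)}$ after adjusting the polynomial precision by a $\log n$ factor. The runtime recurrence is $T(n) = n^{O(k/\epsilon^2)} \cdot 2^{O(k^2/\epsilon^2)} + 2\,T(n/2)$, which solves to $n^{O(k/\epsilon^2)} \cdot 2^{O(k^2/\epsilon^2)}$ as claimed. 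The delicate point I expect to need the most care is verifying that the PCP-style guarantee holds with a rank-$O(k/\epsilon)$ sketch \emph{despite} $B_{BB}$ only being close, not equal, to a rank-$k$ matrix, and that the induction hypothesis required by Lemma \ref{lem:kdiffapp} (namely that $D'_{jj}$ is the optimal diagonal given $L'_{jj}$) is maintained by a trivial post-processing step $D'_{jj} \leftarrow \diag(A_{jj}-L'_{jj})$ after each recursive call.
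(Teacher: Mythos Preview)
Your overall architecture matches the paper's, but two points diverge from the paper's argument in ways that matter.

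\textbf{Error does not compound across levels.} You propose sharpening to $\epsilon' = \Theta(\epsilon/\log n)$ to keep $(1+\epsilon')^{\log n}\le 1+\epsilon$. This is unnecessary and would blow up the runtime to $n^{O(k\log^2 n/\epsilon^2)}$, missing the stated bound. The point is that the quality of the recursive output enters \emph{only} through the $\gamma$ term of Lemma~\ref{lem:kdiffapp}: if the recursive call achieves error $\le(1+\epsilon)\,\|\Delta^\star_{ii}\|_F^2$, then $\gamma\le \epsilon\,\|\Delta^\star_{ii}\|_F^2$ and the lemma contributes $\epsilon^2 C+\epsilon^2\gamma\le \epsilon^2(1+\epsilon)\|\Delta^\star_{ii}\|_F^2$ to the diagonal mismatch. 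The subsequent triangle inequality and PCP step each cost a fixed $(1+\epsilon)$ factor measured against the \emph{global} optimum $\|\Delta^\star\|_F^2$, not against the previous level's output. So one level produces a $(1+O(\epsilon))$-approximation assuming the children are $(1+\epsilon)$-approximations; closing the induction only requires shrinking $\epsilon$ by a constant, not by $\log n$.

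\textbf{The PCP span must come from the full bottom slab, not just $B_{BB}$.} You take $U_B,V_B$ as the top $O(k/\epsilon)$ singular vectors of $B_{BB}=A_{BB}-D'_{BB}$. The paper instead takes $U_B$ from the top $\lceil k/\epsilon\rceil$ left singular vectors of $F=[A_{BT},\,A_{BB}-D'_{BB}]$ (and $V_B$ symmetrically from $[A_{TB};\,A_{BB}-D'_{BB}]$). This is not cosmetic: the PCP argument (Claim~\ref{clm:approxSpan}) bounds $\|(F-U_BU_B^TF)QQ^T\|_F^2$ via the tail singular values of $F$, and that control is lost if $U_B$ is chosen from $B_{BB}$ alone, since $(I-U_BU_B^T)A_{BT}$ is then unconstrained. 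Your justification (``$B_{BB}$ is close to the rank-$k$ matrix $L^\star_{BB}$'') does not cover the case where $L^\star_{BB}$ has rank strictly less than $k$, so the column span of $[L^\star_{BT},L^\star_{BB}]$ need not be captured by $B_{BB}$'s top directions. The fix is exactly the paper's: compute the sketch from the whole known bottom slab. (Equivalently one could augment your $U_B$ by the column span of $A_{BT}$, which has rank $\le O(k/\epsilon^2)$, but you do not do this.)

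With these two corrections your plan coincides with the paper's proof: recurse with the same $\epsilon$, apply Lemma~\ref{lem:kdiffapp}, enumerate the $O(k/\epsilon^2)$ bad indices, build $U_B,V_B$ from the full bottom slab, and solve the resulting $O(k^2/\epsilon^2)$-variable polynomial system.
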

Note that Theorem \ref{thm:lrpd1} follows as a special case, when $\norm{W \circ (A-L^\star)}_F^2 = 0$ and we set $\epsilon = \Omega(1)$.
\begin{proof}
As discussed, we use a recursive algorithm, cutting the size of $A$ in half in each step. In the base case, when $n \le k$ we can solve the problem trivially, simply by returning $L = A$. 
Fix $L^\star$ and $D^\star = \diag(A-L^\star)$ with
\begin{align*}
L^\star \in \argmin_{\text{rank-$k$} \hat L} \norm{W \circ ({A} -  \hat L)}_F^2.
\end{align*}
and let $\Delta^\star = A-(D^\star+L^\star)$.
Split $A$ into $4$ quadrants, each $n/2 \times n/2$:
\begin{align*}
\begin{bmatrix}
	A_{11} &A_{12}\\
	A_{21} &A_{22}
	\end{bmatrix} = \begin{bmatrix}
	D^\star_{11} &\\
	 &D^\star_{22}
	\end{bmatrix} + \begin{bmatrix}
	L^\star_{11} &L_{12}\\
	L^\star_{21} &L_{22}
	\end{bmatrix} + \begin{bmatrix}
	\Delta^\star_{11} &\Delta^\star_{12}\\
	\Delta^\star_{21} &\Delta^\star_{22}
	\end{bmatrix}.
\end{align*}
Assume that we recursively compute rank-$k$ $L'_{11},L'_{22} \in  \R^{n/2 \times n/2}$ and diagonal $D'_{11},D'_{22} \in \R^{n/2 \times n/2}$ with $D'_{ii} = \diag(A-L'_{ii})$ such that for $i \in \{1,2\}$:
\begin{align*}
\norm{A_{ii} - (D'_{ii} + L'_{ii})}_F^2 \le (1+\epsilon) \min_{\substack{\text{rank-$k$ } \hat L\\ \text{ diagonal } {\hat D}}} \norm{{A}_{ii} - ({\hat D} + \hat L)}_F^2 + \frac{1}{2^{\poly(n)}}.
\end{align*}
Note that this ensures in particular that for both $i \in \{1,2\}$, 
$$\norm{A_{ii} - (D'_{ii} + L'_{ii})}_F^2 \le (1+\epsilon)\norm{\Delta^\star_{ii}}_F^2 + \frac{1}{2^{\poly(n)}}.$$ 
Applying
Lemma \ref{lem:kdiffapp}, we then have that there is some set of $\mathcal{S}_i$ indices with $ |\mathcal{S}_i| = O(k/\epsilon^2)$  such that, letting $\bar D_{ii}$ match $D^\star_{ii}$ on these indices and match $D'_{ii}$ everywhere else, 
$$\norm{D^\star_{ii} - \bar D_{ii}}_F^2 \le \epsilon^2 \left ( \norm{\Delta^\star_{ii}}_F^2 + \frac{1}{2^{\poly(n)}} \right ).$$
Thus, letting $\bar D = \begin{bmatrix}
	\bar D_{11} &\\
	 & \bar D_{22}
	\end{bmatrix}$, we have
	
	\begin{align}\label{eq:barMatch}
	\norm{D^\star - \bar D}_F^2 \le \epsilon^2 \left (\norm{\Delta^\star_{11}}_F^2 + \norm{\Delta^\star_{22}}_F^2\right ) + \frac{1}{2^{\poly(n)}} \le \epsilon^2 \norm{\Delta^\star}_F^2 + \frac{1}{2^{\poly(n)}}.
	\end{align}
	
	After computing $D'_{11}$ and $D'_{22}$ recursively, we know $\bar D$ up to $O(k/\epsilon^2)$ entries. We know the entries matching those of $D'$ and do not know those in $\mathcal{S}_1 \cup \mathcal{S}_2$ matching $D^\star$. In $n^{O(k/\epsilon^2)}$ time we can iterate through every set of $O(k/\epsilon^2)$ indices $\mathcal{I} \subseteq [n]$. 
We can order the matrix so that the rows/columns corresponding to the indices in $\mathcal{I}$ are the first $O(k/\epsilon^2)$ rows/columns and repartition $A$ into $4$ quadrants:
\begin{align*}
\begin{matrix*}[r]
\scriptstyle{O(k/\epsilon^2)} \lbrace\\
\scriptstyle{n-O(k/\epsilon^2)} \lbrace
\end{matrix*}
\begin{bmatrix}
	A_{TT} &A_{TB}\\
	A_{BT} &A_{BB}
	\end{bmatrix} = \begin{bmatrix}
	D^\star_{TT} &\\
	 &D^\star_{BB}
	\end{bmatrix} + \begin{bmatrix}
	L^\star_{TT} &L^\star_{TB}\\
	L^\star_{BT} &L^\star_{BB} 
	\end{bmatrix} + \begin{bmatrix}
	\Delta^\star_{TT} &\Delta^\star_{TB}\\
	\Delta^\star_{BT} &\Delta^\star_{BB} 
	\end{bmatrix}.
\end{align*}

Assume that $\bar D_{BB} = D'_{BB}$, which will happen for at least one guess of $O(k/\epsilon^2)$ indices $\mathcal{I}$.
By the triangle inequality and \eqref{eq:barMatch} we have:
\begin{align}
 \min_{\substack{\text{rank-$k$ }  L \\ {\text{diagonal }   D_{TT}} }} &\left \|{A} - \left(\begin{bmatrix}  D_{TT} &\\ & D'_{BB}\end{bmatrix} +  L\right)\right\|_F^2 \label{eq:min_problem_old}\\
&\le \norm{A-(\bar D + L)}_F^2 \nonumber\\ 
&\le (\norm{A-(D^\star+L)}_F + \norm{D^\star-\bar D}_F)^2\nonumber\\
&\le (1+\epsilon)^2 \norm{\Delta^\star}_F^2 + \frac{1}{2^{\poly(n)}} \label{eq:insteadOpt1}
\end{align}
where we make use of the fact that certainly $\norm{\Delta^\star}_F^2 \le \norm{A}_F^2 \le 2^{\poly(n)}$ by the assumption that $A$ has bounded entries.
Thus to approximately solve LRPD, it suffices to solve the minimization problem in \eqref{eq:min_problem_old}. Since ${L}$ is rank $k$, we can rewrite this problem as:
\begin{align}
	\label{eq:min_problem}
	 \min_{\substack{Z_T,W_T \in \R^{O(k/\epsilon^2) \times k} \\ Z_B, W_B \in \R^{n-O(k/\epsilon^2) \times k} \\  {\text{diagonal }   D_{TT}} }} &\bnorm{{A} - \left(\begin{bmatrix}  D_{TT} &\\ & D'_{BB}\end{bmatrix} +  \begin{bmatrix}Z_T \\ Z_B \end{bmatrix} \begin{bmatrix}W_T^T & W_B^T \end{bmatrix}\right)}_F^2.
\end{align}

Unfortunately, we cannot efficiently solve this minimization problem directly because it involves $O(nk)$ unknowns: $Z_T$, $Z_B$, $W_T$, and $W_B$ contain $2nk$ free variables in total, so applying Theorem \ref{thm:decision_solver} would take time exponential in $n$.

Recall that for the exact decomposition problem, we reduced this cost by writing ${L}$ exactly in a factored form, using the rank-$k$ column and row spans of $[L^\star_{BT},L^\star_{BB}]$ and $[L^\star_{TB};L^\star_{BB}]$. This reduced the number of free variables to $O(k^2)$ -- see \eqref{eq:exact_fact}. Because $\begin{bmatrix}A_{BT} & A_{BB}-D_{BB}' \end{bmatrix}$ and $\begin{bmatrix}A_{TB} \\ A_{BB} -D_{BB}'\end{bmatrix}$ do not in general exactly match $L^\star$ and so have rank $k$ in the approximate decomposition problem, this is no longer possible. However, it \emph{is possible} to show that \eqref{eq:min_problem} can be solved approximately with an ${L}$ that is restricted to a particular factored form involving few variables.
Specifically, let:
\begin{align*}
U_B \in \R^{(n - O(k/\epsilon^2)) \times \lceil k/\epsilon \rceil} \text{ contain the top $\lceil k/\epsilon \rceil$ column singular vectors of } \begin{bmatrix}A_{BT} & A_{BB} - D_{BB}' \end{bmatrix} ,
\end{align*}
 \begin{align*}
 V_B \in \R^{(n - O(k/\epsilon^2)) \times \lceil k/\epsilon \rceil} \text{ contain the top $\lceil k/\epsilon \rceil$ row singular vectors of } \begin{bmatrix}A_{TB} \\ A_{BB}-D_{BB}' \end{bmatrix}.
 \end{align*}
%
%
\begin{claim}\label{clm:approxSpan}
For $U_B$ and $V_B$ as described above, 
\begin{align}\label{eq:insteadOpt}
 &\min_{\substack{ R_U,  R_V \in \R^{O(k/\epsilon^2) \times k}\\
							 Z_T, W_T \in \R^{O(k/\epsilon^2) \times k}\\
							 \text{diagonal }  D_{TT}}} 
							 \left \|A -\left ( \begin{bmatrix}  D_{TT} &\\ & D'_{BB}\end{bmatrix} + \begin{bmatrix}Z_T \\ U_B R_U \end{bmatrix} \begin{bmatrix}W_T^T & R_V^T  V_B^T \end{bmatrix}\right )\right \|_F^2 \leq \\
							 &(1 + \epsilon)^2\min_{\substack{Z_T,W_T \in \R^{O(k/\epsilon^2) \times k}\\ Z_B, W_B \in \R^{n-O(k/\epsilon^2) \times k}  \\  {\text{diagonal }   D_{TT}} }} \bnorm{{A} - \left(\begin{bmatrix}  D_{TT} &\\ & D'_{BB}\end{bmatrix} +  \begin{bmatrix}Z_T \\ Z_B \end{bmatrix} \begin{bmatrix}W_T^T & W_B^T \end{bmatrix}\right)}_F^2 \nonumber.
\end{align}
\end{claim}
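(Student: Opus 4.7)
The plan is to prove the claim by a two-step reduction. Write any rank-$k$ $L$ as $L=ZW^T$ with $Z,W\in\R^{n\times k}$ split as $Z=\begin{bmatrix}Z_T\\Z_B\end{bmatrix}$, $W=\begin{bmatrix}W_T\\W_B\end{bmatrix}$, let $OPT'$ denote the unconstrained minimum on the right-hand side of \eqref{eq:insteadOpt}, and define the intermediate value $OPT'_1$ by restricting only $Z_B\in\colspan(U_B)$ (with $W_B$ still free). I will show $OPT'_1\le(1+\epsilon)\,OPT'$; a symmetric column-side argument then further restricts $W_B\in\colspan(V_B)$ and gives the final constrained minimum at most $(1+\epsilon)\,OPT'_1\le(1+\epsilon)^2\,OPT'$.

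For the first step, the cost splits by rows as
\begin{align*}
\|A-\tilde D - ZW^T\|_F^2 = (\text{top-row contribution}) + \|\tilde E_B - Z_B W^T\|_F^2,
\end{align*}
where $\tilde E_B=[A_{BT},\,A_{BB}-D'_{BB}]$ and the top contribution does not depend on $Z_B$. Fix $(D^*_{TT},Z^*_T,W^*)$ achieving $OPT'$, let $P_{W^*}=W^*((W^*)^TW^*)^{-1}(W^*)^T$, and note that the best unconstrained $Z^*_B$ gives bottom-row residual $\|\tilde E_B(I-P_{W^*})\|_F^2$. A direct orthogonality computation (splitting the bottom residual along $U_BU_B^T$ vs.\ $I-U_BU_B^T$ on the left) shows that restricting $Z_B$ to $\colspan(U_B)$ increases the bottom residual by exactly $\|(I-U_BU_B^T)\tilde E_B P_{W^*}\|_F^2$.

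Bounding this extra cost is the crux of the argument. Since $P_{W^*}$ has rank $k$, the matrix $(I-U_BU_B^T)\tilde E_B P_{W^*}$ has rank at most $k$, and its top singular value is at most $\sigma_{\lceil k/\epsilon\rceil+1}(\tilde E_B)$ (multiplication by an orthogonal projection cannot increase singular values, and $(I-U_BU_B^T)\tilde E_B$ has spectrum $\{\sigma_i(\tilde E_B)\}_{i>\lceil k/\epsilon\rceil}$). Hence $\|(I-U_BU_B^T)\tilde E_B P_{W^*}\|_F^2 \le k\,\sigma_{\lceil k/\epsilon\rceil+1}^2(\tilde E_B)$. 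A standard averaging argument yields $\sigma_{\lceil k/\epsilon\rceil+1}^2(\tilde E_B)\le\frac{1}{\lceil k/\epsilon\rceil-k+1}\|\tilde E_B-(\tilde E_B)_k\|_F^2\le\frac{1}{\lceil k/\epsilon\rceil-k+1}\|\tilde E_B(I-P_{W^*})\|_F^2$, since $\tilde E_B P_{W^*}$ is a rank-$k$ approximation to $\tilde E_B$ and hence no better than the optimal one. The extra cost is therefore at most $O(\epsilon)\cdot\|\tilde E_B(I-P_{W^*})\|_F^2\le O(\epsilon)\cdot OPT'$, which gives $OPT'_1\le(1+O(\epsilon))\,OPT'$.

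For the second step, split the cost by columns so that $W_B$ appears only in $\|\tilde E_R - ZW_B^T\|_F^2$, where $\tilde E_R=\begin{bmatrix}A_{TB}\\A_{BB}-D'_{BB}\end{bmatrix}$. Fix any step-1 optimum $(D^1_{TT},Z^1,W^1_T)$: restricting $W_B$ to $\colspan(V_B)$ incurs an extra cost of at most $O(\epsilon)\,\|(I-P_{Z^1})\tilde E_R\|_F^2\le O(\epsilon)\cdot OPT'_1$ by the identical spectral argument applied to $\tilde E_R$, whose top right singular vectors are precisely the columns of $V_B$. The final cost is thus at most $(1+O(\epsilon))^2\,OPT'$, which collapses to the claimed $(1+\epsilon)^2$ bound by rescaling $\epsilon$ (equivalently, taking $r=\lceil ck/\epsilon\rceil$ singular directions for an appropriate constant $c$). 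The main obstacle is that $Z_B$ and $W_B$ are shared between the top/bottom and left/right blocks of the factorization, so a naive projection of $Z^*_B$ onto $\colspan(U_B)$ could introduce an error uncontrolled by $OPT'$; the remedy is to first let the unrestricted factor $W$ absorb its optimal least-squares value, reducing the cost of the restriction to a rank-$k$ term whose spectrum is controlled by the tail singular values of $\tilde E_B$ (and, in the second step, of $\tilde E_R$).
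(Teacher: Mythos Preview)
Your proof is correct and follows essentially the same two-step strategy as the paper: restrict $Z_B$ to $\colspan(U_B)$, then $W_B$ to $\colspan(V_B)$, each time identifying the extra cost as $\|(I-U_BU_B^T)\tilde E_B P_{W^*}\|_F^2$ (respectively its column-side analogue) and bounding it via the tail singular values of $\tilde E_B$ (respectively $\tilde E_R$). The only difference is quantitative: the paper bounds the extra cost directly by $\sum_{i=r+1}^{r+k}\sigma_i^2(\tilde E_B)$ (using that a rank-$k$ projection picks up at most the top $k$ squared singular values) and then averages over $\sigma_{k+1}^2,\dots,\sigma_{r+k}^2$ to get exactly $\epsilon\|\tilde E_B-(\tilde E_B)_k\|_F^2$, whereas you use the looser $k\sigma_{r+1}^2$ and a slightly different averaging, yielding $(1+O(\epsilon))^2$ and the need to rescale; if you want the claim exactly as stated with $r=\lceil k/\epsilon\rceil$, swap in the sharper Ky-Fan-type bound.
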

It immediately follows from \eqref{eq:insteadOpt1} and the equivalence of \eqref{eq:min_problem_old} and \eqref{eq:min_problem} that, if we solve \eqref{eq:insteadOpt}, we will obtain an LRPD approximation with error $\leq (1+\epsilon)^3\|\Delta^\star\|_F^2 + \frac{1}{2^{\poly(n)}}$. This can be done using a generic polynomial solver because, in contrast to \eqref{eq:min_problem}, \eqref{eq:insteadOpt}  has just $O(k^2/\epsilon^2)$ unknowns, and can be turned into a polynomial system with degree $O(1)$ and $O(1)$ constraints. Thus it can be solved to  $\frac{1}{2^{\poly(n)}}$ error in $\poly(n) \cdot 2^{O(k^2/\epsilon^2)}$. We describe this process in detail in Section \ref{app:polySolver}. Our final runtime follows since at each layer of recursion we require $n^{O(k/\epsilon^2)} \cdot 2^{O(k^2/\epsilon^2)}$ time and since the problem size is cut in half at each layer (recall that the $n^{O(k/\epsilon^2)}$ term comes from the fact that we must guess which entries of $D'$ are close to those of $D^\star$).
Thus, to complete the proof of Theorem \ref{thm:lrpd2}, it just remains to prove Claim \ref{clm:approxSpan}.

\begin{proof}[Proof of \ref{clm:approxSpan}]
	Let $Z_T^\star$, $Z_B^\star$, $W_T^\star$, $W_B^{\star}$, and $D_{TT}^\star$ comprise an optimal solution for \eqref{eq:min_problem} and denote:
	\begin{align*}
	\bar D^\star = \begin{bmatrix}D_{TT}^\star & \\
	& D'_{BB}\end{bmatrix}
	\text{ and }
	\bar L^\star = \begin{bmatrix}Z_T^\star \\ Z_B^\star \end{bmatrix} \begin{bmatrix}W_T^{\star T}  & W_B^{\star T} \end{bmatrix} 
	\end{align*}
	We will make the argument in two steps. First we consider a third minimization problem:
	\begin{align}\label{eq:minbetween}
	&\min_{\substack{ R_U \in \R^{O(k/\epsilon^2) \times k}\\
			Z_T, W_T \in \R^{O(k/\epsilon^2) \times k}\\
			W_B \in \R^{n-O(k/\epsilon^2) \times k} \\
			\text{diagonal }  D_{TT}}} 
			\left \|A -\left ( \begin{bmatrix}  D_{TT} &\\ & D'_{BB}\end{bmatrix} + \begin{bmatrix}Z_T \\ U_B R_U \end{bmatrix} \begin{bmatrix}W_T^T & W_B^T \end{bmatrix}\right )\right \|_F^2.
	\end{align}
	Let $OPT_1$ be the optimum value of \eqref{eq:min_problem}, $OPT_2$ be the optimum value of \eqref{eq:minbetween}, and $OPT_3$ be the optimum value of  \eqref{eq:insteadOpt}. We will establish the claim by separately showing:
	\begin{align}
	\label{eq:project1}
	OPT_2 &\leq (1+\epsilon)OPT_1 \\
	\label{eq:project2}
	OPT_3 &\leq (1+\epsilon)OPT_2.
	\end{align}
	We prove \eqref{eq:project1} first. 
	With $Z_T^\star$, $Z_B^\star$, $W_T^\star$, and $W_B^\star$ as defined above, let $\tilde{L}_1$ equal:
	\begin{align*}
	\tilde{L}_1 = \begin{bmatrix}Z_T^\star \\ \left(U_BU_B^T\right)Z^\star_B \end{bmatrix} \begin{bmatrix}W_T^{\star T}  & W_B^{\star T} \end{bmatrix}.
	\end{align*}
	Since $\|A - (\bar D^\star + \tilde{L}_1)\|_F^2 \geq OPT_2$, if we can show that $\|A - (\bar D^\star + \tilde{L}_1)\|_F^2\leq (1+\epsilon)OPT_1$, we prove \eqref{eq:project1}.
	To do so, we introduce the additional notation:
	\begin{align*}
	F &= \begin{bmatrix} A_{BT} & A_{BB}- D'_{BB}\end{bmatrix} & \text{ and } & &Q &= \orth\left(\begin{bmatrix} W^\star_T \\ W^\star_B\end{bmatrix}\right).
	\end{align*}
	We first notice that we always have $Z^\star_B\begin{bmatrix}W_T^{\star T} & W_B^{\star T} \end{bmatrix} = FQQ^{T}$. This holds because, in choosing the minimal solution to \eqref{eq:min_problem}, after fixing all variables besides $Z_B$, we can always obtain a better solution by choosing $Z_B$ so that $Z_B\begin{bmatrix}W_T^T & W_B^T \end{bmatrix}$ is the optimal approximation to $F$ in the row span of $\begin{bmatrix}W_T^T & W_B^T \end{bmatrix}$. This is obtained by setting $Z_B = \begin{bmatrix}W_T^T & W_B^T \end{bmatrix}^+$, in which case $Z_B = FQQ^T$ for $Q = \orth\left(\begin{bmatrix}W_T^T & W_B^T \end{bmatrix}\right)$.
	
	Then, since $\bar  L^\star$ and $\tilde{L}_1$ only differ on the lower block of their left factor, we can see that:
	\begin{align}
	\label{eq:simp_diff}
	\norm{A - (\bar D^\star + \tilde{L}_1)}_F^2 - \norm{A - (\bar D^\star + \bar L^\star)}_F^2  = \|F - FQQ^T\|_F^2 - \|F - U_BU_B^TFQQ^T\|_F^2. 
	\end{align}
	We have:
	\begin{align*}
		\|F - (U_BU_B^T)FQQ^T\|_F^2 &= \|F(I - QQ^T) + (F - U_BU_B^TF)QQ^T\|_F^2 \\
		&= \|F(I - QQ^T)\|_F^2 + \|(F - U_BU_B^TF)QQ^T\|_F^2. 
	\end{align*}
	So plugging into \eqref{eq:simp_diff}, we have
	\begin{align}
		\label{eq:simp_diff2}
		\norm{A - (\bar D^\star + \tilde{L}_1)}_F^2 - \norm{A - (\bar D^\star + \bar L^\star)}_F^2   =  \|(F - U_BU_B^TF)QQ^T\|_F^2. 
	\end{align}
	We just need to show that $\|(F - U_BU_B^TF)QQ^T\|_F^2$ is small. 
	
	To do so, let $\sigma_1, \ldots, \sigma_n$ denote the singular values of $F$. Since $U_{B}U_B^T$ was chosen to be the top $\lceil k/\epsilon\rceil$ singular vectors of $F$, $F - U_BU_B^TF$ has singular values:
	\begin{align*}
	\sigma_{\lceil k/\epsilon\rceil+1}, \sigma_{\lceil k/\epsilon\rceil + 2},\ldots, \sigma_{n}.
	\end{align*}
	Since $Q$ has orthonormal columns, $QQ^T$ is a rank $k$ projection matrix and thus:
	\begin{align*}
	\|(F - U_BU_B^TF)QQ^T\|_F^2 \leq \sum_{i=\lceil k/\epsilon\rceil+1}^{\lceil k/\epsilon\rceil + k} \sigma_i^2 \leq \epsilon \sum_{i=1}^{\lceil k/\epsilon\rceil + k} \sigma_i^2 \leq \epsilon \|F - F_k\|_F^2,
	\end{align*}
	where $F_k$ is the optimal rank $k$ approximation to $F$ in Frobenius norm. Note that this argument is essentially identical to the proof in \cite{CohenElderMusco:2015} that the top $\lceil k/\epsilon\rceil$ singular vectors of a matrix can be used to form a ``projection-cost preserving sketch'' of that matrix. 
	
	Finally, it of course holds that $\|F - F_k\|_F^2 \leq OPT_1$ since $OPT_1 \leq \|F - Z^\star_B\begin{bmatrix}W_T^{\star T} & W_B^{\star T} \end{bmatrix} = \|F - FQQ^T\|_F^2$. Since $Q$ is rank $k$, this is a rank $k$ approximation to $F$, so $\|F - F_k\|_F^2 \leq \|F - FQQ^T\|_F^2$.
	This concludes the proof of \eqref{eq:project1}: returning to \eqref{eq:simp_diff2}, we see that $\norm{A - (\bar D^\star + \tilde{L}_1)}_F^2 \leq (1+\epsilon)OPT_1$ and it follows that $OPT_2 \leq \norm{A - (\bar D^\star + \tilde{L}_1)}_F^2  \leq (1+\epsilon)OPT_1$.
	
	The proof of \eqref{eq:project2} is essentially identical, so we omit it for brevity. Briefly, if we let $Z_T^\star$, $R_U^\star$, $W_T^\star$, $W_B^{\star}$, and $D_{TT}^\star$  comprise an optimal solution for \eqref{eq:minbetween} and $\bar D^\star,\bar L^\star$ be defined as before. Denote:
	\begin{align*}
		F &= \begin{bmatrix} A_{TB} \\ A_{BB}- D'_{BB}\end{bmatrix} &\text{ and } & & Q &= \orth\left(\begin{bmatrix} Z^\star_T \\ U_B R_U^\star \end{bmatrix}\right)
	\end{align*}
	and then we have that $\begin{bmatrix} Z^\star_T \\ U_B R_U^\star \end{bmatrix}W_B^{\star} = QQ^TF$. This is all we need to prove that 
	\begin{align*}
		\tilde{L}_2 = \begin{bmatrix}Z_T^\star \\ U_BR_U^\star\end{bmatrix} \begin{bmatrix}W_T^{\star T}  & W_B^{\star T}\left(V_BV_B^T\right) \end{bmatrix},
	\end{align*}
	satisfies $\norm{A - (\bar D^\star + \tilde{L}_2)}_F^2 \leq (1+\epsilon)OPT_2$. We also have $\norm{A - (\bar D^\star + \tilde{L}_2)}_F^2 \geq OPT_3$, which yields \eqref{eq:project2}, and thus the proof of Claim \ref{clm:approxSpan}.
\end{proof}

\end{proof}

We have an analogous result to Theorem \ref{thm:lrpd2} for the factor analysis problem, which is also proved in essentially the same way. As in the exact decomposition case, we just need to require that in the final optimization problem, $D_{TT}$ is positive, $Z_T = W_T$ and $R_U = R_V$. Overall we have: 
\begin{theorem}[Factor Analysis Approximate Decomposition]\label{thm:fa2}
There is an algorithm solving 
the Factor Analysis problem to relative error $(1+\epsilon)$ and additive error $1/2^{\poly(n)}$ in $n^{O(k/\epsilon^2)}\cdot 2^{O(k^2\log k/\epsilon^2)}$ time, assuming that all entries of $A$ are bounded in magnitude by $2^{\poly(n)}$. That is, the algorithm outputs rank-$k$ $L \succeq 0$ and diagonal $D \succeq 0$ with: 
$$\norm{A-(D+L)}_F^2  \le  (1+\epsilon)  \norm{{A} - ({ D^\star} +  L^\star)}_F^2 +\frac{1}{2^{\poly(n)}}.$$
\end{theorem}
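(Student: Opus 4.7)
The plan is to mirror the recursive structure of Theorem \ref{thm:lrpd2}, adapted with PSD constraints exactly as Theorem \ref{thm:fa1} adapts Theorem \ref{thm:lrpd1}. At each recursive level, split $A$ into four $n/2 \times n/2$ quadrants and recursively invoke the algorithm on the symmetric PSD blocks $A_{11}$ and $A_{22}$ to obtain rank-$k$ PSD $L'_{11},L'_{22}$ and nonnegative diagonal $D'_{11}, D'_{22}$ with $D'_{ii} = \max(0,\diag(A_{ii} - L'_{ii}))$. As in Theorem \ref{thm:lrpd2}, use the recursive guarantee together with Lemma \ref{lem:kdiffapp} applied with $\mathcal{D}$ equal to the set of nonnegative diagonal matrices. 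This yields $\bar D = \begin{bmatrix} \bar D_{11} & \\ & \bar D_{22}\end{bmatrix}$ that matches $D^\star$ on all but $O(k/\epsilon^2)$ indices and satisfies $\|D^\star - \bar D\|_F^2 \le \epsilon^2 \|\Delta^\star\|_F^2 + 1/2^{\poly(n)}$, where $\Delta^\star = A - (D^\star + L^\star)$.

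Next, loop over all $n^{O(k/\epsilon^2)}$ choices of the index set $\mathcal{I}$ of unknown diagonal entries and permute so they form a top block of size $O(k/\epsilon^2)$. Because $A$ is symmetric PSD and $D'_{BB}$ is diagonal, $A_{BB} - D'_{BB}$ is symmetric, and $A_{BT} = A_{TB}^T$. Hence the left singular vectors of $\begin{bmatrix} A_{BT} & A_{BB}-D'_{BB}\end{bmatrix}$ coincide with the right singular vectors of $\begin{bmatrix} A_{TB} \\ A_{BB}-D'_{BB}\end{bmatrix}$, so we may take $U_B = V_B$ in the analogue of Claim \ref{clm:approxSpan}. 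The plan is to set up the polynomial system
\begin{equation*}
 \min_{\substack{ R \in \R^{k \times k},\ Z_T \in \R^{O(k/\epsilon^2) \times k}\\ \text{diagonal }  D_{TT} \succeq 0}}
\left \|A -\left ( \begin{bmatrix}  D_{TT} &\\ & D'_{BB}\end{bmatrix} + \begin{bmatrix}Z_T \\ U_B R \end{bmatrix} \begin{bmatrix}Z_T^T & R^T U_B^T \end{bmatrix}\right )\right \|_F^2,
\end{equation*}
where the symmetric factorization $L = NN^T$ with $N = \begin{bmatrix}Z_T \\ U_B R\end{bmatrix}$ forces $L \succeq 0$ automatically, and the $O(k/\epsilon^2)$ nonnegativity constraints on $\diag(D_{TT})$ force $D \succeq 0$. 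This has $O(k^2/\epsilon^2)$ variables and $O(k/\epsilon^2)$ inequality constraints of constant degree, so Theorem \ref{thm:decision_solver} (combined with binary search as in Section \ref{app:polySolver}) solves it to additive error $1/2^{\poly(n)}$ in $2^{O(k^2 \log k/\epsilon^2)} \poly(n)$ time per guess.

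The correctness argument for the symmetric version of Claim \ref{clm:approxSpan} is the main obstacle. The unconstrained claim lets us replace a general rank-$k$ factor with one whose right factor lies in the column span of $V_B$ and whose left lies in $U_B$, losing only a $(1+\epsilon)^2$ factor via a projection-cost-preserving-sketch argument. When restricting to symmetric PSD $L$, I plan to write an optimal $L^\star = N^\star (N^\star)^T$ with $N^\star = \begin{bmatrix} Z_T^\star \\ Z_B^\star\end{bmatrix}$ and mimic the two-step approximation \eqref{eq:project1}--\eqref{eq:project2}, but collapsing the two steps into a single symmetric projection $Z_B^\star \mapsto U_B U_B^T Z_B^\star$ (using $U_B = V_B$). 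Because the perturbation is applied symmetrically to both copies of $N^\star$, the resulting $\tilde L = \tilde N \tilde N^T$ remains PSD, and a triangle-inequality bound analogous to \eqref{eq:simp_diff2} controls $\|A - (\bar D^\star + \tilde L)\|_F^2$ against $\|F - F_k\|_F^2 \le OPT$, yielding the same $(1+\epsilon)^{O(1)}$ loss. Absorbing constants into $\epsilon$, the final error telescopes over the $O(\log n)$ recursion depths to $(1+\epsilon)\|\Delta^\star\|_F^2 + 1/2^{\poly(n)}$, and the runtime multiplies to $n^{O(k/\epsilon^2)} 2^{O(k^2 \log k/\epsilon^2)}$ as claimed.
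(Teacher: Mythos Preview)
Your proposal is correct and takes essentially the same approach as the paper, which simply states that Theorem \ref{thm:fa2} ``is proved in essentially the same way'' as Theorem \ref{thm:lrpd2} with the modifications $D_{TT}\succeq 0$, $Z_T=W_T$, and $R_U=R_V$; you spell out exactly these modifications and additionally sketch why the symmetric variant of Claim \ref{clm:approxSpan} goes through. One minor slip: $U_B$ has $\lceil k/\epsilon\rceil$ columns, so $R$ should be $\lceil k/\epsilon\rceil\times k$ rather than $k\times k$, but this does not affect the $O(k^2/\epsilon^2)$ variable count or the claimed runtime.
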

Note that Theorem \ref{thm:fa1} follows as a special case when $\norm{{A} - ({ D^\star} +  L^\star)}_F^2 = 0$ and we set $\epsilon = \Omega(1)$.

\subsection{Polynomial Optimization}\label{app:polySolver}

In this section we detail our use of polynomial solvers in Theorems \ref{thm:lrpd1}, \ref{thm:fa1}, \ref{thm:lrpd2}, and \ref{thm:fa2}. We start with a technical lemma that we will use to bound the bit complexity of certain polynomial systems that arise in our algorithms.

\begin{lemma}[Bounded Low-Rank Factors]\label{lem:factorBound} Consider $M \in \R^{n \times p}$ with $|M(i,j)|\le \Delta$ for all $i,j$ and let $L\in \R^{n \times p}$ be an optimal rank-$k$ approximation:
\begin{align*}
L = \argmin_{\text{rank-$k$ } L} \norm{M-L}_F.
\end{align*}
We can write $L = UV^T$ for $U \in \R^{n \times k}$, $V \in \R^{p \times k}$ with $|U(i,j)| \le \Delta$ and $|V(i,j)| \le n^{1/4}\Delta^{1/2}$ for all $i,j$.
\end{lemma}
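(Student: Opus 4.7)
The plan is to construct $U$ and $V$ by taking the SVD of $L$ and then rescaling pairs of singular vector columns independently. I would write $L = U_L \Sigma_L V_L^T$ where $U_L \in \R^{n \times k}$, $V_L \in \R^{p \times k}$ have orthonormal columns (padding with zero columns if $\rank(L) < k$) and $\Sigma_L = \diag(\sigma_1, \ldots, \sigma_k)$. For any positive scalars $c_1, \ldots, c_k$, setting $U := U_L \diag(c_1, \ldots, c_k)$ and $V := V_L \diag(\sigma_1/c_1, \ldots, \sigma_k/c_k)$ preserves $UV^T = L$. Since $U_L,V_L$ have orthonormal columns, $|U_L(i,j)|, |V_L(i,j)| \le 1$, so $|U(i,j)| \le c_j$ and $|V(i,j)| \le \sigma_j/c_j$. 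Thus meeting both entrywise bounds reduces to choosing $c_j$ in the interval $[\sigma_j/(n^{1/4}\Delta^{1/2}),\; \Delta]$, which is nonempty precisely when $\sigma_j \le n^{1/4}\Delta^{3/2}$.

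The main technical ingredient will be bounding the singular values. Since the zero matrix is a valid rank-$k$ competitor, the optimality of $L$ yields $\norm{L}_F \le \norm{M}_F$, and the hypothesis $|M(i,j)| \le \Delta$ gives $\norm{M}_F \le \sqrt{np}\,\Delta$. Therefore $\sigma_j \le \sigma_1 \le \norm{L}_F \le \sqrt{np}\,\Delta$ for every $j \in [k]$. In the regime where the lemma is actually invoked in Section \ref{app:polySolver}, namely $p = n$ and $\Delta = 2^{\poly(n)}$, we easily have $\sqrt{np}\,\Delta \le n^{1/4}\Delta^{3/2}$, so the feasibility condition $\sigma_j \le n^{1/4}\Delta^{3/2}$ holds for all $j$.

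With this in hand, I would set $c_j = \max(1,\; \sigma_j/(n^{1/4}\Delta^{1/2}))$, which lies in the interval above and directly yields $|U(i,j)| \le c_j \le \Delta$ and $|V(i,j)| \le \sigma_j/c_j \le n^{1/4}\Delta^{1/2}$. The hard part is really the a priori singular value control: removing the tacit $\Delta \gtrsim p\sqrt{n}$ assumption would require a sharper bound on $\sigma_1$ (for example via $\sqrt{\norm{M}_1\,\norm{M}_\infty}$, or by exploiting that $L = U_L U_L^T M$ is an orthogonal projection of $M$ so each column of $L$ inherits the bound $\norm{L(:,j)}_2 \le \sqrt{n}\,\Delta$), but in the parameter regime the lemma targets this issue does not arise and a single inequality chain handles all $j$ uniformly.
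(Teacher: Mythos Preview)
Your proposal is correct in the regime where the lemma is actually used, and the overall strategy (take an orthonormal factorization of $L$, then rescale columns) matches the paper's. The one substantive difference is in the norm-control step. You bound each singular value by $\sigma_j \le \norm{L}_F \le \norm{M}_F \le \sqrt{np}\,\Delta$, which forces the tacit assumption $\Delta \ge p\sqrt{n}$ for the rescaling interval to be nonempty. The paper instead exploits optimality of $L$ at the level of individual columns: since $L$ is the projection of $M$ onto its top-$k$ left singular space, every column satisfies $\norm{L(:,i)}_2 \le \norm{M(:,i)}_2 \le \sqrt{n}\,\Delta$. Taking $U$ orthonormal and $V$ with rows $v_i$ satisfying $\norm{v_i}_2 = \norm{Uv_i}_2 = \norm{L(:,i)}_2$ then gives $|V(i,j)| \le \sqrt{n}\,\Delta$ directly, with no dependence on $p$; the final $n^{1/4}$ rescaling only needs $\Delta \ge \sqrt{n}$ to match the stated $U$-bound. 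You actually mention exactly this column-norm idea in your closing parenthetical, so you have identified the sharper argument---you just did not carry it out. For the appendix's purposes ($p=n$, $\Delta = 2^{\poly(n)}$) either version suffices, and in fact the lemma as literally stated is false for small $\Delta$ anyway (take $M$ to be the all-$\Delta$ rank-one matrix with $\Delta \ll 1$), so some lower bound on $\Delta$ is unavoidable.
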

\begin{proof}
Assume that $\Delta = 1$. For general $\Delta$, we can first scale $M$ by $1/\Delta$ so that its entries are bounded in magnitude by $1$, exhibit $U,V$ with entries bounded in magnitude by $n^{1/4}$, and then scale these matrices each up by a $\sqrt{\Delta}$ factor, yielding the result.

We can decompose $L = UV^T$ using e.g., a QR decomposition, where $U$ is orthonormal and so has entries bounded in magnitude by $1$. Consider the $i^{th}$ row of $V$, $v_i \in \R^k$. If we project this row to the row span of $U$, $UV^T$ will be unchanged and so will still equal $L$. Additionally, after this projection, we can see that, letting $l_i,m_i \in \R^p$ be the $i^{th}$ columns of $L$ and $M$:  
\begin{align}\label{eq:vBound}
\norm{v_i}_2  = \norm{Uv_i}_2 = \norm{l_i}_2 \le \norm{m_i}_2 \le \sqrt{n}.
\end{align}
where the second to last inequality follows from the fact that if $\norm{l_i}_2 > \norm{m_i}_2$, we could achieve a better low-rank approximation by replacing $\norm{l_i}_2$ with the projection of $m_i$ onto this column, which has norm $\le \norm{m_i}_2$ and just rescales the column so does not change the rank of $L$.

By \eqref{eq:vBound}, $V$ has all entries bounded in magnitude by $\sqrt{n}$. Finally, if we multiply $U$ by $n^{1/4}$ and divide $V$ by $n^{1/4}$ we will still have $UV^T = L$ and all entries will be bounded in magnitude by  $n^{1/4}$, completing the lemma.
\end{proof}

We now discuss our use of polynomial solvers in the exact decomposition case, considered in Theorems \ref{thm:lrpd1} and \ref{thm:fa1}.

\subsubsection{Use of Polynomial Solvers in Theorems \ref{thm:lrpd1} and \ref{thm:fa1}}

In Theorem \ref{thm:lrpd1} we reduce solving the LRPD decomposition problem to finding matrices $R_V,R_U \in \R^{k\times k}$, $U_T,V_T \in \R^{4k \times k}$, and $D_{TT} \in \R^{4k\times 4k}$ satisfying \eqref{eq:exact_fact}. This equation defines a polynomial system of equations. This system has $O(k^2)$ variables: the entries of $R_V,R_U \in \R^{k\times k}$, $U_T,V_T \in \R^{4k \times k}$, and $D_{TT} \in \R^{4k\times 4k}$. It has $n^2$ constraints, each of the form 
$$A(i,j) = \wh D(i,j) +  \wh L(i,j)$$
 where $ \wh D =  \begin{bmatrix}
	D_{TT} &\\
	 &D'_{BB}
	\end{bmatrix}$ and $ \wh L = \begin{bmatrix}
	U_T\\
	 U_B R_U
	\end{bmatrix} \begin{bmatrix}
	V_T^T & R_V^T V_B^T 
	\end{bmatrix}.$
	Note that $A(i,j)$ is a known quantity and $ \wh D(i,j) +  \wh L(i,j)$ is a degree-$2$ polynomial in the unknown variables. We can combine all these equality constraints into a single constraint:
	\begin{align}\label{eq:mainConstraint}
	\sum_{i,j} \left (A(i,j) = \wh D(i,j) + \wh L(i,j)\right )^2 = 0.
	\end{align}
	Note that the left hand side of this constraint is a degree-$4$ polynomial in the unknown variables. 
	We will identify entries of $R_V,R_U,U_T,V_T$ and $D_{TT}$ that satisfy \eqref{eq:mainConstraint} up to additive error $1/2^{\poly(n)}$ via binary  search. To apply  binary search we must first bound the range that we must search over. 
	
	\subsubsection{Bounding the Range of Binary Search}
	
	By the assumption of Theorem \ref{thm:lrpd1} that $A = D^\star + L^\star$ where $D^\star$ and $L^\star$ have entries bounded in magnitude by $2^{\poly(n)}$, there is a solution to \eqref{eq:mainConstraint} where the entries of $\wh D$ (and hence $D_{TT}$) and $\wh L$ are bounded in magnitude by $2^{\poly(n)}$. By  Lemma \ref{lem:factorBound}, if this is the case, $\wh L$ admits a factorization with all entries bounded in magnitude by $2^{\poly(n)}$. Thus there is a solution with all entries of $U_T,V_T, U_BR_U, V_BR_V$ bounded in magnitude by $2^{\poly(n)}$. Finally, we note that this implies the existence of a solution with all entries of $R_U,R_V$ bounded in magnitude by  $2^{\poly(n)}$: we can assume that the columns of $R_U$ fall in the row span of $U_B$ (resp. for $R_V$ and $V_B$), thus their norms and hence entries are bounded by the column norms of $U_BR_U$, which are bounded by $2^{\poly(n)}$.  Thus, for all variables, we can restrict our search range to $[-2^{\poly(n)},2^{\poly(n)}]$.
	 
	 \subsubsection{Performing an Iteration of Binary  Search}
	 
	 To perform binary search we will consider the polynomial system consisting of \eqref{eq:mainConstraint}, along with a constraint that all variables are bounded in magnitude by $2^{\poly(n)}$ (i.e., a degree-$2$ constraint that bounds the sum of squares of the variables by $2^{\poly(n)}$).
	 In each iteration of binary search, we will consider an entry $M(i,j)$ (where $M$ is $R_V,R_U,U_T,V_T$, or $D_{TT}$) and verify this polynomial system augmented with the search constraints $c_1 \le M(i,j)$ and $M(i,j) \le c_2$ for any $c_1,c_2 \in [-2^{\poly(n)},2^{\poly(n)}]$. The overall system has four constraints that are $O(1)$ degree polynomials in $O(k^2)$ variables. 
	Its coefficients are all multiples of: 
	\begin{enumerate}
	\item The entries of $D'_{BB}$, which are obtained via a recursive call to our algorithm and thus bounded in magnitude by $2^{\poly(n)}$.
	\item The entries of $U_B$ and $V_B$ which are bounded in magnitude by $1$ since they are orthonormal.
	\item The entries of $A$ and the values $c_1$, and $c_2$, which again are bounded in magnitude by $2^{\poly(n)}$. 
	\end{enumerate}
	Thus, as long as we round all coefficients to additive error $1/2^{\poly(n)}$, we can represent our full system using $H = \poly(n)$ bits. Since, by our added magnitude constraint, any valid solution to the system has all variables bounded by  $2^{\poly(n)}$, 
	rounding these coefficients can only affect the minimum value of \eqref{eq:mainConstraint} by $1/2^{\poly(n)}$. We can verify that this minimum value is $\le 1/2^{\poly(n)}$  in $2^{O(k^2)} \cdot \poly(n)$ time using Theorem \ref{thm:decision_solver}. This certifies that there is some setting of $M(i,j) \in [c_1,c_2]$ that gives a solution with the left hand side of \eqref{eq:mainConstraint} bounded by  $ 1/2^{\poly(n)}$. By performing $\poly(n)$ iterations in this way, we can identify $c_1,c_2$ that are $1/2^{\poly(n)}$ apart such that some value of  $M(i,j) \in [c_1,c_2]$ achieves such a value.
	
	\subsubsection{Fixing Variables}
	
	After identifying a small range $[c_1,c_2]$ in which some valid value for $M(i,j)$ lies, we will fix this variable. Specifically, we will chose an arbitrary value $c^*$ in this range and add the constraint $M(i,j) = c^*$ to our system by adding the term $(M(i,j)-c^*)^2$ to the left hand side of \eqref{eq:mainConstraint}. $c^*$ is within $1/2^{\poly(n)}$ of what ever the true valid value in $[c_1,c_2]$ is, so fixing $M(i,j) = c^*$ can only increase the optimum of the left hand side of \eqref{eq:mainConstraint} by $1/2^{\poly(n)}$ (this follows since all coefficients of the system are bounded by $2^{\poly(n)}$, and additionally, by our magnitude constraints, in any feasible solution, all variables are bounded by $2^{\poly(n)}$). 
	
	After fixing $M(i,j)$ we move on to the next variable. Note that after adding the constraint $M(i,j) = c^*$ our system still has coefficients bounded in magnitude by $2^{\poly(n)}$ and we can still solve it in $2^{O(k^2)} \cdot \poly(n)$ time. We perform $\poly(n)$ such binary  searches, each requiring $\poly(n)$ polynomial system verifications, and thus requiring total runtime $\poly(n) \cdot 2^{O(k^2)}$, as claimed in Theorem \ref{thm:lrpd1}.
	
	\subsubsection{Extension to Factor Analysis}
	
	The same general technique described above can be used to solve the polynomial system that arises in the factor analysis algorithm of Theorem \ref{thm:fa1}. The only difference is that we add $O(k)$ additional constraints to our original system, requiring that $D_{TT}(i,i) \ge 0$ for all $i$. Each system solve thus takes $\poly(n)\cdot k^{O(k^2)} = \poly(n) \cdot 2^{O(k^2\log k)}$ time.
	
	\subsubsection{Use of Polynomial Solvers in Theorems \ref{thm:lrpd2} and \ref{thm:fa2}}
	
	Our use of polynomials solvers in Theorems \ref{thm:lrpd2} and \ref{thm:fa2} is very similar to in Theorems \ref{thm:lrpd1} and \ref{thm:fa1}, so we just discuss relevant modifications. Focusing on Theorem \ref{thm:lrpd2}, we reduce solving the LRPD approximation problem to finding matrices $R_U,R_V \in \R^{O(k/\epsilon^2) \times k}$, $Z_T,W_T \in \R^{O(k/\epsilon^2) \times k}$, and $D_{TT} \in \R^{O(k/\epsilon^2) \times O(k/\epsilon^2)}$ minimizing the lefthand side of \eqref{eq:insteadOpt}. 

	As in the exact case, the function to be minimized can be written as a degree four polynomial in $O(k^2/\epsilon^2)$ variables: the entries of the free matrices. We will first perform binary search to identify  the minimum value of this polynomial up to $\frac{1}{2^{\poly(n)}}$ error. We will then use binary search, as in Theorems \ref{thm:lrpd1} and \ref{thm:fa1} to find values for the free matrices that achieve within  $\frac{1}{2^{\poly(n)}}$ of this minimum.
	As before, to 
	apply  binary search we must first bound the range that we must search over. 
	
	\subsubsection{Bounding the Range of Binary Search}
	
	In searching for the minimum value of \eqref{eq:insteadOpt}, we know that it is bounded within $[0, \norm{A}_F^2]$ and thus is bounded in magnitude for $2^{\poly{n}}$.
	
	By the assumption of Theorem \ref{thm:lrpd2} that there exist optima for the LRPD problem $D^\star,L^\star$ with entries bounded in magnitude by $2^{\poly(n)}$. As shown in \eqref{eq:barMatch} there is a near optimal solution to the alternative objective function \eqref{eq:min_problem} using a diagonal matrix that closely matches $D^\star$ on all entries, and thus also has entries bounded in magnitude by  $2^{\poly(n)}$. Finally, we can see by  examining the proof of Claim \ref{clm:approxSpan} that there is a near optimal solution to \eqref{eq:insteadOpt} with the same diagonal entries. Along with the assumption that $A$ has bounded entries and Lemma \ref{lem:factorBound} this implies that there is a solution to \eqref{eq:insteadOpt} where the entries of all unknown matrices are bounded by $2^{\poly(n)}$.
	 
	 \subsubsection{Performing an Iteration of Binary  Search}
	 
	 To perform binary search for the minimum value of \eqref{eq:insteadOpt} we will consider the polynomial system that restricts the lefthand side $\ge c_1$ and $\le c_2$ for $c_1,c_2 \in \left [0,2^{\poly(n)}\right ]$. We will also require that all variables are bounded in magnitude by $2^{\poly(n)}$ (i.e., include a degree-$2$ constraint that bounds the sum of squares of the variables by $2^{\poly(n)}$). Overall this system has three constraints, $O(k^2/\epsilon^2)$ variables, and degree $O(1)$. Its bit complexity can be bounded in the same way as argued for Theorem \ref{thm:lrpd1} and thus it can be verified in $\poly(n) \cdot 2^{O(k^2/\epsilon^2)}$ time. In $\poly(n)$ iterations we can identify the optimum value up to $1/2^{\poly(n)}$ error.
	 
	 To perform binary search on the unknown matrix entries in \eqref{eq:insteadOpt} we will consider the polynomial system consisting of an inequality restricting that the lefthand side is upper bounded by the approximate minimum that we identify, along with the same magnitude constraints.
	 In each iteration of binary search, we will consider an entry $M(i,j)$ (where $M$ is $R_V,R_U,Z_T,W_T$, or $D_{TT}$) and verify this polynomial system augmented with the search constraints $c_1 \le M(i,j)$ and $M(i,j) \le c_2$ for any $c_1,c_2 \in [-2^{\poly(n)},2^{\poly(n)}]$.
	The remainder of the argument exactly mirrors that used for Theorems \ref{thm:lrpd1} and \ref{thm:fa1}. 

\end{document}